\documentclass[11pt]{article}
\usepackage[T1]{fontenc}
\usepackage{amsfonts,amsmath,amsthm,amssymb,dsfont,mathtools}
\usepackage{fullpage}
\usepackage{enumitem}
\usepackage[linesnumbered,boxed,ruled,vlined]{algorithm2e}
\usepackage[colorlinks,citecolor=blue,linkcolor=blue,urlcolor=red,pagebackref]{hyperref}
\usepackage[capitalise]{cleveref}
\usepackage{url}
\usepackage{thmtools, thm-restate}
\usepackage{xcolor}
\usepackage{graphicx}
\usepackage{mathdots}
\usepackage[normalem]{ulem}
\usepackage{xspace}
\xspaceaddexceptions{]\}}
\usepackage{comment}
\usepackage{tabu}
\usepackage{framed}
\usepackage{float,wrapfig}
\usepackage{subfigure}
\usepackage{tikz}
\usepackage{xifthen}
\usepackage[mathlines]{lineno}
\usepackage{etoolbox}
\usepackage[bottom]{footmisc}
\newcommand*\linenomathpatch[1]{%
  \cspreto{#1}{\linenomath}%
  \cspreto{#1*}{\linenomath}%
  \csappto{end#1}{\endlinenomath}%
  \csappto{end#1*}{\endlinenomath}%
}

\usepackage[shortcuts]{extdash}

\newenvironment{proofof}[2][]{\begin{proof}[#2]}{\end{proof}}

\linenomathpatch{equation}
\linenomathpatch{gather}
\linenomathpatch{multline}
\linenomathpatch{align}
\linenomathpatch{alignat}
\linenomathpatch{flalign}

\theoremstyle{plain}
\newtheorem{theorem}{Theorem}[section]
\newtheorem{lemma}[theorem]{Lemma}

\newtheorem{claim}[theorem]{Claim}

\theoremstyle{definition}
\newtheorem{definition}[theorem]{Definition}
\newtheorem{remark}[theorem]{Remark}

\setlist[enumerate]{nosep, topsep=1ex}
\setlist[itemize]{nosep, topsep=1ex}
\setlist[description]{nosep}
\allowdisplaybreaks
\lineskiplimit=0pt

\def\ShowAuthNotes{1}
\ifnum\ShowAuthNotes=1
\newcommand{\authnote}[2]{\ \\ \textcolor{red}{\parbox{0.9\linewidth}{[{\footnotesize {\bf #1:} { {#2}}}]}}\newline}
\else
\newcommand{\authnote}[2]{}
\fi

\newcommand{\eps}{\varepsilon}
\renewcommand{\Pr}{\operatorname*{\mathbf{Pr}}}
\newcommand{\Ex}{\operatorname*{\mathbf{E}}}

\newcommand{\poly}{\operatorname{\mathrm{poly}}}
\newcommand{\polylog}{\poly\log}
\newcommand{\F}{\mathbb{F}}
\newcommand{\R}{\mathbb{R}}
\newcommand{\N}{\mathbb{N}}

\newcommand{\Z}{\mathbb{Z}}

\let\shorttilde\tilde

\newcommand{\caI}{\mathcal{I}}
\newcommand{\caJ}{\mathcal{J}}

\newcommand{\caL}{\mathcal{L}}

\newcommand{\caP}{\mathcal{P}}

\newcommand{\caS}{\mathcal{S}}
\newcommand{\caT}{\mathcal{T}}

\newcommand{\caX}{\mathcal{X}}

\newcommand{\defeq}{:=}
\newcommand{\Root}{\text{root}}

\newcommand{\overbar}[1]{\mkern 3.mu\overline{\mkern-3.7mu#1\mkern-0.7mu}\mkern 0.7mu}	
\newcommand{\myunderbar}[1]{\mkern 0.5mu\underline{\mkern-0.5mu#1\mkern-3mu}\mkern 3mu}	
	
\newboolean{doubleblind}
\setboolean{doubleblind}{false}

\ifdoubleblind
\author{
Author(s)
}
\else
\author{
Weiming Feng\thanks{ETH Z\"urich. \texttt{weiming.feng@eth-its.ethz.ch}. Weiming Feng acknowledges support of Dr.\ Max R\"ossler, the Walter Haefner Foundation and the ETH Z\"urich Foundation.}  \and Ce Jin\thanks{MIT. \texttt{cejin@mit.edu}. Ce Jin is supported by NSF grants CCF-2129139 and CCF-2127597, and a Simons Investigator Award.}
}
\fi
\ifdoubleblind
\title{Approximately Counting Knapsack Solutions\\  in Subquadratic Time}
\else
\title{Approximately Counting Knapsack Solutions\\ in Subquadratic Time\footnote{This work was done in part while two authors were visiting the Simons Institute for the Theory of Computing.
}}
\fi

\date{}

\begin{document}
\maketitle
\begin{abstract}

We revisit the classic \#Knapsack problem, which asks to count the Boolean points $(x_1,x_2,\dots,$ $x_n)\in \{0,1\}^n$ in a given half-space $\sum_{i=1}^n W_ix_i \le T$.
This \#P-complete problem is known to admit $(1 \pm \varepsilon)$-approximation. 
Before this work, [Dyer, STOC 2003]'s $\widetilde{O}(n^{2.5} + n^2{\varepsilon^{-2}})$-time randomized approximation scheme remains the fastest known in the natural regime of $\varepsilon \ge 1/\operatorname{\mathrm{poly}}\log  n$.

In this paper, we give a randomized $(1 \pm \varepsilon)$-approximation algorithm for \#Knapsack in $\widetilde{O}(n^{1.5}{\varepsilon^{-2}})$ time (in the standard word-RAM model), achieving the first \emph{sub-quadratic} dependence on $n$. 
Such sub-quadratic running time is rare in the approximate counting literature in general, as a large class of algorithms naturally faces a quadratic-time barrier.

Our algorithm follows Dyer's framework, which reduces \#Knapsack to the task of sampling (and approximately counting) solutions in a randomly rounded instance with $\poly(n)$-bounded integer weights.  
We refine Dyer's framework using the following ideas:
\begin{itemize}
  \item We decrease the sample complexity of Dyer's Monte Carlo method, by proving some structural lemmas for ``typical'' points near the input hyperplane via hitting-set arguments, and appropriately setting the rounding scale.
  \item Instead of running a vanilla dynamic program on the rounded instance, we employ  techniques from the growing field of  pseudopolynomial-time Subset Sum algorithms, such as FFT, divide-and-conquer, and balls-into-bins hashing of [Bringmann, SODA 2017].
\end{itemize}
To implement these ideas, we also need other technical ingredients, including a surprising application of the recent Bounded Monotone $(\max,+)$-Convolution algorithm by [Chi, Duan, Xie, and Zhang, STOC 2022] (adapted by [Bringmann, D\"{u}rr, and Polak, ESA 2024]), the notion of sum-approximation from [Gawrychowski, Markin, and Weimann, ICALP 2018]'s \#Knapsack approximation scheme, and a two-phase extension of Dyer's framework for handling tiny weights.
\end{abstract}

\hfill

\pagebreak{}

\section{Introduction}
Half-spaces (also known as linear threshold functions) are fundamental mathematical objects which have played important roles across various areas of theoretical computer science, such as  learning theory \cite{KalaiKMS08,ODonnellS11,DeDFS14}, property testing \cite{MatulefORS10}, pseudorandomness \cite{DiakonikolasGJSV10,RabaniS10}, and circuit complexity \cite{PaturiS90,HastadG91,Sherstov09,KaneW16}.

In this paper, we study the task of computing the number of points on the $n$-dimensional Boolean hypercube that lie in a given half-space, \[\Big \lvert \Big \{x\in \{0,1\}^n : \sum_{i=1}^n W_ix_i\le T\Big \}\Big \rvert,\]
where we assume $W_i\ge 0$ without loss of generality (by possibly substituting $x_i' := 1-x_i$ to flip the sign for negative weights).
This counting problem is also known as (Zero-One) \emph{\#Knapsack}, as it can be combinatorially interpreted as counting the feasible solutions to a (Zero-One) Knapsack instance, in which we need to pack a subset of the $n$ input items with weights $W_1,\dots,W_n$ into a knapsack of capacity $T$. We adopt this combinatorial viewpoint and use the knapsack terminology for the rest of the paper.

Knapsack-type problems are fundamental in computer science and combinatorial optimization \cite{0010031}, and naturally \#Knapsack has also been widely studied as an important counting problem. 
It is well-known that \#Knapsack is \#P-complete and unlikely to have polynomial-time exact algorithms.
Hence, a long line of research has devoted to designing efficient approximation algorithms for \#Knapsack.
An algorithm is said to be an FPRAS (fully polynomial-time randomized approximation scheme) for \#Knapsack if given a knapsack instance $(W_1,W_2,\dots,W_n;T)$ and an error bound $\eps > 0$, it outputs a random number $\tilde{Z}$ in time $\mathrm{poly}(n,1/\eps)$ such that 
\begin{align*}
   \Pr\left[(1-\eps)|\Omega| \leq  \tilde{Z} \leq (1+\eps)|\Omega|\right] \geq \frac{2}{3},
\end{align*}
where $\Omega := \{I\subseteq [n]: \sum_{i\in I}W_i \le T\} \subseteq 2^{[n]}$ denotes the set of all knapsack solutions.

A seminal work of Jerrum, Valiant and Vazirani~\cite{JVV86} proved that for self-reducible problems, sampling and approximate counting problems can be reduced to each other in polynomial time.
Dyer, Frieze, Kannan, Kapoor, Perkovic, and Vazirani~\cite{DyerFKKPV93} studied the MCMC (Markov chain Monte Carlo) algorithm for sampling almost uniform knapsack solutions, which led to an approximate counting algorithm with mildly exponential time $2^{O(\sqrt{n}(\log n)^{2.5})}/\eps^2$. 
Morris and Sinclair~\cite{MorrisS99} proved the first polynomial mixing time for the knapsack problem.
In the journal version~\cite{MorrisS04}, they improved the mixing time to $O(n^{4.5+\delta})$ for any $\delta > 0$, which implies an FPRAS for \#Knapsack 
via Jerrum-Valiant-Vazirani reduction.
In addition to MCMC algorithms, Dyer~\cite{Dyer03} gave an FPRAS with running time $\widetilde{O}(n^{2.5} + n^2/\eps^2)$ based on randomized rounding and dynamic programming, which is the currently fastest FPRAS (in the $\eps\ge 1/\polylog(n)$ regime) for \#Knapsack.

There is also a line of works focusing on (\emph{deterministic}) FPTAS (full polynomial-time approximation scheme) for \#Knapsack.
Gopalan, Klivans, Meka, \v{S}tefankovic, Vempala and Vigoda~\cite{GopalanKMSVV11} gave an FPTAS with running time $O(n^3 \eps^{-1} \log (n/\eps))$. 
Rizzi and Tomescu~\cite{RizziT19} further improved the running time by a $\mathrm{poly}\log n$ factor.
Finally,  Gawrychowski, Markin and Weimann~\cite{GawrychowskiMW18} gave an $\widetilde{O}(n^{2.5}/\eps^{1.5})$ time FPTAS, where the dependence on $n$ matches Dyer's randomized algorithm. 
All the FPTASes above are based on designing certain recursions for \#Knapsack and sparsifying the recursion functions to approximate the total number of solutions.

We are interested in the fine-grained complexity of approximating \#Knapsack.
So far, both the best-known FPTAS and FPRAS have running time $\widetilde{O}(n^{2.5})$ in the natural regime of $\eps \geq 1/\mathrm{poly}\log n$.  
On the other hand, due to the fast reduction from counting to sampling~\cite{SVV09,huber2015approx,Kol18}, many well-studied \#P-complete problems (e.g.\ the hardcore model and Ising model in the uniqueness regime~\cite{CLV21}) admit near-quadratic time approximation algorithms.
Hence, a natural question arises:
\begin{center}
\emph{Does \#Knapsack admit an FPRAS in near-quadratic time, or even sub-quadratic time?}
\end{center}

\subsection{Our results}
In this paper, we give a new algorithm that answers this question affirmatively.

\begin{theorem}[Sub-quadratic time FPRAS for \#Knapsack, main]\label{thm-main}
 There is an FPRAS for (Zero-One) \#Knapsack  in time \[O\Big (\frac{n^{1.5}}{\eps^2}\polylog\big (\frac{n}{\eps}\big ) \Big).\] 
\end{theorem}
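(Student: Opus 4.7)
The plan is to follow the general template of Dyer's randomized-rounding framework, which reduces real-valued \#Knapsack to (i) approximately counting and sampling solutions in a rounded instance whose weights are bounded integers, and (ii) using a Monte Carlo correction to convert this count back into an estimate of $|\Omega|$. Dyer's original instantiation spends $\widetilde{O}(n^{2.5})$ on the rounded count (a textbook $O(nU)$ DP with $U=\poly(n)$) and $\widetilde{O}(n^2/\eps^2)$ on the Monte Carlo step ($\Theta(n/\eps^2)$ samples, each verified in $\Theta(n)$ time). To break the $n^2$ barrier we must accelerate both halves simultaneously, while keeping them compatible.

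For the rounded counting/sampling step we discard the vanilla DP in favor of modern pseudopolynomial Subset Sum machinery. Items are partitioned by weight scale, and within each scale Bringmann's balls-into-bins hashing (SODA~2017) groups items into small-capacity sub-instances whose generating polynomials are combined by FFT along a balanced binary tree. Where the partial counting functions are monotone and of bounded range, we invoke the Bounded Monotone $(\max,+)$-Convolution algorithm of Chi, Duan, Xie, and Zhang (as adapted by Bringmann, D\"urr, and Polak), which is ultimately what turns the capacity dependence from linear into essentially $\sqrt n$. Combined, this should yield $\widetilde{O}(n^{1.5}\poly(1/\eps))$ for the counting phase and a matching procedure that samples a uniform solution of the rounded instance.

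For the Monte Carlo correction we simultaneously shrink the per-sample cost and the sample count. The sample count drops via a hitting-set argument on the weights: after random rounding at a carefully chosen scale $\Delta$, a ``typical'' feasible solution of the rounded instance has slack against the true threshold that dominates its rounding error, so the fraction of boundary-ambiguous solutions is only $\widetilde{O}(1/\sqrt n)$, and $\widetilde{O}(\sqrt n/\eps^2)$ samples then suffice to estimate $|\Omega|/|\Omega_{\mathrm{rounded}}|$ to within $(1\pm\eps)$. Items of weight much smaller than $\Delta$ break the hitting-set argument; we accommodate them by a two-phase extension of Dyer's framework in which the large items are first contracted and handled by the refined DP, and the tiny items are reinstated through a compact sum-approximation in the spirit of Gawrychowski, Markin, and Weimann. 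The principal obstacle is the joint tuning of $\Delta$ and the bucket sizes so that the boundary-layer structural lemma holds, the $(\max,+)$-convolution inputs still satisfy their monotonicity and boundedness hypotheses, and the two-phase composition preserves a multiplicative $(1\pm\eps)$ guarantee across nested approximations; each component pays a cost in $\Delta$ with conflicting sign, so the $n^{1.5}$ exponent must emerge from a careful balance rather than from any single trick.
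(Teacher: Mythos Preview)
Your proposal names every ingredient the paper uses, but two of the load-bearing claims are misparametrized in ways that would prevent the argument from closing.

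First, the sample-complexity analysis. You assert that after rounding at a well-chosen scale $\Delta$, the boundary fraction is $\widetilde O(1/\sqrt n)$ and hence $\widetilde O(\sqrt n/\eps^2)$ samples suffice. There is no single $\Delta$ that achieves this uniformly: the ratio $|\Omega'|/|\Omega|$ depends on the instance and can be as large as $\widetilde\Theta(n)$ (e.g.\ when only $O(\log n)$ items have weight $\Theta(T/\log n)$ and the rest are tiny). The paper's fix is a structural lemma (\cref{claim:existsell}, \cref{lem:manyitemssolutions}, \cref{lem:omega1bound}) that identifies an instance-dependent parameter $\ell\in[2,8n)$ --- the ``popular weight class'' --- such that most subsets of total weight $\Theta(T)$ contain $\widetilde\Omega(\ell)$ items of weight $\approx T/\ell$. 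The hitting-set argument then yields $|\Omega_1|/|\Omega|\le \widetilde O(n/\ell)$, so the sample count is $\widetilde O(n/\ell)$ rather than any fixed $\sqrt n$. The per-sample query cost is correspondingly $\widetilde O(\ell\sqrt n)$, and the \emph{product} is $\widetilde O(n^{1.5})$ regardless of $\ell$. Without this $\ell$-parametrized trade-off, your balance breaks on instances with small $\ell$.

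Second, you describe the bounded monotone $(\max,+)$-convolution as ``what turns the capacity dependence from linear into essentially $\sqrt n$.'' That is not its role here. The array length (capacity after rounding) is handled linearly; the obstacle is that the \emph{counts} being convolved can be as large as $2^n$, so a naive FFT would pay an $n$-fold blow-up in word-RAM. The paper recasts the counts as prefix sums (the sum-approximation of \cite{GawrychowskiMW18}), which are monotone, and then simulates approximate large-integer convolution by a weighted-witness version of $(\max,+)$-convolution on the exponents (\cref{lem:sum-approx-conv}, \cref{thm:weightedwitnesscnt}). This convolves length-$L$ arrays with entries up to $2^M$ in $\widetilde O(L\sqrt M)$ time; the $\sqrt M$ saves on bit-complexity of counts, not on capacity. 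The $\sqrt n$ in the final bound comes from the multi-level rounding schedule $S_h\approx T/(\ell\cdot 2^{h/2})$ down the binary tree, combined with the $\sqrt M$ factor where $M$ is the log-count at each node.

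Your two-phase handling of tiny items and the overall Dyer-plus-divide-and-conquer skeleton match the paper; the gap is the missing popular-weight-class lemma and the mischaracterization of where each $\sqrt{\cdot}$ factor originates.
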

The time complexity is measured in the standard word-RAM model, where every word contains $O(\log\frac{n}{\eps})$ bits. Up to logarithmic factors (which we do not optimize in this paper), the time complexity in word-RAM also reflects the bit complexity of the algorithm. We do \emph{not} use infinite-precision real-RAM, or word-RAM with very large word length.
We assume the knapsack capacity $T>0$ and item weights $0\le W_1,W_2,\dots,W_n\le T$ are given as input integers that each fit in one machine word, so that standard operations such as addition and comparison on these integers take unit time. This assumption was made in all previous works on \#Knapsack. If we drop this assumption, then the running time will be multiplied by $O(\log T)$, the bit-length of these integers.

Compared to all the previous results, our algorithm is \emph{the first to achieve sub-quadratic dependence on $n$}.
We also remark that $1/\eps^2$ is a natural dependence on parameter $\eps$ for Monte Carlo algorithms. 

Achieving sub-quadratic running time is rare in the approximate counting literature.
Many algorithms, especially those based on the counting-to-sampling reduction~\cite{JVV86,SVV09,huber2015approx,Kol18}, face a quadratic-time barrier because the algorithm needs to use enough samples to control the variance of the estimate. 
Very recently, a sub-quadratic time approximate counting algorithm~\cite{anand2024sub} was discovered for some special spin systems. 
The algorithm requires the spin system to exhibit a strong enough correlation decay property and utilizes such property to reduce the variance of the estimate. 
Other few recent examples of counting problems admitting sub-quadratic-time FPRASes include all-terminal unreliability estimation \cite{CenHLP24}, and counting spanning trees \cite{ChuGPSSW23} (which is not \#P-hard).

Our sub-quadratic FPRAS for \#Knapsack uses  techniques very different from the standard counting-to-sampling reduction.
Our starting point is Dyer's FPRAS \cite{Dyer03}, which rounds the weights of the items to $\poly(n)$, making it amenable to  dynamic programming, and then uses a Monte Carlo method to estimate the ratio of the solution counts in the original instance versus the rounded instance. 
We refine Dyer's framework with both structural insights and algorithmic advances:
\begin{itemize}
    \item 
    We show that there exists  a ``popular weight class'' parameterized by $\ell$, such that a ``typical'' item subset with total weight $\Theta(T)$ contains at least $\ell/\polylog(n)$ many items $i$ with weights  $W_i\approx T/\ell$.
    By adjusting the rounding scale based on $\ell$, this structural result allows us to use hitting-set arguments to improve Dyer's analysis on the ratio of solution counts in the original versus rounded instances, and hence decrease the sample complexity required by the Monte Carlo method. 
\item  We improve the dynamic programming using a ``partition-and-convolve'' paradigm, which has been successful for the optimization versions of Subset Sum and Knapsack problems in the last decade (e.g.\ \cite{KoiliarisX19,Bringmann17,Chan18a,BateniHSS18,Jin19,BringmannN21,soda2023knapsack,esa,ChenLMZ242,CLMZ24,Mao24,bringmann2024even}).\footnote{A note on nomenclature: In the literature of combinatorial optimization and fine-grained complexity, the term ``Knapsack problem'' commonly refers to maximizing the total profit $\sum_{i\in I}P_i$ subject to the knapsack constraint $\sum_{i\in I}W_i\le T$, while the ``Subset Sum problem'' corresponds to the $P_i=W_i$ special case of Knapsack. In our \#Knapsack problem, however, items only have weights and do not have profits, so it is more aligned with the Subset Sum problem. Terms such as ``knapsack solutions'' in this paper have nothing to do with profit-maximization.} Algorithmic ingredients such as Fast Fourier Transform (FFT) and  balls-into-bins hashing (color-coding) \cite{Bringmann17} have been useful for this paradigm. 
However, as a main difference from previous applications of this paradigm, our algorithm needs to convolve arrays of \emph{counts} (which may be as large as $2^n$) instead of Boolean values.
Such convolutions cannot be handled (even approximately) by the usual FFT on standard word-RAM without a significant slow-down. 
Interestingly, we overcome this challenge by using recent sub-quadratic-time algorithms for \emph{$(\max,+)$-convolution on bounded monotone arrays} \cite{ChiDX022, bringmann2024even}, and enforcing the required monotonicity by adapting the \emph{sum-approximation} framework from the previous FPTAS for \#Knapsack \cite{GawrychowskiMW18}.
\end{itemize}
We give a more detailed technical overview in \Cref{sec:overview}.

\subsection{Technical overview}\label{sec:overview}

 For a subset of items, $X\subseteq [n]$, we often use the shorthand $W_X := \sum_{i\in X}W_i$ to denote their total weight.
Let $\Omega := \{X\subseteq [n]: W_X \le T\} \subseteq 2^{[n]}$ denote the set of all knapsack solutions.

\subsubsection{Dyer's randomized rounding}
\label{sec:overdyer}
We first explain Dyer's randomized rounding framework \cite{Dyer03}, which we later modify and improve upon.

Let $K= \widetilde{\Theta}(\sqrt{n})$ be a parameter.
Assign every item $i \in [n]$  with a randomized rounded weight $w(i)$ with $\Ex[w(i)]=W_i$, such that $w(i)$ is an integer multiple of the \emph{scale parameter} $S = \frac{T}{2Kn}$.\footnote{In this overview, we ignore the (inconsequential) issue that $S$ itself might not be an integer, which could be fixed by rounding appropriately.} More specifically,  let $w(i) = (\lfloor \frac{W_i}{S}  \rfloor+ \Delta_i)\cdot S$, where $\Delta_i \in \{0,1\}$ is a random variable that takes value 1 with probability $\frac{W_i}{S} -  \lfloor \frac{W_i}{S} \rfloor$.
Define a slightly relaxed capacity  $T' := T + K\cdot S$, and
consider a rounded knapsack instance whose solution set is 
$\Omega':= \{X\subseteq [n]: \sum_{i\in X}w(i) \le T'\} \subseteq 2^{[n]}$.

Intuitively, the rounded instance should somehow preserve the original knapsack solutions to some extent.
It can reliably distinguish between original solutions $X\in \Omega$ versus subsets $X\subseteq [n]$ that are far away from $\Omega$: 
\begin{itemize}
    \item For any original solution  $X \in \Omega$,  its total rounded weight  $w(X):= \sum_{i\in X}w(i)$ has expectation $\Ex[w(X)] = W_X  \le T$ since the rounding is unbiased.  
    Since $X$ has at most $n$ items, each independently rounded with scale parameter $S$,  standard concentration bounds show that with high probability, $|w(X) -  W_X| \le  O(\sqrt{n\log n})\cdot S$,  and thus $w(X) \le T+K\cdot S = T'$ (by our choice of $K= \widetilde \Theta(\sqrt{n})$), implying $X \in \Omega'$.
Hence, we have $|\Omega| \approx |\Omega \cap \Omega'|$ with good probability.
    \item On the other hand, for any $X \subseteq [n]$ with original total weight $W_X \geq T + \frac{T}{n} = T+ 2K\cdot S$, its total rounded weight has expectation $\Ex[w(X)] = W_X \geq T' +K\cdot S$. Similarly,  one can show that with high probability, $w(X) > W_X - K\cdot S \geq T'$ and thus $X \notin \Omega'$.
    By a more careful analysis along this line, Dyer showed that a typical element in $ \Omega'$ is either a solution in $\Omega$ or exceeds the knapsack capacity by less than $T/n$, and then he used this property to prove that $|\Omega'|/ |\Omega| = O(n)$ holds with good probability (we omit the proof of this bound for now; later in \cref{sec:over-hitting} we actually discuss how to prove a better bound).
\end{itemize}
Based on these two properties, he proposed the following algorithm for estimating $|\Omega|$:
\begin{itemize}
    \item Count $|\Omega'| = |\{X \subseteq [n]:  \sum_{i\in X}w(i) \leq T'\}|$ by dynamic programming. Here, since the rounded weights and capacity can be treated as integers bounded by $t = \lceil \frac{T'}{S}\rceil = O(Kn) = \widetilde O(n^{1.5})$, the dynamic programming takes time $O(nt) = \widetilde{O}(n^{2.5})$.
    \item Estimate $|\Omega|/ |\Omega'| \approx |\Omega \cap \Omega'|/ |\Omega'| $ by drawing $O\big (\frac{|\Omega'|}{|\Omega|}/\eps^2\big ) = O(n/\eps^2)$ independent uniform samples $X_i \in \Omega'$ and see how many of them satisfy $X_i \in \Omega$. Sampling each $X_i$ can be implemented by backtracing the dynamic programming table  in $O(n)$ time. 
\end{itemize}
Multiplying the results of these two steps gives the desired estimate of $|\Omega|$. The total running time is $\widetilde O (n^{2.5}+n^2/\eps^2)$.

To achieve our target time complexity $\widetilde O(n^{1.5}/\eps^2)$, both steps in Dyer's algorithm need to be improved. In the following, we first focus on improving the $\widetilde{O}(n^{2.5})$-time dynamic programming step.  
For the rest of the overview, we assume $\eps\ge 1/\polylog(n)$ and omit the $\eps$-dependence for simplicity (since our main focus is improving the $n$-dependence).

\subsubsection{FFT-based techniques for Subset Sum type   problems}
Dyer's algorithm uses a vanilla dynamic programming to count $|\{X \subseteq [n]: \sum_{i\in X}w(i) \leq T'\}|$.
This task is closely related to the Subset Sum problem, which can be solved faster than dynamic programming using Bringmann's algorithm
 \cite{Bringmann17}. Specifically, Bringmann's algorithm computes all \emph{$t$-bounded subset sums}, $\{ \sum_{i\in X}W_i : X \subseteq [n]\} \cap \{0,1,\dots,t\}$, in $\widetilde O(n+t)$ time, whereas the textbook dynamic programming for this task runs in $O(nt)$ time \cite{Bellman1957}. 
Hence, a natural idea is to use Bringmann's techniques to improve the dynamic programming step in Dyer's counting algorithm.

A basic idea underlying
Bringmann's Subset Sum algorithm is the following \emph{partition-and-convolve} framework:  
Given two integer arrays $A[0\ldots t], B[0\ldots t]$,
 the Fourier Transform (FFT) algorithm computes their \emph{convolution} $C[0\dots 2t]$ where $C[k]:= \sum_{i+j=k}A[i]B[j]$ in $O(t\log t)$ time.
Hence, given $S_1,S_2\subseteq \{0,1,\dots,t\}$ consisting of  the $t$-bounded subset sums attained by disjoint item sets $I_1,I_2$ respectively,  we can compute the $t$-bounded subset sums attained by items of $I_1\cup I_2$ in $O(t\log t)$ time by convolving the two length-$(t+1)$ arrays representing the sets $S_1,S_2$.
This leads to a simple divide-and-conquer algorithm 
for computing all subset sums of $W_1,\dots,W_n$ in $\widetilde O(\sum_{i=1}^n W_i)$ time \cite{Eppstein97,KoiliarisX19}: in a binary recursion tree where the $i$-th leaf corresponds to $W_i$, let each internal node compute the subset sums attained by the leaves in its subtree, by convolving the results computed by its two children.  (This simple algorithm was refined by Bringmann \cite{Bringmann17} using more ideas.)

In the aforementioned scenario where we only care about whether a subset sum is attainable or not, it suffices to convolve arrays of Boolean values.
However, in Dyer's framework for \#Knapsack, we also need to \emph{count} the subsets which attain that sum.
If we directly adapt the partition-and-convolve paradigm of \cite{Eppstein97,KoiliarisX19,Bringmann17} to also keep track of the counts, then we immediately encounter the following obstacle:  the counts of subsets can be as large as $2^n$, so  the FFT algorithm for convolution needs to operate on $n$-bit numbers, incurring an $n$-factor blow-up in the time complexity in a standard word-RAM model (we do not use unrealistic models of computation such as infinite-precision real-RAM or word-RAM with huge words). 

Dealing with $n$-bit counts would have been an issue in Dyer's dynamic programming step too, but fortunately Dyer's algorithm only performs simple arithmetic operations (such as addition) on the counts, so this issue can be resolved by using floating-point numbers truncated to  $\polylog(n)$-bit precision, without affecting the final precision by too much.
However, such truncation would cause issues for FFT (which relies on properties of roots of unity that involve cancellation).

In fact, as commonly believed in fine-grained complexity, convolution with very large coefficients  is unlikely to have sub-quadratic-time algorithms (even allowing approximation), since it could simulate the difficult \emph{$(\max,+)$-convolution problem}, which asks to compute $C[k]:= \max_{i+j=k}\{A[i]+B[j]\}$ for all $0\le k\le 2n$, where $A[0\ldots n], B[0\ldots n]$ are input arrays of nonnegative integers.
If we define $A'[i]:= (2n)^{A[i]}$ and $B'[j]:= (2n)^{B[j]}$, then the (usual) convolution between $A', B'$ is $C'[k] = \sum_{i+j=k}(2n)^{A[i]+B[j]}$,  which can determine the $(\max,+)$ convolution of $A$ and $B$ by keeping the most significant bit of $C'[k]$. 
Currently, there is no known $O(n^{2-\delta})$-time algorithm (for any constant $\delta>0$) that solves $(\max,+)$-convolution even when $A[i],B[j] \le O(n)$.

\subsubsection{Bounded monotone \texorpdfstring{$(\max,+)$}{}-convolution and sum-approximation}\label{sec:over-sum}
The quadratic-time barrier for $(\max,+)$-convolution seems difficult to break in general, but fortunately it has been broken for important special cases.
 Here, we take inspiration from recent successes on (the profit-maximization version of) the Knapsack problem: Fine-grained complexity indicates that Knapsack has the same quadratic-time hardness as $(\max,+)$-convolution \cite{CyganMWW19,KunnemannPS17}, but  Bringmann and Cassis \cite{BringmannC22} and Bringmann, D\"urr, and Polak \cite{bringmann2024even} overcame this barrier for the bounded-profit-and-bounded-weight version of Knapsack, by using sub-quadratic-time algorithms for \emph{bounded monotone $(\max,+)$-convolution}.  This is the special case of $(\max,+)$-convolution where the two input arrays contain monotone non-decreasing integers from $\{0,1,\dots,M\}$, which can be solved in $\widetilde O(n\sqrt{M})$ time by \cite{bringmann2024even} (extending the result of 
Chi, Duan, Xie, and Zhang \cite{ChiDX022}). 

Back to our \#Knapsack scenario, where we would like to (approximately) convolve two arrays of counts (up to $2^n$), our key insight is that this task can also be solved faster using the bounded monotone $(\max,+)$-convolution machinery of \cite{ChiDX022,bringmann2024even}. In order to do this, we need to bridge the gaps between these two different variants of convolution problems:
\begin{itemize}
\item We need to transform the input arrays of counts into monotone non-decreasing arrays.

To do this, we instead work with \emph{prefix sums} of the array of counts (or equivalently viewed as CDFs), which are automatically monotone. 
This is allowed by the \emph{sum-approximation framework} of Gawrychowski,  Markin, and Weimann \cite{GawrychowskiMW18} developed for their \#Knapsack FPTAS.  
A sum-approximation of an array preserves the original prefix sums up to certain relative error, but the individual elements at any given index are allowed to differ a lot (see formal definitions in \cref{sec:sumapproxconv}).
Intuitively, the reason that sum-approximation is already sufficient for \#Knapsack is because 
we are always counting sets whose total weights are \emph{less than or equal to} some threshold, which is naturally a prefix sum.

    \item We need to simulate (usual) convolution of $A[0\ldots n],B[0\ldots n]$ using $(\max,+)$-convolution of some constructed arrays $A'[0\ldots n], B'[0\ldots n]$.
    
    This is the opposite direction of the simulation we saw earlier, but the idea is similar.  Specifically, represent $A[i]$ as $\hat A[i] \cdot D^{A'[i]}$ for some suitable small base $D$, where $A'[i] = O(\log A[i])$ is an integer and $\hat A[i] \in [1,D)$. Also represent $B[j]$ similarly.
    Then, the $(\max,+)$-convolution of $A',B'$ provides a crude approximation of the (usual) convolution of $A,B$. 
    In order to get a more precise approximation, we need to take the $\hat A[i],\hat B[j]$ values into account, and solve a ``weighted witness-counting'' version of $(\max,+)$-convolution. 
    This requires a white-box modification of the known bounded-monotone $(\max,+)$-convolution algorithms of \cite{ChiDX022, bringmann2024even}, which we describe in \cref{sec:max+}.
\end{itemize}

In this way, we can convolve (allowing sum-approximation)  two length-$n$ arrays with entries bounded by $2^M$  in time $\widetilde O(n\sqrt{M})$  (see \cref{lem:sum-approx-conv}). 
This is our basic tool for implementing the partition-and-convolve paradigm in Dyer's \#Knapsack framework. 
Although it circumvents the quadratic-time barrier,  the time complexity $\widetilde O(n\sqrt{M})$ is still worse than the ideal $O(n\log n)$-time FFT for convolving Boolean arrays, so the runtime analysis of the whole algorithm becomes more delicate, as we will see in \cref{sec:over-bounded}.

\subsubsection{Weight classes and balls-into-bins hashing}\label{sec:groupbin}
Now, we discuss two more ideas from Bringmann's Subset Sum algorithm \cite{Bringmann17} which are useful to us.

A simple but useful idea from \cite{Bringmann17} is to partition the input items into logarithmic many weight classes $(T/\ell, 2T/\ell]$ where $\ell$ ranges over powers of two. %
In Bringmann's scenario, it turned out to be convenient to first separately process items inside each weight class, and finally try to combine the answers from all classes.

Inspired by Bringmann's weight partitioning idea, in this overview, we first focus on the special case of \#Knapsack where \emph{all input items} come from the same weight class, that is,  we assume
\begin{equation}
\label{eqn:boundedratiocase}
\text{$W_i \in (T/\ell, 2T/\ell]$ for all $i\in [n]$, for some $\ell\ge 2$},
\end{equation}
which we call the \emph{bounded-ratio case} (note that we can assume $\ell \leq 2n$, since otherwise the number of solutions is trivially $2^n$).
In the following part of the overview, we will first describe an $\widetilde{O}(n^{1.5})$-time approximate counting algorithm for this bounded-ratio case.
Presenting the bounded-ratio case will allow us to illustrate many of our technical ideas without making the analysis too complicated. 
Later in the overview, we will introduce more ideas to move from the bounded-ratio case to the general case.

In our bounded-ratio case parameterized by $\ell$, any
solution $X \in \Omega = \{X\subseteq [n]: W_X \le T\}$ must contain less than $\ell$ items, since each item has weight $> T/\ell$.
To exploit this property, we borrow from \cite{Bringmann17} the final crucial idea of \emph{balls-into-bins} hashing:
create $\ell$ bins $B_1,B_2,\ldots,B_\ell$, and throw each item $i\in [n]$ independently into a uniform random bin. 
Since $|X|<\ell$, with high probability, every bin only contains at most $B = O(\frac{\log n}{\log \log n})$ items in $X$.
Now, instead of counting the knapsack solutions in $\Omega$, we count the solutions in $\Omega \cap \hat{\Omega}$, where  \[\hat{\Omega} := \{X \subseteq [n]: \forall b \in [\ell], |X \cap B_b| \leq B\}.\]
In words, $\Omega \cap \hat{\Omega}$ contains all knapsack solutions with at most $B= O(\frac{\log n}{\log \log n})$ items in each bin. 
The balls-into-bins
argument
guarantees that $|\Omega \cap \hat{\Omega}|$ gives a good approximation to $|\Omega|$ with high probability.

Now we briefly explain how the 
$|X \cap B_b| \leq B$ condition in the definition of $\hat \Omega$ helps to speed up the partition-and-convolve framework for counting $|\Omega \cap \hat \Omega|$. Consider the binary recursion tree with $\ell$ leaves, where each leaf node $b$ represents a bin $B_b$, and each internal node $u$ takes care of all items in $B_u := \bigcup_{b \in \text{leaf}(u)}B_b$, where $\text{leaf}(u)$ denotes all the leaf nodes in the subtree of $u$.
Then, at node $u$ where $|\text{leaf}(u)|=\ell/2^h$, we only care about item subsets $X \in 2^{B_u} \cap \hat \Omega$, which have size $|X|\le B\cdot \ell/2^h = \widetilde O(\ell/2^h)$ and hence total weight $W_X \le |X|\cdot 2T/\ell =  \widetilde O(T/2^h)$.

\subsubsection{Bounded-ratio case: partition-and-convolve with multi-level rounding}\label{sec:over-bounded}

In this section we give the remaining details of our partition-and-convolve algorithm for the bounded-ratio case parameterized by $\ell$ (which is a power of two). We use the notations from the previous section. 
In the binary recursion tree, we denote the level of the root as $0$, and the level of all $\ell$ leaf nodes  as $H := \log_2 \ell$.

We need a few more ideas in order to improve over Dyer's randomized rounding: 
\begin{itemize}
    \item \textbf{Rounding more aggressively:} 
    In our bounded-ratio case, any subset $X\subseteq [n]$ of total weight $O(T)$ only contains $O(\ell)$ items, which may be much smaller than $n$. Hence, when each item weight is rounded to a random multiple of the scale parameter $S$, the standard deviation of the total rounded weight of $X$ is only $O(\sqrt{\ell }\cdot S)$ instead of $O(\sqrt{n}\cdot S)$ in Dyer's original analysis described in \cref{sec:overdyer}. 
This means we can afford to choose a larger scale parameter $S$, which can make the dynamic programming tables (or the arrays to be convolved) shorter, and improve the time complexity.

In more detail, Dyer's original analysis 
described in \cref{sec:overdyer}
used the scale parameter $\frac{T}{n^{1.5}\polylog n}$, so that the total rounded weight of $X\subseteq [n]$ has rounding error $\le \frac{T}{n\polylog n}$.
 In our case, we can afford to choose 
the scale parameter $S_H:=\frac{T}{\ell^{1.5}\polylog n}$, so that the total rounded weight of $X$ has rounding error $\le \frac{T}{\ell \polylog n}$ (the reason that we now allow a larger rounding error will become clear in \cref{sec:over-hitting}).
    
    \item  \textbf{Multi-level rounding:} In addition to rounding the weight of each input item at the leaf level, we can also perform randomized rounding at each internal node $u$, i.e., we can round the total weight of subsets formed by items in $B_u = \bigcup_{b \in \text{leaf}(u)}B_b$.
    Explicitly maintaining the total rounded weight of every subset is clearly infeasible, so we have to perform the randomized rounding in an implicit way, which we will explain soon.

    When we move closer to the root, there are fewer tree nodes on the current level, which allows for more aggressive rounding at this level (by the same reason as in the previous bullet point).  More specifically, 
 at level $h$, we can perform randomized rounding independently at each of the $2^h$ tree nodes using the scale parameter 
$S_h := \frac{T}{\ell 2^{h/2} \polylog n}$, so that the total rounding error incurred at this level becomes $\widetilde O(\sqrt{2^h}\cdot S_h) = O(\frac{T}{\ell \polylog n})$. Over all $O(\log \ell)$ levels the total rounding error incurred is still $O(\frac{T}{\ell \polylog n})$.
\end{itemize}

Formally, a node $u$ at level $0\le h\le H$
is associated with an (implicit) random rounding function $w_u$ that maps each subset $X \in  2^{B_u} \cap \hat{\Omega}$ to its random rounded weight $w_u(X)$, such that
\begin{itemize}
    \item $w_u(X)$ is an integer multiple of the scale parameter $S_h$.
    \item $\Ex[w_u(X)] = W_X$, and $w_u(X)$ is concentrated around its mean so that with high probability, $w_u(X) \in W_X \pm \frac{T}{\ell 2^{h/2} \polylog n}$.
\end{itemize}
We stress that our algorithm does not need to compute or store $w_u(X)$ for all $X$; they are implicitly defined by the multi-level randomized rounding process of the algorithm. 
What the algorithm explicitly computes (and stores as an array of values) at each node $u$ is an
approximate counting function (which corresponds to the dynamic programming table in Dyer's algorithm),
 $f_u\colon \{0,S_h,2S_h,\ldots,L_hS_h\} \to \N $, such that  $f_u$ is a sum-approximation (see \Cref{sec:over-sum}) of the counting function $f^*_u(zS_h) := |\{ X \in 2^{B_u} \cap \hat{\Omega} : w_u(X) = zS_h \}|$.
Here, the parameter $L_h$ denotes the required length of the array representing that function, and can be bounded as follows:
For any $X \in 2^{B_u} \cap \hat{\Omega}$, we have
$|X| = \widetilde{O}(\ell/2^{h})$ and $W_X = \widetilde{O}(T/2^{h})$ (see the end of \cref{sec:groupbin}), so
 $L_h = O(w_u(X)/S_h) =O(W_X/S_h) = \widetilde{O}(\ell/2^{h/2})$.

We next explain how to compute all approximate counting functions $f_u$, and how the randomized rounding mapping $w_u$ is implicitly defined, in the order from the leaves to the root:
\begin{itemize}
    \item 
For each leaf node $b$, similarly to \cref{sec:overdyer}, we round the weight $W_i$ of each item $i\in B_b$ independently to $w_b(i) \in \{ S_H \lceil W_i/S_H \rceil, S_H \lfloor W_i/S_H \rfloor\}$ such that $\Ex[w_b(i)] = W_i$, and the total rounded weight of $X\in 2^{B_b}\cap \hat \Omega$ is $w_b(X) = \sum_{i \in X} w_b(i)$.
The counting function $f_b$ can be computed exactly by standard dynamic programming. 
Note that each value of $f_b$ is at most $O(n^B)$ (the number of subsets of size $\le B$, where $B= O(\frac{\log n}{\log \log n})$ due to balls-into-bins hashing), which can be represented by $\widetilde{O}(1)$ bits in binary.
The total running time of dynamic programming at all leaf nodes is therefore $\sum_{b\in [\ell]}\widetilde O(|B_b| L_H B) = \widetilde{O}(n\sqrt{\ell})$
 in the standard word-RAM model.
\item 
For each non-leaf node $u$ at level $h < H$ in the binary tree, suppose $u$ has two children $l$ and $r$. 
Any subset $X \in 2^{B_u}\cap \hat \Omega$ can be uniquely partitioned into $X_l = X \cap B_l\in 2^{B_l}\cap \hat \Omega$ and $X_r = X \cap B_r\in 2^{B_r}\cap \hat \Omega$. 
We can first use the children's weight functions $w_l$ and $w_r$ to define $w'_u(X) = w_l(X_l) + w_r(X_r)$, and apply our sum-approximation convolution algorithm (discussed in \Cref{sec:over-sum}) to $f_l$ and $f_r$, which returns an approximate counting function $f'_u$ with respect to the weight function $w'_u$. 
Then, we pick a unbiased random mapping $\alpha$ which rounds multiples of $S_{h+1}$ to neighboring multiples of $S_{h}$, and define $w_u(X) = \alpha(w'_u(X))$ as the weight function for node $u$. 
We can also compute the 
approximate counting function $f_u$ based on $f'_u$ and $\alpha$.
Readers can refer to \Cref{lemma:roundingsampler} and \Cref{lemma:mergingsampler} for details. 
The running time is dominated by computing the sum-approximate convolution, which is bounded as follows: The arrays representing $f_l$ and $f_r$ have length $L_{h+1} = \widetilde O(\ell/2^{h/2})$, and each value of $f_l,f_r$ is at most $O(n^{|X_l|}) \le n^{\widetilde O(\ell/2^h)}$, so each sum-approximate convolution takes time $\widetilde O(\ell/2^{h/2} \cdot \sqrt{\ell/2^h}) =\widetilde O(\ell^{1.5}/2^h)$, and
hence the total running time for all $2^h$ nodes at level $h$ is  $\widetilde{O}(\ell^{1.5})$.
\end{itemize}

Once we finish the computation at the root, we consider the following set of solutions 
\begin{align}\label{eq:over-def-Omega'}
    \Omega' = \left\{X \in \hat{\Omega}: w_{\Root}(X) \leq T + \frac{T}{\ell \polylog n} \right\}.
\end{align}
We can approximate $|\Omega'|$ by the approximate counting function $f_{\Root}$.
By the concentration property of the random function $w_{\Root}$, we have:
\begin{itemize}
    \item For any $X \in \hat{\Omega} \cap \Omega$, 
    the random weight satisfies $w_{\Root}(X) \leq W_X + \frac{T}{\ell \polylog n}\le T + \frac{T}{\ell \polylog n}$ with high probability, which means $X \in \Omega'$ with high probability (provided we set the $\polylog n$ factors in the definitions appropriately). Hence, $|\hat \Omega \cap \Omega| \approx | \Omega' \cap \Omega|$. Combined with the balls-into-bins property, we get $|\Omega|  \approx | \Omega' \cap \Omega|$.
    \item On the other hand, for any $X \in \hat{\Omega}$ such that $W_X > T + \frac{T}{\ell}$, with high probability $w_{\Root}(X) > W_X - \frac{T}{\ell \polylog n} \geq T + (1-o(1))\frac{T}{\ell}$. Hence, we can show $X \notin \Omega'$ with high probability. 
By combining this with an argument in~\cite{Dyer03}, we can show that most of $X \in \Omega' \setminus \Omega $ satisfy $T < W_X \le T + \frac{T}{\ell}$ (see the proof of \cref{claim:Omega'} for details). 
\end{itemize}

Hence, we can estimate the ratio $\frac{|\Omega|}{|\Omega'|} \approx \frac{|\Omega' \cap \Omega|}{|\Omega'|}$ by drawing almost uniform samples $X$ from $\Omega'$ and counting the number of samples satisfying $X \in \Omega$. 
To draw a sample $X\in \Omega'$,
 we go from top to bottom in the recursion tree, where at each internal node we use the computed approximate counting functions to randomly determine how much capacity should be allocated to the left and right children. 
Here we remark that the sum-approximation and rounding make this sampling procedure trickier; we omit the details and refer the interested readers to the proof of \cref{lemma:roundingsampler} and \cref{lemma:mergingsampler}.
  The running time for drawing one sample $X\in \Omega'$ is
  dominated by the total time for computing all the approximate counting functions $f_u$, which is $\widetilde{O}(\ell^{1.5})$. 
Later in \Cref{sec:over-hitting} we will show that it is sufficient to draw $\widetilde{O}(n/\ell)$ samples.
Hence, the total running time of the approximate counting algorithm is (recall that $\ell = O(n)$)
\[\widetilde{O}(n\sqrt{\ell}) + \widetilde{O}(\ell^{1.5})+\widetilde{O}(n/\ell) \cdot \widetilde{O}(\ell^{1.5}) = \widetilde{O}(n^{1.5}).
\]

\subsubsection{Bounded-ratio case: improving the sample complexity}\label{sec:over-hitting}
Recall that Dyer's original Monte Carlo method needs to draw $\widetilde O(n)$ samples from $\Omega'$. 
Now we show how to reduce the sample complexity from $\widetilde O(n)$ down to $\widetilde O(n/\ell)$.
According to the analysis in \Cref{sec:over-bounded}, we need to estimate $\frac{|\Omega' \cap \Omega|}{|\Omega'|}$, %
 where $\Omega'$ is defined in~\eqref{eq:over-def-Omega'} and $\Omega = \{X \subseteq [n]: W_X\le T\}$.  
Hence, the number of samples required is $O(\frac{|\Omega'|}{|\Omega' \cap \Omega|})$. 
From the previous section we already know that $|\Omega' \cap \Omega| \approx |\Omega|$, and most of $X \in \Omega' \setminus \Omega$ are contained in $\Omega_1 := \{X \subseteq [n]: T< W_X \le T + \frac{T}{\ell}\}$. Hence, the sample complexity is $O(\frac{|\Omega_1|}{|\Omega|}+1)$.

Note that the definitions of $\Omega$ and $\Omega_1$ are independent from the randomness of the algorithm. Bounding the ratio $|\Omega_1|/|\Omega|$ is a purely combinatorial problem.
 Since the maximum weight of an item is $W_i \leq 2T/\ell$, every $X \in \Omega_1$ must have $|X|>\ell/2$ items. Consider a random subset $H \subseteq [n]$ that includes each item independently with probability $\frac{ \polylog n }{\ell}$. 
The typical size of $H$ is $\widetilde{O}(n/\ell)$.
For each $X \in \Omega_1$, we say $H$ \emph{hits} $X$ if $X \cap H \neq \emptyset$. 
Every $X$ is hit by $H$ with probability at least $1 - (1 - \frac{\polylog n}{\ell} )^{\ell/2} = 1 - o(1)$.
By the probabilistic method, there exists $H$ with size $\widetilde{O}(n/\ell)$ that hits $1 - o(1)$ fraction of $X$ in $\Omega_1$. 
For those $X\in \Omega_1$ hit by $H$, if we throw away an arbitrary element $x \in X\cap H$, since $W_x >\frac{T}{\ell}$, then $W_{X\setminus \{x\}}< T$ and hence $X \setminus \{x\} \in \Omega$. Viewing this as a mapping from $\{X\in \Omega_1: H \text{ hits }X\}$ to $\Omega$,  we know each $I\in \Omega$ has at most $|H|$ preimages (because $x\in H$). Hence, we have the following bound of the ratio
\begin{align*}
    \frac{(1-o(1))|\Omega_1|}{|\Omega|}=\frac{|\{X \in \Omega_1: H \text{ hits } X\}|}{|\Omega|} \leq |H| = \widetilde{O}(n/\ell),
\end{align*}
which implies $\frac{|\Omega_1|}{|\Omega|} = \widetilde{O}(n/\ell)$. Therefore, the number of samples needed is $\widetilde{O}(n/\ell)$ as claimed.

We remark that this argument 
is a refined version of the mapping argument from Dyer's original analysis \cite{Dyer03}.
Our refinement relies on both sides of the bounded-ratio assumption $\frac{T}{\ell}<W_i \le \frac{2T}{\ell}$: (1) we use $\max_{i\in X} W_i\le \frac{2T}{\ell}$ to conclude $|X|>\ell/2$ for the purpose of constructing a small hitting set $H$, and (2) we need the removed item $x$ to be larger than the additive gap $\frac{T}{\ell}$ in the defintion of $\Omega_1$ (which constrains the allowed rounding error $\frac{T}{\ell \polylog n}$ in the algorithm from the previous section).
In comparison, in Dyer's original analysis without the bounded-ratio assumption, the mapping argument (by throwing away the largest item) could only yield a  weaker bound $|\{X \subseteq [n]: T< W_X \le T + \frac{T}{n}\}|/|\Omega| \le n$.

\subsubsection{General case: identifying a ``popular weight class'' }
\label{overview:popular}
Now we turn to the general case without the bounded-ratio assumption.
Following \cite{Bringmann17}, we start by partitioning the items into
 $g+1 = \lceil \log_2 n \rceil+1$ weight classes as described earlier, where the $j$-th  class ($1\le j\le g$) contains all items with weights $ \frac{T}{2^j}< W_i \leq \frac{T}{2^{j-1}}$, and the $(g+1)$-st class contains all small items with weights $W_i\le \frac{T}{2^g}$.  
Unfortunately, unlike in Bringmann's algorithm \cite{Bringmann17}, here we face some challenges if we try to apply the bounded-ratio algorithm on each weight class separately and then
combine the results from different classes to solve the whole instance. 
A major challenge is that we need to draw samples consisting of items from all the weight classes, whereas our analysis of sample complexity only applies to each individual weight class.
Another challenge is that different weight classes have different rounding errors (e.g. for the weight class $(T/m, 2T/m]$, the rounding error at the root of the binary tree is $\widetilde{\Theta}(T/m)$), making the time complexity worse if we try to directly merge $O(\log n)$ classes together to obtain the final answer. (The challenge of merging weight classes has also appeared in other knapsack-type problems, such as \cite{CLMZ24}.)

To circumvent these challenges, our strategy is to try to transform the general input instance to resemble the bounded-ratio case as closely as possible. 
For this purpose, we prove a structural lemma, which indicates that there exists a parameter $2\le \ell \le O(n)$ (which can be identified efficiently) such that most of the subsets $X\subseteq [n]$ with total weight $W_X\in \Theta(T)$ should contain at least $\ell/\polylog(n)$ items from the (slightly enlarged) weight class $(T/\ell, \polylog(n)\cdot T/\ell]$ (see formal statement in \cref{lem:manyitemssolutions}). 
In this way, we can set the scale parameters so that the final rounding error is $\frac{T}{\ell \polylog n}$ as before, and then the same hitting-set argument from the previous section can still apply, implying that the sample complexity is $\widetilde O( n/\ell)$ as desired (\cref{lem:omega1bound}).
(We also need the additional property that all input items of weights $\le T/\ell$ only have total weight $< T/2$, but we omit the details from this overview. See \cref{eqn:smallgrouptotalub}.)

Now, it remains to compute the approximate counting functions for all weight classes $(T/m,2T/m]$ (but with the final  rounding error still set to $\approx T/\ell$, where $\ell$ can be different from $m$), and merge the results from all the weight classes. 
The algorithm for each class mostly follows the partition-and-convolve paradigm  described earlier for the bounded-ratio case, but with a few additional technical challenges: (1) the analysis of time complexity now involves both parameters $m$ and $\ell$, and becomes slightly more complicated; (2) at the bottom level of the recursion tree, the standard dynamic programming is sometimes too slow, and we have to use a different construction algorithm based on Jin and Wu's Subset Sum algorithm \cite{JinW19} which supports counting modulo a  small prime, as well as the birthday-paradox color-coding technique used by Bringmann \cite{Bringmann17} (see details in \cref{lem:leaf}); (3)
The most serious issue occurs when both the sample complexity and the typical size of a sample are large, which we detail in the next section.

\subsubsection{Handling tiny items in the samples}
\label{sec:overzero}

The following extreme case could happen: The popular weight class mentioned in the previous section has parameter $\ell = \polylog(n)$. On the other hand, there exists $\Omega(n)$ many tiny items each with weight in $[0, T/n^{10}]$. In this case our algorithm generates $n/\ell = \widetilde \Omega (n)$ samples, but each sample is expected to contain $\Omega(n)$ tiny items from the interval $[0, T/n^{10}]$. Hence, producing these samples would take quadratic time. 
These tiny items should not be blindly discarded from the input, as they may affect the final count significantly. For example, consider an instance with $n/2$ large items each of weight $\frac{T}{\ell}\cdot (1- \frac{1}{n^{10}})$ (for some integer $\ell = \Theta(\log n)$) and $n/2$ tiny items each of weight $\frac{10T}{n^{11}}$. The number of subsets with total weight $\le T$ equals $C_1=\binom{n/2}{\le \ell-1} \cdot 2^{n/2} + \binom{n/2}{\ell} \cdot \binom{n/2}{\le n/10}$. If we modify the weight of each tiny item to $\frac{T}{n^{11}}$, then the answer suddenly becomes 
$C_2=\binom{n/2}{\le \ell} \cdot 2^{n/2}$, where 
\[\frac{C_1}{C_2} \le  \frac{\binom{n/2}{\le \ell-1}}{\binom{n/2}{\le \ell}} + \frac{\binom{n/2}{\le n/10}}{2^{n/2}} \le O\big(\frac{\log n}{n}\big ). \]
This phenomenon is in sharp contrast to the bounded-ratio case where we never considered items of weights much smaller than $T/n$.

To resolve this issue,  the key observation is that these tiny items are typically rounded to zero in our algorithm.
Let $I_0$ denote the set of items whose weights are rounded to zero (at the leaf level of the binary recursion tree), but their original weights $(W_j)_{j \in I_0}$ before rounding can be positive.
For an item $i\in I_0$, whether including $i$ in the sample or not does not change the total rounded weight of the sample, so we can independently treat each $i\in I_0$ as being implicitly included in the sample with $1/2$ probability, but we do not explicitly report $i$ in the sample. 
 Hence, we can produce \emph{partial samples} that only include items with positive rounded weights, which typically have much smaller size than a full sample. 

Suppose our algorithm has produced $ N = \widetilde O( n/\ell)$ partial samples $X_1,X_2,\ldots,X_N\subseteq [n]\setminus I_0$, where each $X_i$ should be combined with a uniform random subset $X_0\subseteq I_0$ to form a full sample.  
Then, our goal becomes approximately counting the solutions $X \in \{X_0 \cup X_i: X_0\subseteq I_0 \land i \in [N] \}$ such that $W_X \leq T$.
This task can be phrased as  another instance of (a slight generalization of) \#Knapsack, where the input items have weights $(W_j)_{j \in I_0}$, and additionally we have a generalized item that takes one of the $N$ choices of weights ($W_{X_1},\dots,W_{X_N}$). 
We now proceed to the second phase our algorithm for solving this new (generalized) \#Knapsack instance. The second-phase algorithm requires techniques similar to what we have discussed so far for the first phase, but fortunately the new instance satisfies certain properties making it easier to solve (in particular, we do not need to proceed to a third phase), which we briefly explain next.

By the choice of $\ell$ and the sample complexity bound mentioned in the previous section,
we know the fraction of valid knapsack solutions in this new instance is $\ge \frac{1}{\widetilde O(n/\ell)}$ with good probability.
On the other hand, the fraction of knapsack solutions in the new instance that avoid the  $k$ largest input items in $I_0$ can be upper bounded by $\frac{1}{2^k}$.
Hence, by picking $k= O(\log n)$ large enough, the $k$ largest items of $I_0$ form a hitting set which hits most of the valid knapsack solutions in the new instance. Then, by a similar argument as in \cref{sec:over-hitting}, if we apply our \#Knapsack algorithm again on this new instance, the required sample complexity is only $O(k) =O(\log n)$. 
Therefore, the new instance can be solved easily in time $\widetilde{O}(n^{1.5})$ by our algorithm. See details in \cref{subsec:mainthm,sec:secondalg}.

\subsection{Open problems}
The most important question is whether (Zero-One) \#Knapsack admits an FPRAS with near-linear running time $O(n\polylog(n) \poly(\eps^{-1}))$. Even in the bounded-ratio special case where all input weights $W_1,\dots,W_n \in (T/n, 2T/n]$, we currently do not know any algorithm in $n^{1.5-\Omega(1)} \cdot \poly(1/\eps)$ time.
Another important question is whether the deterministic FPTASes of \cite{GopalanKMSVV11,GawrychowskiMW18} can be improved to near-quadratic or even sub-quadratic time.

\#Knapsack has other versions such as integer \#Knapsack and multi-dimensional \#Knapsack (see e.g., \cite{Dyer03}), and it also connects to many other problems such as contingency tables~\cite{Dyer03}, learning functions of halfspaces~\cite{GopalanKM10} and approximating total variation distances~\cite{GMMPV23}.
One interesting open problem is to explore whether our algorithm or techniques could be applied to these problems.

\subsection*{Paper Organization}
In \cref{sec:prelim} we give a few basic notations and definitions. 

In \cref{sec:structure} we identify the popular weight class for the input instance, and prove structural properties essential for analyzing the sample complexity. 

In order to describe our main algorithm in a modular fashion, 
in \cref{sec:basicalgotools} we introduce the Approximate Knapsack Sampler data structure as a convenient interface. 
It encapsulates the key ingredients of Dyer's framework: randomized rounding,  approximate counting functions (or dynamic programming tables), and approximately sampling  knapsack solutions. 
It further supports rounding and merging operations, which are repeatedly used in our partition-and-convolve framework.
Then, in \cref{sec:algorithm}, we describe our main algorithm using the Approximate Knapsack Samplers as building blocks. 
In \cref{sec:sum-app}, we present the sum-approximation convolution algorithm, which is used to implement the merging operation for Approximate Knapsack Samplers.

As mentioned in the technical overview, our algorithm becomes simpler in the bounded-ratio case 
(\cref{eqn:boundedratiocase}).
For easier reading, we add remarks when some technical parts of the paper can be simplified for the bounded-ratio case.

\section{Preliminaries}
\label{sec:prelim}
\subsection{Basic notations and assumptions}
We use $\log$ to denote logarithm with base $e$ and $\log_2$ to denote logarithm with base $2$.
Let $[n] = \{1,2,\dots,n\}$. Let $\N = \{0,1,2,\dots\}$. Let $2^S = \{S': S'\subseteq S\}$.

We often use the shorthand $x\pm \delta$ to denote the interval $[x-\delta,x+\delta]$.

For a set $I$, let $2^I$ denote $\{I' : I'\subseteq I\}$, the collection of all subsets of $I$.
For two collections of subsets $\Omega_1 \in 2^{I_1}$ and $\Omega_2 \in 2^{I_2}$, where $I_1\cap I_2 = \emptyset$, denote
\begin{align*}
    \Omega_1 \times \Omega_2 = \{X_1 \cup X_2 \mid X_1 \in \Omega_1, X_2 \in \Omega_2\}.
\end{align*}

Let $W_1\le W_2\le \dots \le W_n$ be the weights of the $n$ input items sorted in non-decreasing order, and $T>0$ be the knapsack capacity.
Assume $0< W_i\le T$ (since otherwise we can remove the item), and assume $W_1+\dots +W_n >T$ (since otherwise the number of knapsack solutions equals $2^n$).
Denote the total weight of a subset of items $I\subseteq [n]$ by \[W_I := \sum_{i\in I}W_i.\]

We assume $n$ is greater than a large constant; otherwise, the input has constant size and we can solve the problem by brute force.
We assume $n^{-1.5} < \eps < 1/10^4$. This is because for smaller $\eps$, the previous FPTAS by \cite{GopalanKMSVV11} actually has better running time $\widetilde O(n^3/\eps)$ than our time bound $\widetilde O(n^{1.5}/\eps^2)$.
If $\eps > 1/10^4$, we can simply set $\eps = 1/ 10^4$ and it only affects the running time by a constant factor.

Finally, we assume the input weights and capacity are sufficiently large,
\begin{align}\label{eq:assume}
    T \geq \left(\frac{n}{\eps}\right)^{50} \text{ and } \min_{i \in [n]}W_i \geq \left(\frac{n}{\eps}\right)^{50}.
\end{align}
Otherwise, we can scale all $T$ and $(W_i)_{i=1}^n$ by a factor $C = \poly\left(\frac{n}{\eps}\right)$. Since every number is given in binary, it increases the input size by at most an $O(\log \frac{n}{\eps})$ factor.
This assumption is merely for the technical reason that we want to always work with integers  in the algorithm, so when we work with fractions such as $T/\poly(n/\eps)$ we want their rounded values $\lfloor T/\poly(n/\eps)\rfloor $ to not introduce too much error.

Our algorithm uses the well-known Fast Fourier Transform (FFT) algorithm (e.g., \cite[Chapter 30]{cormen2022introduction}, \cite[Section 4.3.3]{knuth2014art}) to compute the convolution of two arrays:
\begin{lemma}[FFT]\label{lemma-FFT}
Given two arrays $A[0\dots n]$ and $B[0\dots n]$ where all $A[i],B[j]$ are integers from $[-2^M, 2^M]$, we can deterministically compute $C[k] = \sum_{i+j=k} A[i] B[j]$ for all $0\le k\le 2n$ in $O(Mn\polylog(Mn))$ time (in the standard word-RAM model with $O(\log n + \log M)$-bit machine words).
\end{lemma}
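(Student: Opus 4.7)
The plan is to reduce integer-coefficient convolution to a single large-integer multiplication via the \emph{Kronecker substitution} trick, and then invoke any known fast integer multiplication algorithm in word-RAM. Since each $C[k] = \sum_{i+j=k}A[i]B[j]$ satisfies $|C[k]| \le (n+1)\cdot 2^{2M}$, it fits in $s := 2M + \lceil\log_2(n+1)\rceil + 2$ bits. Setting the offset $\sigma := 2^M$, every shifted entry $A[i]+\sigma,\ B[j]+\sigma$ lies in $[0, 2^{M+1}]$ and hence fits in $M+2$ bits.

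First I would build the big nonnegative integers $\widetilde A := \sum_{i=0}^n (A[i]+\sigma)\cdot 2^{is}$ and $\widetilde B := \sum_{j=0}^n (B[j]+\sigma)\cdot 2^{js}$. Both have bit-length $L = O(n(M+\log n))$ and can be assembled in $O(L/\log n)$ word operations by packing the $M+2$-bit shifted coefficients into consecutive machine words of width $O(\log n + \log M)$. Next, compute $P := \widetilde A \cdot \widetilde B$ using Sch\"onhage--Strassen (or any more recent improvement), which multiplies two $L$-bit integers in $O(L\polylog L) = O(nM\polylog(nM))$ word-RAM time. Since each slot sum $\widetilde C[k] := \sum_{i+j=k}(A[i]+\sigma)(B[j]+\sigma)$ is bounded by $(n+1)\cdot 2^{2M+2} < 2^s$, no carries cross slot boundaries, so all $\widetilde C[k]$ can be read off directly from the base-$2^s$ expansion of $P$ in $O(L/\log n)$ additional time.

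Finally I recover $C[k]$ from $\widetilde C[k]$ via the expansion
\begin{align*}
\widetilde C[k] \;=\; \sum_{i+j=k}A[i]B[j] \;+\; \sigma \sum_{i+j=k}\bigl(A[i]+B[j]\bigr) \;+\; \sigma^2\cdot \bigl|\{(i,j): i+j=k,\ 0\le i,j\le n\}\bigr|,
\end{align*}
where the last two correction terms can be tabulated for all $k$ in $O(n)$ total time via simple prefix-sum convolutions against the all-ones array. Subtracting these yields $C[k]$, noting that the intermediate $\sigma\cdot(\,\cdot\,)$ and $\sigma^2\cdot(\,\cdot\,)$ values are themselves $O(M + \log n)$-bit integers handled in $O(M/\log n + 1)$ word-RAM time per index.

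The essentially only subtlety is that we are restricted to the word-RAM model with word length only $O(\log n + \log M)$: a floating-point FFT would demand infinite-precision real arithmetic to avoid roundoff in coefficients of size $2^{2M}$, and an NTT would require primes of size $\gtrsim 2^{2M+\log n}$ that do not fit in a single machine word. Kronecker substitution sidesteps both difficulties by delegating everything to one large-integer multiplication, for which word-RAM algorithms running in $\widetilde O(L)$ time are classical. This yields the claimed bound $O(Mn\polylog(Mn))$.
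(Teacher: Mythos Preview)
Your proof is correct. The paper itself does not give a proof of this lemma at all: it simply states the result and cites standard references (\cite[Chapter 30]{cormen2022introduction} and \cite[Section 4.3.3]{knuth2014art}), treating it as a well-known fact. Your Kronecker-substitution argument is one of the canonical ways to justify this bound rigorously in the word-RAM model, and it is entirely appropriate here.

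One small wording issue: when you say the correction terms ``can be tabulated for all $k$ in $O(n)$ total time,'' that understates the cost, since each prefix sum involves $(M+\log n)$-bit integers and hence $O((M+\log n)/w)$ word operations per addition. You correctly account for this a sentence later (``handled in $O(M/\log n + 1)$ word-RAM time per index''), so the final bound is unaffected; just be consistent and avoid the bare ``$O(n)$'' claim.
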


\subsection{Probability background}

A random variable $X$ with mean $\mu$ is called \emph{$\sigma^2$-subgaussian} (or \emph{subgaussian with parameter} $\sigma^2$) if 
\[ \Ex[e^{\lambda(X-\mu)}]\le e^{\lambda^2 \sigma^2/2} \text{ for all } \lambda\in \R.\]
A $\sigma^2$-subgaussian random variable $X$ with mean $\mu$ satisfies the following Hoeffding-type concentration property: for any $t > 0$, 
\begin{align*}
    \Pr[X - \mu > t] \leq \exp \left(-\frac{t^2}{2\sigma^2}\right) \quad \text{and} \quad \Pr[X - \mu < -t] \leq \exp \left(-\frac{t^2}{2\sigma^2}\right).
\end{align*}
A bounded random variable $X\in [a,b]$ is $\frac{(b-a)^2}{4}$-subgaussian. For independent subgaussian random variables $X_1,X_2$ with parameters $\sigma_1^2,\sigma_2^2$, their sum $X_1+X_2$ is a $(\sigma_1^2+\sigma_2^2)$-subgaussian random variable.

We use the following lemma to analyze the randomized rounding performed by our algorithm.
\begin{lemma}[Rounding a subgaussian]\label{lem-subg}
Let $X$ be a $\sigma^2$-subgaussian random variable. Let $S > 0$.
Let $\overbar{X} = S \cdot \lceil X/S \rceil$ and $\myunderbar{X} = S\cdot  \lfloor X/S \rfloor$. Define random variable $X' = \myunderbar{X}$ with probability $\frac{\overbar{X} - X}{S}$ and $X' = \overbar{X}$ otherwise. Then, $X'$ is $(\sigma^2 + \frac{S^2}{4})$-subgaussian with mean $\Ex[X]$.
\end{lemma}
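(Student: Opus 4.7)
The plan is to condition on the realization of $X$ and decompose the deviation $X'-\mu$ (where $\mu:=\Ex[X]$) into the original deviation $X-\mu$ plus a rounding perturbation $X'-X$ which, conditionally on $X$, is a zero-mean random variable supported on an interval of length at most $S$. The tower rule for moment generating functions then combines the two subgaussian parameters additively, which is exactly what the statement requires.

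First I would verify the mean property. Given $X$, the definition specifies that $X' = \myunderbar{X}$ with probability $(\overbar{X}-X)/S$ and $X' = \overbar{X}$ otherwise. When $X\notin S\Z$ we have $\overbar{X}-\myunderbar{X}=S$, and a direct computation gives $\Ex[X'\mid X] = \myunderbar{X}\cdot\tfrac{\overbar{X}-X}{S} + \overbar{X}\cdot\tfrac{X-\myunderbar{X}}{S} = X$; when $X\in S\Z$ we have $\myunderbar{X}=\overbar{X}=X$ and the coin is degenerate, so $X'=X$. Either way $\Ex[X'\mid X]=X$, hence $\Ex[X']=\mu$ by the law of total expectation.

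Next I would bound the conditional moment generating function of the rounding error. Given $X$, the variable $X'-X$ has mean zero and is supported on the two points $\{\myunderbar{X}-X,\,\overbar{X}-X\}$, which lie in an interval of length $\overbar{X}-\myunderbar{X}\le S$. By the Hoeffding-type fact already cited in the excerpt (a centered random variable contained in an interval of length $\le S$ is $(S^2/4)$-subgaussian), we obtain $\Ex[e^{\lambda(X'-X)}\mid X] \le \exp(\lambda^2 S^2/8)$ for every $\lambda\in\R$. Implicit in the definition of $X'$ is that the rounding coin is drawn independently of $X$, which is what makes this conditional bound hold pointwise in $X$.

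Finally I would apply the tower rule. Writing $X'-\mu=(X'-X)+(X-\mu)$ and using that $X$ is $X$-measurable, $\Ex[e^{\lambda(X'-\mu)}\mid X]=e^{\lambda(X-\mu)}\cdot \Ex[e^{\lambda(X'-X)}\mid X] \le e^{\lambda(X-\mu)}\cdot\exp(\lambda^2 S^2/8)$. Taking the outer expectation and invoking the $\sigma^2$-subgaussianity of $X$, we conclude $\Ex[e^{\lambda(X'-\mu)}] \le \exp(\lambda^2 S^2/8)\cdot \exp(\lambda^2\sigma^2/2) = \exp\bigl(\lambda^2(\sigma^2+S^2/4)/2\bigr)$, which is exactly the subgaussian MGF bound with parameter $\sigma^2+S^2/4$. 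The proof has no substantive obstacle; the only points requiring care are the degenerate case $X\in S\Z$ in the mean computation and the fact that the rounding coin must be independent of $X$ in order to pull the conditional MGF bound outside the tower expectation.
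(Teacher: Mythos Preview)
Your proposal is correct and follows essentially the same approach as the paper's proof: condition on $X$, note that $X'-X$ is a centered variable supported in an interval of length at most $S$ (hence $S^2/4$-subgaussian), and combine the conditional MGF bound with the $\sigma^2$-subgaussianity of $X$ via the tower rule. Your handling of the degenerate case $X\in S\Z$ and the explicit mention of the independence of the rounding coin are minor elaborations beyond what the paper writes, but the argument is the same.
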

\begin{proof}
Conditioned on $X$, we have
\begin{align*}
\Ex[X' \mid X] = \myunderbar{X} \cdot \frac{\overbar{X}-X}{S} + \overbar{X}\cdot \frac{S-\overbar{X}+X}{S} = X.
\end{align*}
Hence, $ \Ex[X'] = \Ex\big [\Ex[X'\mid X]\big ] = \Ex[X]$ by the law of total expectation.
Denote $\mu = \Ex[X]=\Ex[X']$. For any $\lambda$, by the law of total expectation again,
 \[\Ex[e^{\lambda(X'-\mu)}] = \Ex\big [\Ex[e^{\lambda(X'-\mu)} \mid X]\big ] = \Ex\big [e^{\lambda(X - \mu)}\Ex[e^{\lambda(X'-X)} \mid X]\big ] .\] 
Conditioned on $X$, the random variable $X'-X$ belongs to the interval $[\myunderbar{X},\overbar{X}]$ of length $\le S$, and hence is  $S^2/4$-subgaussian. Therefore, we have
\begin{align*}
   \Ex\big [e^{\lambda(X - \mu)}\Ex[e^{\lambda(X'-X)} \mid X]\big ] \leq \Ex[e^{\lambda(X - \mu)} e^{\lambda^2 S^2/8}] \leq e^{\lambda^2(\sigma^2 + S^2/4)/2}.
\end{align*}
Hence, $X'$ is $(\sigma^2 + \frac{S^2}{4})$-subgaussian.
\end{proof}

Let $U(S)$ denote the uniform distribution over the finite set $S$.

Let $\mu$ and $\pi$ be two distributions over some finite state space $\Omega$. The \emph{total variation distance} between $\mu$ and $\pi$ is defined by 
\begin{align*}
    \Vert \mu - \pi \Vert_{\text{TV}} = \frac{1}{2}\sum_{x \in \Omega}|\mu(x)-\pi(x)|.
\end{align*}
Let $X \sim \mu$ and $Y \sim \pi$ be two random variables. We may abuse the notation and denote the total variation distance $\Vert \mu - \pi \Vert_{\text{TV}}$ by $\Vert X - Y \Vert_{\text{TV}}$ or $\Vert X - \pi \Vert_{\text{TV}}$.

A \emph{coupling} of two distributions $\mu$ and $\pi$ is a joint distribution of $(X,Y)$ such that $X \sim \mu$ and $Y \sim \pi$. We say $X$ and $Y$ are coupled successfully in a coupling if the event $X = Y$ occurs. The following coupling inequality is well-known (e.g. see~\cite{mitzenmacher2017probability}).
\begin{lemma}[Coupling Inequality]\label{lemma-Coupling}
For any coupling $(X,Y)$ of $\mu$ and $\pi$, it holds that 
\begin{align*}
    \Vert \mu-\pi \Vert_{\mathrm{TV}} \leq  \Pr[X \neq Y].
\end{align*}
Furthermore, there exists an optimal coupling that achieves equality.
\end{lemma}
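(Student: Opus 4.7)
The plan is to handle the two assertions separately: the inequality via a union-bound-style argument on events, and the equality via an explicit construction of the maximal coupling. For the inequality, I would first recall that the half-sum definition of $\Vert \mu - \pi \Vert_{\mathrm{TV}}$ coincides with $\sup_{A \subseteq \Omega} |\mu(A) - \pi(A)|$: the witness is $A^* = \{x : \mu(x) > \pi(x)\}$, on which $\mu(A^*) - \pi(A^*) = \tfrac{1}{2}\sum_x |\mu(x) - \pi(x)|$ because $\sum_x (\mu(x) - \pi(x)) = 0$ forces the positive and negative parts of $\mu(x) - \pi(x)$ to have equal total mass. Then, for any coupling $(X,Y)$ and any event $A$, I would expand
\[ \mu(A) - \pi(A) \;=\; \Pr[X \in A] - \Pr[Y \in A] \;=\; \Pr[X \in A,\, Y \notin A] - \Pr[X \notin A,\, Y \in A], \]
so $|\mu(A) - \pi(A)| \le \Pr[X \in A,\, Y \notin A] + \Pr[X \notin A,\, Y \in A] \le \Pr[X \ne Y]$, and taking the supremum over $A$ yields $\Vert \mu - \pi \Vert_{\mathrm{TV}} \le \Pr[X \ne Y]$.

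For the matching equality, I would construct the maximal coupling explicitly. Set $p := \sum_{x \in \Omega} \min(\mu(x), \pi(x))$; the identity $\min(a,b) = \tfrac{a+b-|a-b|}{2}$ together with $\sum_x \mu(x) = \sum_x \pi(x) = 1$ gives $p = 1 - \Vert \mu - \pi \Vert_{\mathrm{TV}}$. Sample as follows: with probability $p$, draw $Z$ from the distribution $x \mapsto \min(\mu(x),\pi(x))/p$ and output $(X,Y) = (Z,Z)$; with probability $1-p$ (assuming $p < 1$), independently draw $X$ from $x \mapsto (\mu(x) - \min(\mu(x),\pi(x)))/(1-p)$ and $Y$ from $x \mapsto (\pi(x) - \min(\mu(x),\pi(x)))/(1-p)$. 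The two residual distributions are supported on the disjoint sets $\{\mu > \pi\}$ and $\{\pi > \mu\}$ respectively, so the second branch always produces $X \ne Y$, while the first branch always gives $X = Y$; hence $\Pr[X \ne Y] = 1 - p = \Vert \mu - \pi \Vert_{\mathrm{TV}}$. The marginals are correct because the total mass placed at any point $x$ by the first coordinate equals $p \cdot \min(\mu(x),\pi(x))/p + (1-p) \cdot (\mu(x) - \min(\mu(x),\pi(x)))/(1-p) = \mu(x)$, and symmetrically for $\pi$.

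I do not foresee a serious obstacle. The only delicate step is the algebraic identity $p = 1 - \Vert \mu - \pi \Vert_{\mathrm{TV}}$, which reduces to splitting $\sum_x |\mu(x) - \pi(x)|$ over $\{\mu > \pi\}$ and $\{\pi \ge \mu\}$ and invoking $\sum_x \mu(x) = \sum_x \pi(x) = 1$; the degenerate cases $p = 0$ (disjoint supports, so any independent product coupling achieves $\Pr[X \ne Y] = 1$) and $p = 1$ (so $\mu = \pi$ and the identity coupling works) are handled by direct inspection, and the remainder is routine bookkeeping.
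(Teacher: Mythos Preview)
Your proof is correct and is the standard argument. The paper does not give its own proof of this lemma; it simply states the result with a citation to a textbook (Mitzenmacher--Upfal), so there is nothing to compare against beyond noting that your argument is precisely the classical one found in such references.
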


\section{Structural properties of the popular weight class}
\label{sec:structure}
Recall that the input weights are sorted, $W_1\le W_2\le \dots \le W_n$, with total sum greater than $T$.
Pick the unique index $i^*\in [n]$ such that
\begin{equation}
    \label{eqn:defnistar}
 W_1+W_2+\dots + W_{i^*-1} < T/2 \le W_1+W_2+\dots + W_{i^*}.
\end{equation}
The following simple lemma identifies a popular weight class:
\begin{lemma}
    \label{claim:existsell}
    There exists an integer $\ell \in [2, 8n)$ such that   
    \[ |\{j: 1\le j\le i^{*}, W_j \in (T/\ell,2T/\ell]\} |> \frac{\ell}{8\log_2(8n)}. \]
Moreover, such $\ell$ can be found in $\widetilde{O}(n)$ time.
\end{lemma}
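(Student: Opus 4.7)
The plan is a dyadic pigeonhole argument applied to the prefix $\{1,2,\ldots,i^*\}$, whose total weight is at least $T/2$ by the defining property~\eqref{eqn:defnistar}. I would partition these indices into $O(\log n)$ geometric weight classes: for each integer $k\ge 1$, set $C_k := \{j\le i^* : W_j \in (T/2^k, 2T/2^k]\}$, so that $C_k$ is exactly the set counted in the lemma statement for $\ell = 2^k$. I would restrict attention to $k\in \{1,\ldots,K\}$ where $K := \lfloor \log_2(8n-1)\rfloor$, so that every $\ell=2^k$ considered satisfies $\ell < 8n$ as required, and simultaneously $2^K \ge 4n$.

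With this choice of $K$, any item $j\le i^*$ not landing in any $C_k$ has $W_j \le T/2^K \le T/(4n)$, and there are at most $n$ such items, so together they contribute at most $T/4$. Combining with $\sum_{j\le i^*} W_j \ge T/2$, the items in $\bigcup_{k=1}^{K} C_k$ carry collective weight at least $T/4$. Since $K \le \log_2(8n-1) < \log_2(8n)$, pigeonhole across the $K$ classes yields some $k$ for which
\[ \sum_{j\in C_k} W_j \;\ge\; \frac{T}{4K} \;>\; \frac{T}{4\log_2(8n)}. \]
Each item in $C_k$ has weight at most $2T/\ell$ with $\ell = 2^k$, so
\[ |C_k| \;>\; \frac{T/(4\log_2(8n))}{2T/\ell} \;=\; \frac{\ell}{8\log_2(8n)}, \]
which is exactly the desired strict bound.

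For the algorithmic part, I would compute $i^*$ in $O(n)$ time via a single prefix-sum scan, then bucket each $j\le i^*$ by computing $k = \lceil \log_2(T/W_j)\rceil$ in $O(1)$ word-RAM time per item, and finally scan the $K=O(\log n)$ counters to return the first $\ell=2^k$ satisfying the bound, giving total runtime $O(n)$. The only delicate point is ensuring the strict inequality in the conclusion; this is what forces the upper bound on $K$ to be \emph{strictly} below $\log_2(8n)$, while the lower bound $2^K\ge 4n$ is simultaneously needed to keep the residual light items contributing $\le T/4$. The choice $K=\lfloor \log_2(8n-1)\rfloor$ threads both requirements, and from there all remaining steps are elementary arithmetic.
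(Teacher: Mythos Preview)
Your proof is correct and follows essentially the same approach as the paper's: a dyadic pigeonhole over the weight classes $(T/2^k,2T/2^k]$ restricted to $\{1,\dots,i^*\}$, discarding items lighter than $T/(4n)$ to retain at least $T/4$ total weight, then pigeonholing across $O(\log n)$ classes and dividing by the per-item upper bound $2T/\ell$. Your cutoff $K=\lfloor\log_2(8n-1)\rfloor$ is in fact equal to the paper's $\lceil\log_2 4n\rceil$, so the two arguments coincide up to notation.
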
 \begin{proof} 
    Let $G:= \{j \in [i^*]: W_j> T/ 2^{\lceil \log_2 4n\rceil}  \}$.
     Then, the items in $[i^*]\setminus G$ have total weight at most $i^*\cdot T/ 2^{\lceil \log_2 4n\rceil}\le T/4$, so $G$ has total weight at least $(W_1+\dots +W_{i^*}) - T/4 \ge  T/4$ by \eqref{eqn:defnistar}.
    Partition $G$ into subsets  $G_\ell :=\{j\in [i^*]: W_j \in (T/\ell, 2T/\ell]\}$, where $\ell$ ranges from $\{2^1,2^2,\dots,$ $ 2^{\lceil \log_2 4n\rceil }\}\subset [2,8n)$. 
     Among these $\lceil \log_2 4n\rceil$ subsets, there must exist a $G_\ell$ with total weight at least $\frac{T/4}{\lceil \log_2 4n\rceil}$, and hence $|G_\ell| \ge \frac{T/4}{\lceil \log_2 4n\rceil}/ (2T/\ell)> \frac{\ell}{8\log_2 8n }$ as desired.
 \end{proof}
 From now on, we fix the value of $\ell \in [2,8n)$ obtained from \cref{claim:existsell}. 

Recall  $\Omega := \{X\subseteq [n] : \sum_{j\in X}W_j\le T\}$ is the set of all solutions for the input knapsack instance. 
Now partition all the non-solutions $2^{[n]}\setminus \Omega$ into the following sets:
\begin{align}\label{eq:def-Omega-d}
 \text{for all integers } d \geq 1,\quad    \Omega_d \defeq \Big\{X \subseteq [n] \, : \, T + (d-1)\cdot \frac{T}{\ell} < \sum_{j \in X}W_j \leq T+ d\cdot \frac{T}{\ell}  \Big\}.
\end{align}
The goal of this section is to establish the following two upper bounds on $|\Omega_d|$ in terms of $|\Omega|$, which will be used later in \cref{subsec:mainthm} to  analyze the sample complexity of our main algorithm. The first bound is similar to \cite{Dyer03}, while the second bound  achieves a nearly $\ell$-factor improvement specifically for $|\Omega_1|$:
\begin{lemma}
    \label{lem:omegadbound}
For every integer $d\ge 1$, $ |\Omega_d|\le n^{d} \cdot |\Omega|$.
\end{lemma}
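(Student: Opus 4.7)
The plan is to construct an injection $\phi \colon \Omega_d \hookrightarrow \Omega \times [n]^d$, which immediately gives $|\Omega_d| \le n^d \cdot |\Omega|$. Given $X \in \Omega_d$, I list its elements $y_1, y_2, \ldots, y_{|X|}$ in non-increasing weight order (breaking ties by index). If $|X| \ge d$, I set $Z_X := X \setminus \{y_1, \ldots, y_d\}$ and $\vec{y} := (y_1, \ldots, y_d)$; otherwise I set $Z_X := \emptyset$ and pad $\vec{y}$ to length $d$ by repeating $y_{|X|}$. Define $\phi(X) := (Z_X, \vec{y})$. Since $X = Z_X \cup \{y_1, \ldots, y_d\}$ (as a set) is recoverable from $\phi(X)$, the map $\phi$ is injective, and the whole task reduces to verifying $Z_X \in \Omega$.

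The essential preliminary uses the selection of $\ell$. By \cref{claim:existsell} there is some $j \le i^{*}$ with $W_j > T/\ell$, and since the weights $W_1 \le W_2 \le \cdots \le W_n$ are sorted, the set of ``small'' indices $\{i \in [n] : W_i \le T/\ell\}$ is contained in $\{1, \ldots, i^{*}-1\}$. By \eqref{eqn:defnistar} this prefix has total weight strictly less than $T/2$. Writing $X_S := X \cap \{i : W_i \le T/\ell\}$, it follows that $W_{X_S} < T/2$ for every $X \subseteq [n]$.

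I then prove by induction on $k \in \{0, 1, \ldots, d\}$ the following invariant: \emph{either} $W_{X \setminus \{y_1, \ldots, y_k\}} \le T + (d-k) T/\ell$, \emph{or} $X \setminus \{y_1, \ldots, y_k\} \subseteq X_S$. The base case $k=0$ is the definition of $\Omega_d$. For the inductive step, if the next heaviest remaining element $y_k$ satisfies $W_{y_k} > T/\ell$, removing it decreases the weight by more than $T/\ell$ and preserves the first branch; if instead $W_{y_k} \le T/\ell$, then since $y_k$ is the maximum of the current residual, every remaining element is small, so the residual is contained in $X_S$; and if the residual was already in $X_S$ at step $k-1$, it trivially remains so at step $k$. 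Applied at $k = d$, the invariant yields $W_{Z_X} \le T$ in the first branch and $W_{Z_X} \le W_{X_S} < T/2 \le T$ in the second; either way $Z_X \in \Omega$.

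The main subtlety is the two-case structure of the invariant, which is needed because the greedy removal can either shave off weight at the expected rate $T/\ell$ per step or exhaust the large items early and fall into $X_S$; once that bookkeeping is in place the argument is mechanical, and the popular-weight-class property is exactly what makes $W_{X_S} < T/2$ automatic.
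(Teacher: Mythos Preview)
Your proof is correct and takes essentially the same approach as the paper: both remove up to $d$ heavy items from $X \in \Omega_d$ to land in $\Omega$, relying on the fact that the items of weight $\le T/\ell$ have total weight $< T/2$ (this is exactly \eqref{eqn:smallgrouptotalub}). The only cosmetic difference is that the paper uses a many-to-one map $X \mapsto X \setminus Y$ and bounds the number of preimages by $\sum_{y=1}^d \binom{n}{y} \le n^d$, whereas you build an explicit injection into $\Omega \times [n]^d$ by always removing the $d$ heaviest items and recording them as an ordered tuple.
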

\begin{lemma}
    \label{lem:omega1bound}
    $|\Omega_1| \le \frac{15000n(\log_2 n )^2}{\ell}\cdot |\Omega|$.
\end{lemma}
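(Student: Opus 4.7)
The plan is to refine the hitting-set / double-counting argument of the bounded-ratio case (\cref{sec:over-hitting}) using the popular weight class $G_\ell$ from \cref{claim:existsell}. The key starting observation is: for any item $y$ with $W_y > T/\ell$ (and in particular any $y \in G_\ell$, since items of $G_\ell$ have weight in $(T/\ell, 2T/\ell]$), the map $X \mapsto X \setminus \{y\}$ is an injection from $\{X \in \Omega_1 : y \in X\}$ into $\Omega$, because $W_{X \setminus \{y\}} = W_X - W_y < (T+T/\ell) - T/\ell = T$. Summing this injection over $y \in G_\ell$ yields the base double-counting inequality
\[\sum_{X \in \Omega_1}|X \cap G_\ell| \;\le\; |G_\ell| \cdot |\Omega|.\]

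To turn this sum bound into a bound on $|\Omega_1|$, I would introduce a random hitting set $H \subseteq G_\ell$ by including each element independently with probability $p := \Theta(\log^2 n/\ell)$, giving $\Ex[|H|] = p|G_\ell| \le O(n \log^2 n/\ell)$ (using $|G_\ell| \le n$). For each realization of $H$, the canonical choice $y = \min(X \cap H)$ induces an injection $\{X \in \Omega_1 : X \cap H \ne \emptyset\} \hookrightarrow H \times \Omega$, so taking expectations yields
\[\sum_{X \in \Omega_1}\bigl(1 - (1-p)^{|X \cap G_\ell|}\bigr) \;\le\; \Ex[|H|]\cdot|\Omega| = O(n\log^2 n /\ell)\cdot|\Omega|.\]
Calling $X \in \Omega_1$ \emph{heavy} when $|X \cap G_\ell| \ge k := \Theta(\ell/\log^2 n)$, every heavy $X$ contributes at least $1 - (1-p)^k \ge 1/2$ to the left-hand side, so the number of heavy $X$ is bounded by $O(n \log^2 n /\ell)\cdot|\Omega|$, matching the target.

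The main obstacle is bounding the number of \emph{light} $X \in \Omega_1$ with $|X \cap G_\ell| < k$, which $H$ may miss. For this step I would exploit the additional structural property of the chosen $\ell$ foreshadowed in \cref{overview:popular}---that items of weight $\le T/\ell$ contribute total weight $< T/2$---which forces $W_{X \cap L} > T/2$ for every $X \in \Omega_1$, where $L := \{j : W_j > T/\ell\}$. Combined with the light-case bound $W_{X \cap G_\ell} \le |X \cap G_\ell| \cdot 2T/\ell < 2kT/\ell = O(T/\log^2 n)$, this implies $W_{X \cap (L \setminus G_\ell)} > T/4$, i.e., every light $X$ must carry substantial weight in ``very-large'' items (weight $> 2T/\ell$). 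A second hitting-set argument applied to these very-large items, partitioned into the $O(\log n)$ dyadic weight classes above $G_\ell$---whose per-class item counts are controlled by the maximality in the choice of $\ell$---should contribute another $O(n\log^2 n /\ell)\cdot|\Omega|$ to the count of light $X$. The technical crux here is carefully calibrating the sampling probabilities across the dyadic classes and tracking the $\log$-factor overhead so that the total stays within the target $O(n\log^2 n/\ell)$ factor.
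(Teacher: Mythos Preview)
Your heavy case is essentially the paper's hitting-set step (the paper takes $H$ from all of $L=\{j:W_j>T/\ell\}$ rather than just $G_\ell$, but that is a minor difference). The gap is in the light case. There is no ``maximality in the choice of $\ell$'' in \cref{claim:existsell}---it only asserts existence---so you have no control on $|\{i\in[n]:W_i\in(T/m,2T/m]\}|$ beyond the trivial bound $n$. With sampling rate $p_m=\Theta(\log^2 n/m)$ (needed to hit a set containing $\Theta(m/\log n)$ items of class $m$), the expected size of the class-$m$ part of the hitting set is $\Theta(n\log^2 n/m)$; summing over $m=2,4,\dots,\ell/2$ gives total size $\Theta(n\log^2 n)$, not $\Theta(n\log^2 n/\ell)$, so the resulting bound on the number of light $X$ is off by a factor of $\ell$. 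More fundamentally, a hitting-set argument alone cannot work here: a light $X$ might contain a \emph{single} very-large item of weight $\approx T/2$, and no small hitting set from that class will reliably hit it, yet removing that one item yields only one element of $\Omega$.

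The missing idea is to exploit the slack created when a very-large item is removed: after freeing up weight $\gg T/\ell$, one can re-insert many subsets of items from $G_\ell$ and still land in $\Omega$, turning one removal into exponentially many images. The paper packages this as the separate structural \cref{lem:manyitemssolutions}: via a bipartite double-counting between $\Omega^\triangle:=\{X:W_X\in(T,2T],\ |X\cap L|<0.01\ell/(\log_2 8n)^2\}$ and $\Omega$, it shows $|\Omega^\triangle|\le 0.01|\Omega|$ directly. Neighbors of $X^\triangle\in\Omega^\triangle$ are formed by stripping all ``huge'' items (weight $\ge 40(\log_2 8n)^2\,T/\ell$) and then adding an arbitrary subset of $\Theta(\ell/\log n)$ good items from $G_\ell$; the reverse degree is bounded by counting the few ways to choose the stripped huge items and the retained good items. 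With \cref{lem:manyitemssolutions} in hand, every $X\in\Omega_1\setminus\Omega^\triangle$ has $\ge\Theta(\ell/\log^2 n)$ items in $L$, and a single hitting set over $L$ with rate $\Theta(\log^2 n/\ell)$ finishes the proof.
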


In the following, we call input items $j\in [n]$ with $W_j> T/\ell$ the \emph{large items}. 
Note that \cref{claim:existsell} implies $W_{i^*}> T/\ell$. 
In particular,  this means all the non-large items have total weight
\begin{equation}
\label{eqn:smallgrouptotalub}
\sum_{j\in [n]: W_j \le  T/\ell} W_j \le W_1+W_2+\dots +W_{i^*-1} \overset{\text{by \eqref{eqn:defnistar}}}{<} T/2.
\end{equation}
Now we can prove \cref{lem:omegadbound}.
\begin{proofof}{Proof of \cref{lem:omegadbound}}
Fix any $X\in \Omega_d$. If we remove all the large items from $X$, then the remaining items have total weight $< T/2<T$ (by \eqref{eqn:smallgrouptotalub}). Since $X$ has total weight $\le T + d\cdot T/\ell$, and each large item has weight $> T/\ell$, we know there exists $Y\subseteq X$ of size $1\le |Y|\le d$ such that $X\setminus Y$ has total weight $< T$ (that is, $X\setminus Y \in \Omega$).
View this as a mapping from $X \in \Omega_d$ to $X\setminus Y \in \Omega$. Then the number of pre-images of any $X'\in \Omega$ is at most the number of possibilities for 
the subset $Y$, which is $\le \sum_{y=1}^d\binom{n}{y} \le n^d$. Hence, $|\Omega_d| \le |\Omega| \cdot n^d$.
\end{proofof}

To prove \cref{lem:omega1bound}, we need the following technical lemma, which informally says that a typical set of items with total weight $\Theta(T)$ should contain many items of weight $\approx T/\ell$: 
\begin{lemma}
    \label{lem:manyitemssolutions}
 Let 
 \[ \Omega^\triangle:= \Big \{X\subseteq [n] : \sum_{j\in X}W_j\in (T,2T]\text{ and } |\{j\in X: W_j> T/\ell\}| < \frac{0.01\ell}{(\log_2(8n))^2}\Big \}.\]
Then, $ |\Omega^\triangle|\le  0.01|\Omega|$.
\end{lemma}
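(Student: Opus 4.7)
My plan is to bound $|\Omega^\triangle|$ by a product-type counting comparison with $|\Omega|$, after splitting each $X\subseteq [n]$ along the cut between the large items $L := \{j\in [n] : W_j > T/\ell\}$ and the small items $S := [n]\setminus L$. Before the main argument I would dispose of the trivial case $\ell \le 100(\log_2(8n))^2$: in this case the threshold $B := \frac{0.01\ell}{(\log_2(8n))^2}$ satisfies $B\le 1$, so the constraint $|X\cap L|<B$ in the definition of $\Omega^\triangle$ forces $X \subseteq S$, which by~\eqref{eqn:smallgrouptotalub} has total weight less than $T/2<T$ and thus cannot lie in $\Omega^\triangle$; hence $\Omega^\triangle = \emptyset$ and the bound is vacuous.

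In the main case $\ell > 100(\log_2(8n))^2$ (so that $B>1$), an upper bound on $|\Omega^\triangle|$ follows by dropping the weight constraint entirely: any $X \in \Omega^\triangle$ is determined by a choice of $X_L = X\cap L \subseteq L$ with $|X_L| < B$ together with an arbitrary $X_S = X\cap S \subseteq S$, giving
\[
|\Omega^\triangle| \;\le\; \binom{|L|}{<B}\cdot 2^{|S|}.
\]
For the matching lower bound on $|\Omega|$, I would invoke \cref{claim:existsell} to fix a subset $G' \subseteq G_\ell$ of size $m' := \min\bigl(|G_\ell|, \lfloor \ell/4\rfloor\bigr)$; in the main case this gives $m' \ge \frac{\ell}{8\log_2(8n)}$. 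Since every item of $G'$ has weight at most $2T/\ell$, every $X_L \subseteq G'$ satisfies $W_{X_L} \le m'\cdot 2T/\ell \le T/2$, and combining with any $X_S \subseteq S$ (which has $W_{X_S} < T/2$ by~\eqref{eqn:smallgrouptotalub}) yields $W_{X_L\cup X_S}<T$, so $X_L\cup X_S \in \Omega$. Hence $|\Omega| \ge 2^{m'}\cdot 2^{|S|}$.

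Taking the ratio leaves $|\Omega^\triangle|/|\Omega| \le \binom{|L|}{<B}/2^{m'}$, and the concluding step is to verify this is at most $0.01$ by a direct $\log_2$ calculation. Using the crude bound $\binom{|L|}{<B}\le n^{B+1}$, the target reduces to $(B+1)\log_2 n + \log_2(100) \le m'$; plugging in $B = \frac{0.01\ell}{(\log_2(8n))^2}$, $m' \ge \frac{\ell}{8\log_2(8n)}$, and using $\log_2 n \le \log_2(8n)$, this becomes (up to mild slack) $\frac{0.02\ell}{\log_2(8n)} + \log_2(100) \le \frac{\ell}{8\log_2(8n)}$, which holds with room to spare whenever $\ell > 100(\log_2(8n))^2$ (since then $\ell/\log_2(8n) > 100\log_2(8n)$ dominates the $\log_2(100)$ slack). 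The main obstacle I anticipate is purely bookkeeping of constants, making sure the $0.01$-factor baked into the definition of $B$ translates cleanly into the $0.01|\Omega|$ in the conclusion; no delicate analytic step appears to be needed.
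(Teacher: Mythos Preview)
Your argument is correct and takes a genuinely different route from the paper. Both proofs handle the easy case $\ell \le 100(\log_2(8n))^2$ identically. In the main case, the paper builds a bipartite graph between $\Omega^\triangle$ and $\Omega$: from each $X^\triangle$ it strips the ``huge'' items (weight $\ge 40(\log_2(8n))^2\, T/\ell$) to form $\hat X^\triangle$, declares the neighbors of $X^\triangle$ to be all $X\in\Omega$ obtained by adding good items to $\hat X^\triangle$, and bounds degrees on both sides. You instead factor along the large/small split: the crude upper bound $|\Omega^\triangle|\le \binom{|L|}{<B}\cdot 2^{|S|}$ (dropping the weight window entirely) and the lower bound $|\Omega|\ge 2^{m'}\cdot 2^{|S|}$ (from subsets of a size-$m'$ set of good items combined with arbitrary small-item subsets) share the common factor $2^{|S|}$, reducing the comparison to $\binom{n}{<B}$ versus $2^{m'}$. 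Your approach is more elementary and avoids the huge-item case analysis; the paper's bipartite scheme is more ``local'' (each $X^\triangle$ is related to nearby elements of $\Omega$) and would be the natural template when the small items cannot be factored out cleanly. Quantitatively, your bound is at least as strong: plugging in $B=\tfrac{0.01\ell}{(\log_2(8n))^2}$ and $m'\ge \tfrac{\ell}{8\log_2(8n)}$ gives $|\Omega^\triangle|/|\Omega|\le (8n)^{-10}$, comfortably below the paper's $(8n)^{-4}$, though only the $0.01$ threshold is used downstream.
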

\begin{proofof}{Proof of \cref{lem:omega1bound} using \cref{lem:manyitemssolutions}}
To bound $|\Omega_1|$, our plan is to bound the size of \[\Omega_1 \setminus \Omega^{\triangle} = \Big \{X\subseteq [n] : \sum_{j\in X}W_j\in (T,T+T/\ell]\text{ and } |\{j\in X: W_j> T/\ell\}| \ge  \frac{0.01\ell}{(\log_2(8n))^2}\Big \}\] in terms of $|\Omega|$.
 Consider a random hitting set $H \subseteq \{j\in [n]: W_j > T/\ell\}$ constructed by sampling each large item $j$ independently with probability $\frac{1600(\log_2  n)^2}{\ell}$.  
We say $X \in \Omega_1\setminus \Omega^{\triangle}$ is \emph{good} if $X \cap H \neq \emptyset$,  and let $\Omega_{\text{good}} \subseteq \Omega_1 \setminus \Omega^\triangle$ denote the set of all good $X$. For $X \in \Omega_1 \setminus \Omega^\triangle$, since $|\{j\in X: W_j >T/\ell\}| \ge 0.01\ell /(\log_2(8n))^2$, we have $\Pr[X \text{ is not good}] \le  \big (1 - \frac{1600 (\log_2 n)^2}{\ell}\big )^{0.01\ell/(\log_2(8n))^2} < 1/e$. 
 By Markov's inequality, with probability greater than $\frac{1}{2}$, $\frac{|\Omega_{\text{good}}|}{|\Omega_1\setminus \Omega^\triangle|} \geq 1 - \frac{2}{e}$.
 On the other hand, with probability at least $1/2$, $|H| \leq \frac{3200n (\log_2 n)^2}{\ell}$.
 By probabilistic method, there exists $H \subseteq  \{j\in [n]: W_j > T/\ell\}$ with $|H| \leq \frac{ 3200n (\log_2 n)^2}{\ell}$ such that $|\Omega_{\text{good}}| \geq (1 - \frac{2}{e})|\Omega_1\setminus \Omega^\triangle|$.

 For any $X \in \Omega_{\text{good}}$, if we remove an arbitrary $i \in H \cap X$ from $X$, then the total weight of $X\setminus \{i\}$ is at most $(T+T/\ell) - T/\ell = T$, so $X \setminus \{i\}\in \Omega$. 
 Hence 
 \begin{align*}
     |\Omega| \cdot |H| \geq |\Omega_{\text{good}}| \geq  (1 - \frac{2}{e})|\Omega_1\setminus \Omega^\triangle| .
 \end{align*}
Combining with \cref{lem:manyitemssolutions}, we have 
\begin{align*}
 |\Omega_1| \le |\Omega_1\setminus \Omega^\triangle| + |\Omega^\triangle| \le \frac{|\Omega|\cdot |H|}{1-2/e} + 0.01|\Omega| <  \frac{15000|\Omega|n(\log_2 n)^2}{\ell},&%
\end{align*}
as claimed.
\end{proofof}
Now we prove \cref{lem:manyitemssolutions}.
\begin{proofof}{Proof of~\cref{lem:manyitemssolutions}}
    We first consider the easy case where $\ell\le  100(\log_2 (8n))^2$. The definition of $\Omega^\triangle$ simplifies to $\Omega^\triangle = \big \{X\subseteq [n]: \sum_{j\in X}W_j \in (T,2T] \text{ and } \{j\in X: W_j >T/\ell\} = \emptyset\big \}$.
By \eqref{eqn:smallgrouptotalub}, if $\{j\in X: W_j >T/\ell\} = \emptyset$, then $\sum_{j\in X}W_j< T/2 < T$. Hence, $|\Omega^\triangle| = 0 < 0.01|\Omega|$.
   
   From now on we assume $\ell> 100(\log_2 (8n))^2$.
 Our plan is to construct a nonempty bipartite graph between $\Omega^\triangle$ and $\Omega$, in which the degree of any $X\in \Omega$ is at most $0.01$ times the degree of any $X^\triangle\in\Omega^\triangle$. This would imply $|\Omega^\triangle| \cdot \min_{X^\triangle\in \Omega^\triangle}\deg(X^\triangle)  \le |\Omega| \cdot \max_{X\in \Omega}\deg(X) \le |\Omega| \cdot 0.01\cdot  \min_{X^\triangle\in \Omega^\triangle}\deg(X^\triangle)$ and hence $|\Omega^\triangle|\le 0.01|\Omega|$ as desired.

We call input items $j\in [n]$ with $W_j\in (T/\ell,2T/\ell]$ the \emph{good items}, and call items with $W_j \ge 40(\log_2   (8n))^2 \cdot T/\ell$ the \emph{huge items}.
Note that huge items and good items are disjoint, and they all are large items.

For any $X^\triangle \in \Omega^\triangle$, define $\hat X^\triangle\subseteq [n]$ to be $X^\triangle$ with all huge items removed, and define the neighbors of $X^\triangle$ in our bipartite graph as
\[ N(X^\triangle):= \{ X\in \Omega: \hat X^\triangle\subseteq X, \text{ and $X\setminus \hat X^\triangle$ only contains good items}\}. \]
We first lower bound the degree of $X^\triangle \in \Omega^\triangle$ in this bipartite graph:
\begin{claim}
For all $X^\triangle\in \Omega^\triangle$,   $|N(X^\triangle)|\ge 2^{0.1\ell/\log_2 (8n)}$
\end{claim}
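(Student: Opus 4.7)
The plan is to lower-bound $|N(X^\triangle)|$ by counting augmentations $\hat X^\triangle \cup Y$, where $Y$ is a subset of the \emph{good items} $G := \{j \in [n] : W_j \in (T/\ell, 2T/\ell]\}$ disjoint from $\hat X^\triangle$, chosen small enough that the total weight stays at most $T$. I would split the argument into three steps: (a) upper-bound $W_{\hat X^\triangle}$ so that a generous weight budget remains, (b) lower-bound the pool $G \setminus \hat X^\triangle$, and (c) count the resulting augmentations.

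For step (a), partition $X^\triangle$ into its small items ($W_j \le T/\ell$), its large-but-not-huge items ($T/\ell < W_j < 40(\log_2(8n))^2 \cdot T/\ell$), and its huge items. By~\eqref{eqn:smallgrouptotalub} the small items contribute at most $T/2$ in total. The defining cap on large items in $\Omega^\triangle$ (fewer than $0.01\ell/(\log_2(8n))^2$ of them), combined with each such item weighing less than $40(\log_2(8n))^2 \cdot T/\ell$, bounds the large-but-not-huge contribution by $0.4T$. Since $\hat X^\triangle$ discards the huge items entirely, we conclude $W_{\hat X^\triangle} < 0.9T$, leaving a free budget of at least $0.1T$. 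For step (b), \cref{claim:existsell} gives $|G| \ge \ell/(8\log_2(8n))$, while the good items inside $\hat X^\triangle$ lie among the large items of $X^\triangle$ and so number fewer than $0.01\ell/(\log_2(8n))^2$; a simple subtraction then yields $|G \setminus \hat X^\triangle| \ge \ell/(9\log_2(8n))$.

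For step (c), since every good item has weight at most $2T/\ell$, any $Y \subseteq G \setminus \hat X^\triangle$ with $|Y| \le 0.05\ell$ satisfies $W_Y \le 0.1T$ and therefore $\hat X^\triangle \cup Y \in N(X^\triangle)$. Setting $m_0 := \lceil 0.1\ell/\log_2(8n) \rceil$, the standing assumption $\ell > 100(\log_2(8n))^2$ (in force at this point in the proof of \cref{lem:manyitemssolutions}) yields both $m_0 \le 0.05\ell$ and $m_0 \le \ell/(9\log_2(8n)) \le |G \setminus \hat X^\triangle|$. Fixing any $m_0$-element subset $G_0 \subseteq G \setminus \hat X^\triangle$, every one of its $2^{m_0}$ subsets is a valid choice of $Y$, so $|N(X^\triangle)| \ge 2^{m_0} \ge 2^{0.1\ell/\log_2(8n)}$. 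The delicate part is tuning the constants so that all three steps close simultaneously: the threshold $40(\log_2(8n))^2$ in the definition of ``huge'' is calibrated precisely to cap the large-but-not-huge contribution at $0.4T$ in (a), and the lower bound $\ell > 100(\log_2(8n))^2$ is exactly what is needed to dominate both the constant gap $\tfrac{1}{9} - \tfrac{1}{10}$ and the additive $+1$ from the ceiling in step (c).
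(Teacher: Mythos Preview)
Your proposal is correct and follows essentially the same approach as the paper: bound $W_{\hat X^\triangle}<0.9T$ via the small/large/huge decomposition, use \cref{claim:existsell} together with the $0.01\ell/(\log_2(8n))^2$ cap to find at least $\ell/(10\log_2(8n))$ good items outside $\hat X^\triangle$, and count the $2^{m_0}$ subsets of a fixed $m_0$-element pool of such items (using $\ell>100(\log_2(8n))^2$ to control the ceiling). The only cosmetic difference is that the paper selects its pool from $[n]\setminus X^\triangle$ rather than $G\setminus \hat X^\triangle$, but since good items are never huge these sets of good items coincide.
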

\begin{proof}
   First we bound the total weight of $\hat X^\triangle$.
   Since $\hat X^\triangle\subseteq X^\triangle\in \Omega^\triangle$, by definition of $\Omega^\triangle$, $\hat X^\triangle$ contains fewer than $\frac{0.01\ell}{(\log_2 (8n))^2}$ large items.
   Since $\hat X^\triangle$ does not have huge items, every item in $\hat X^\triangle$ has weight $< 40(\log_2 (8n))^2 \cdot T/\ell$, so the large items in $\hat X^\triangle$ have total weight $\le \frac{0.01\ell}{(\log_2 (8n))^2} \cdot 40(\log_2 (8n))^2 \cdot T/\ell = 0.4T$.
The non-large items in $\hat X^\triangle$ have total weight at most $T/2$ by \eqref{eqn:smallgrouptotalub}. Hence, the total weight of $\hat X^\triangle$ is at most $0.4T+T/2 = 0.9T$.

Since $X^\triangle$ has $< \frac{0.01\ell}{(\log_2 (8n))^2}$ good items (recall that all good items are large), and $[n]$ has more than $\frac{\ell}{8\log_2(8n)}$ good items (by \cref{claim:existsell}), we know $[n]\setminus X^\triangle$ has more than $\frac{\ell}{8\log_2(8n)} - \frac{0.01\ell}{(\log_2 (8n))^2}  >  \frac{\ell}{10\log_2 (8n)}$ good items.  
Pick an arbitrary subset of good items $G\subseteq [n]\setminus X^\triangle$ with $|G| = \lceil \frac{\ell}{10\log_2 (8n)} \rceil$. 
The total weight of $G$ is at most 
$|G| \cdot (2T/\ell) < 0.1T$ (here we used the assumption $\ell> 100(\log_2 (8n))^2$).
Hence, adding any subset of items of $G$ to $\hat X^\triangle$ would yield a knapsack solution of total weight $\le 0.9T + 0.1T = T$. This means $|N(X^\triangle)| \ge 2^{|G|} \ge 2^{0.1\ell/\log_2 (8n)}$ as claimed.
\end{proof}

The other direction is to upper bound the degree of $X\in \Omega$ in this bipartite graph:
\begin{claim}
    For all $X\in \Omega$, $|\{X^\triangle\in \Omega^\triangle: N(X^\triangle)\ni X\}| \le 2^{0.06\ell/ \log_2(8n)}$.
\end{claim}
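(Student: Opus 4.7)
The plan is to define an injective encoding of each $X^\triangle$ with $X \in N(X^\triangle)$ by a pair $(G', H')$ and then to bound the number of valid pairs. Given such $X^\triangle$, set $G' := X \setminus \hat X^\triangle$ and $H' := X^\triangle \setminus \hat X^\triangle$. By the definition of $N(X^\triangle)$ we have $\hat X^\triangle \subseteq X$ and $G'$ consists only of good items; by the definition of $\hat X^\triangle$, the set $H'$ consists only of huge items. Since $X^\triangle = (X \setminus G') \cup H'$, the map $X^\triangle \mapsto (G', H')$ is injective. A useful side remark is that $X$ itself can contain no huge items, for any huge item of $X$ would lie in $X \setminus \hat X^\triangle = G'$ (as $\hat X^\triangle$ contains no huge items), contradicting that $G'$ consists only of good items; in particular $H' \cap X = \emptyset$, so the two components of the encoding do not overlap.

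Next I would extract the two size constraints implied by $X^\triangle \in \Omega^\triangle$. Set $L := 0.01\ell/(\log_2(8n))^2$; by the definition of $\Omega^\triangle$, the number of large items in $X^\triangle$ is strictly less than $L$. Since $H'$ is entirely composed of huge (hence large) items of $X^\triangle$, this immediately yields $|H'| < L$. Moreover, the good items of $X^\triangle$ coincide with the good items of $\hat X^\triangle$, which equal $(\text{good items of } X) \setminus G'$, and these are also large items of $X^\triangle$, so $|\text{good items of } X| - |G'| < L$. In other words, $G'$ is obtained from the set of good items of $X$ by removing fewer than $L$ elements. Because $X \in \Omega$ (so $W_X \le T$) and each good item has weight strictly greater than $T/\ell$, the set $X$ contains at most $\ell$ good items.

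From here it is only a routine binomial estimate: the number of valid $H'$ is at most $\binom{n}{< L} \le n^L$, and the number of valid $G'$ is at most $\sum_{k < L}\binom{\ell}{k} \le (e\ell/L)^L$. Multiplying and taking $\log_2$, and using both $\log_2 n \le \log_2(8n)$ and $L = 0.01\ell/(\log_2(8n))^2$, the first factor contributes at most $0.01\ell/\log_2(8n)$ to the exponent, while the second factor contributes $O(\ell\log\log n/(\log n)^2) = o(\ell/\log_2(8n))$; together they stay below $0.06\ell/\log_2(8n)$ once $n$ exceeds the absolute constant already assumed in the paper. The only real combinatorial content is the injective encoding and the observation that $G'$ must retain all but fewer than $L$ of the good items of $X$; I do not anticipate any serious obstacle beyond that arithmetic.
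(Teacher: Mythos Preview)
Your proof is correct and follows essentially the same two-step decomposition as the paper's: encode $X^\triangle$ by the pair $(\hat X^\triangle,\, X^\triangle\setminus\hat X^\triangle)$ (equivalently your $(G',H')$), bound the number of choices for each piece separately, and multiply. Your bounds are even slightly sharper in two places (you bound $|H'|$ directly via the large-item count constraint rather than the paper's weight argument, and you count the choices for $G'$ inside the at most $\ell$ good items of $X$ rather than inside all of $[n]$), but the overall structure is identical.
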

\begin{proof}
    Fix $X\in \Omega$. 
    We proceed in two steps according to the definition of $N(X^\triangle)$: First, we bound the number of possibilities for $\hat X^\triangle$ such that $\hat X^\triangle \subseteq X $, and $X\setminus \hat X^\triangle$ only contains good items. Second, we bound the number of possibilities for $X^\triangle\setminus \hat X^\triangle$. 
Since $X^\triangle$ is determined by $X^\triangle = \hat X^\triangle \cup  (X^\triangle\setminus \hat X^\triangle)$, the product of these two upper bounds gives the desired upper bound on the number of possibilities for $X^\triangle$.

For the first step, since $\hat X^\triangle\subseteq X^\triangle\in \Omega^\triangle$, by definition of $\Omega^\triangle$, we know $\hat X^\triangle$ can only contain fewer than $\frac{0.01\ell}{(\log_2 (8n))^2}$ good items (recall  that all good items are large).
Since $\hat X^\triangle \subseteq X$, and $X\setminus \hat X^\triangle$ only has good items, the only  possibilities for $\hat X^\triangle$ are sets obtained by removing all but $<\frac{0.01\ell}{(\log_2 (8n))^2}$ good items from $X$.  
The number of such possibilities is the number of ways to choose $<\frac{0.01\ell}{(\log_2 (8n))^2}$ good items that should remain in $\hat X^\triangle$, which is at most $\sum_{0\le x < \frac{0.01\ell}{(\log_2 (8n))^2}}\binom{n}{x} <2^{\frac{0.01\ell}{\log_2(8n)}}$.  

For the second step, by definition of $\hat X^\triangle$, we know $ X^\triangle\setminus  \hat X^\triangle$ can only contain huge items. 
Since $X^\triangle \in \Omega^\triangle$, by definition of $\Omega^\triangle$, the huge items in $ X^\triangle\setminus  \hat X^\triangle$ have total weight at most $2T$. Since each huge item has weight $\ge 40(\log_2 (8n))^2  \cdot T/\ell$, we have  $| X^\triangle\setminus  \hat X^\triangle| \le \frac{2T}{40(\log_2 (8n))^2  \cdot T/\ell} = \frac{\ell}{20 (\log_2 (8n))^2}$. Hence, the number of possibilities for $ X^\triangle\setminus \hat X^\triangle$ is at most $\sum_{0\le x \le \frac{\ell}{20(\log_2(8n))^2 }}\binom{n}{x} < 2^{\frac{\ell}{20 \log_2(8n) }}$.

By multiplying the two bounds, we conclude $|\{X^\triangle\in \Omega^\triangle: N(X^\triangle)\ni X\}| \le 2^{\frac{0.01\ell}{\log_2(8n)}} \cdot 2^{\frac{\ell}{20\log_2(8n)}} = 2^{\frac{0.06\ell}{\log_2(8n)}}$.
\end{proof}

Combining the two claims together, we get
\begin{align*}
     \log_2 \frac{|\Omega|} {|\Omega^\triangle|} \ge \log_2 \frac{\min_{X^\triangle\in \Omega^\triangle}\deg(X^\triangle)}{\max_{X\in \Omega}\deg(X)}  \ge \frac{0.1\ell}{\log_2 (8n)} - \frac{0.06\ell}{\log_2(8n)}> 4\log_2(8n),
\end{align*}
where the last step follows from the assumption $\ell> 100(\log_2 (8n))^2$.
Hence, $|\Omega|/|\Omega^\triangle| \ge (8n)^4> 100$. This finishes the proof of \cref{lem:manyitemssolutions}.
\end{proofof}

\section{Basic algorithmic primitives}
\label{sec:basicalgotools}
In \cref{sec:sumapproxconv}, we state the definition of sum-approximation and several useful lemmas, which will be used in \cref{subsec:aks}. Then, in \cref{subsec:aks}, we define the Approximate Knapsack Sampler data structure, which is an interface that allows us to describe our main algorithm in a modular fashion later. In \cref{subsec:roundmerge} we describe the merging and rounding operations for Approximate Knapsack Samplers.

\subsection{Sum-approximation}
\label{sec:sumapproxconv}
For two functions $f,g\colon \N \to \R$, define their \emph{convolution} \[(f\star g)(x) = \sum_{0\le y\le x} f(y)g(x-y).\]
For a function $f\colon \N \to \R_{\ge 0}$, define $f^{\le}(x) = \sum_{y\le x} f(y)$. Note that $f^\le\colon \N \to \R_{\ge 0}$ is monotone non-decreasing.
 For two functions $f,\shorttilde f\colon \N \to \R_{\ge 0}$ and $\delta \ge 0$, we say $\shorttilde f$ is a \emph{$(1\pm\delta)$-sum-approximation} of $f$, if for every $x\in \N$, 
\[ (1-\delta)  f^\le(x)\le  \shorttilde f^\le(x) \le (1+\delta) f^\le(x).\]
This notion is weaker than the standard pointwise $(1\pm \delta)$-approximation which would require $(1-\delta)  f(x)\le  \shorttilde f(x) \le (1+\delta) f(x)$ for all $x\in \N$.

The sum-approximation factor grows multiplicatively under convolution:
\begin{lemma}[\cite{GawrychowskiMW18}]\label{prop:apx}
If for $i\in \{1,2\}$, $\shorttilde f_i$ is a $(1\pm\delta_i)$-sum-approximation of $f_i$,  then $\shorttilde f_1\star \shorttilde f_2$ is a $(1\pm \delta_1)(1 \pm \delta_2)$-sum-approximation of $f_1\star f_2$. 
\end{lemma}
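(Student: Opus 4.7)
The plan is to reduce the sum-approximation claim to a pointwise inequality on prefix sums. By swapping the order of summation, one obtains the identity
\[ (f_1\star f_2)^{\le}(x) \;=\; \sum_{y=0}^{x} f_1(y)\cdot f_2^{\le}(x-y), \]
and analogously for $\shorttilde f_1\star \shorttilde f_2$. So the task reduces to sandwiching $\sum_{y=0}^{x} \shorttilde f_1(y)\cdot \shorttilde f_2^{\le}(x-y)$ between $(1-\delta_1)(1-\delta_2)$ and $(1+\delta_1)(1+\delta_2)$ times $\sum_{y=0}^{x} f_1(y)\cdot f_2^{\le}(x-y)$, for each fixed $x$.

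The $(1\pm\delta_2)$ factor is immediate: by hypothesis $\shorttilde f_2^{\le}(x-y) \in [(1-\delta_2) f_2^{\le}(x-y),\, (1+\delta_2) f_2^{\le}(x-y)]$ pointwise, and since $\shorttilde f_1(y)\ge 0$, this bound can be pulled out of the sum term by term. Absorbing the $(1\pm\delta_1)$ factor is trickier, because the hypothesis on $\shorttilde f_1$ controls only its prefix sums $\shorttilde f_1^{\le}$, not the individual values $\shorttilde f_1(y)$. To handle this, I will apply Abel summation (summation by parts) to rewrite $\sum_{y=0}^{x} \shorttilde f_1(y)\cdot f_2^{\le}(x-y)$ as
\[ \shorttilde f_1^{\le}(x)\cdot f_2(0) \;+\; \sum_{y=0}^{x-1} \shorttilde f_1^{\le}(y)\cdot f_2(x-y). \]
Every coefficient of a $\shorttilde f_1^{\le}(\cdot)$ term here is nonnegative (because $f_2\ge 0$, equivalently $f_2^{\le}$ is monotone non-decreasing), so the pointwise bound $\shorttilde f_1^{\le}(y) \in [(1-\delta_1) f_1^{\le}(y),\, (1+\delta_1) f_1^{\le}(y)]$ applies uniformly and factors $(1\pm\delta_1)$ out. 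Running Abel summation in reverse then recovers $(1\pm\delta_1)(f_1\star f_2)^{\le}(x)$, completing the sandwich; the lower-bound direction is obtained by the symmetric argument with each $(1+\cdot)$ replaced by $(1-\cdot)$.

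I do not expect a serious obstacle. The only delicate point is making sure that summation by parts produces only nonnegative coefficients so that the two prefix-sum bounds can be applied uniformly in the same direction, and this is automatic from the built-in nonnegativity of $f_2$ in the definition of sum-approximation. Once this is arranged, the two approximation factors simply multiply.
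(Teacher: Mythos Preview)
Your argument is correct. The identity $(f_1\star f_2)^{\le}(x)=\sum_{y=0}^{x}f_1(y)\,f_2^{\le}(x-y)$ holds by swapping the order of summation, and your two-step sandwich (first pull out $(1\pm\delta_2)$ using nonnegativity of $\shorttilde f_1$, then use Abel summation so that only the prefix sums $\shorttilde f_1^{\le}$ appear, each with a nonnegative coefficient $f_2(\cdot)$) goes through cleanly. The reverse Abel step is just the same identity applied to $f_1$ in place of $\shorttilde f_1$, so the $(1\pm\delta_1)$ factor emerges exactly as you say.

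As for comparison with the paper: the paper does not prove this lemma at all---it is stated with a citation to \cite{GawrychowskiMW18} and used as a black box. So there is no ``paper's own proof'' to compare against; your self-contained argument is a correct proof of the cited statement.
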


We will use $(1\pm\delta)$-sum-approximation in the regime $\delta \approx 1/2^{\polylog(n)}$. This is in contrast to the work of \cite{GawrychowskiMW18} where $\delta$ is a fixed inverse polynomial in $n$ (and the time dependence on $1/\delta$ is polynomial).

A key subroutine in our algorithm is to approximately compute the convolution of two sequences whose values can be as large as $2^M$.
By allowing sum-approximation, we can achieve a nearly $\sqrt{M}$ factor improvement over the standard FFT algorithm (\Cref{lemma-FFT}):
\begin{theorem}\label{lem:sum-approx-conv}
Given any $\delta > 0$ and two functions $f,g\colon \{0,1,\dots,n\} \to \{0\}\cup [1,2^M]$ (assume $f(x)=g(x)=0$ for all $x>n$), where $f(x),g(x)$ are given in $O(\log(1/\delta)+\log M)$-bit floating-point representation,  
we can compute a function $h\colon \{0,1,\dots,2n\} \to \{0\}\cup \R_{\ge 1}$ by a randomized algorithm in $\widetilde O(n\sqrt{M})$ time such that 
$h$ is a $(1\pm\delta)$-sum-approximation of $f\star g$  with probability at least $1 - \frac{1}{\poly(n)}$, where $\widetilde{O}(\cdot)$ hides a $\polylog(n,M,1/\delta)$ factor.
\end{theorem}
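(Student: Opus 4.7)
The plan is to reduce sum-approximation of $f\star g$ to a bounded monotone $(\max,+)$-convolution augmented with weighted witness counting, following the blueprint sketched in \cref{sec:over-sum}.

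\textbf{Step 1 (reduction to a monotone convolution).} Let $F=f^\le$ and $G=g^\le$; both are monotone nondecreasing with values in $\{0\}\cup[1,(n+1)2^M]$. Interchanging the order of summation gives $(F\star G)(z)=\sum_{z'\le z}H(z')$, where $H:=(f\star g)^\le$ is the function the sum-approximation must ultimately approximate. Hence it suffices to produce $\widetilde P$ with $\widetilde P(z)\in[(1-\delta')P(z),(1+\delta')P(z)]$ for every $z$, where $P:=F\star G$ and $\delta'=\Theta(\delta/n)$: applying prefix-max, differencing, prefix-max, and differencing in that order produces a nonnegative $\widetilde h$ whose prefix sums approximate $H$ within $(1\pm\delta)$. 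The factor-$n$ loss in precision comes from the worst-case ratio $P(z)/H(z)\le z+1\le 2n+1$, and is absorbed into the hidden polylog factor since the underlying algorithm depends only polylogarithmically on $1/\delta'$. Any resulting $\widetilde h(z)\in(0,1)$ can be snapped to $0$ or $1$ without violating the sum-approximation.

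\textbf{Step 2 (log-mantissa decomposition and top-layer witness sums).} Write $F(y)=\hat F(y)\cdot 2^{e_F(y)}$ with $\hat F(y)\in[1,2)$ and $e_F(y)\in\{0,1,\dots,O(M+\log n)\}$ a monotone nondecreasing integer sequence; decompose $G$ similarly. Then $P(z)=\sum_{y+w=z}\hat F(y)\hat G(w)\,2^{e_F(y)+e_G(w)}$. Writing $E(z):=\max_{y+w=z}(e_F(y)+e_G(w))$, only the top $K=O(\log(n/\delta'))$ exponent layers contribute more than a $1/\poly(n/\delta)$ fraction to $P(z)$. The array $E$ is a bounded monotone $(\max,+)$-convolution of $e_F,e_G$ and is computable in $\widetilde O(n\sqrt M)$ time by \cite{ChiDX022,bringmann2024even}. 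For each layer $k\in\{0,\dots,K\}$ we need the weighted witness sum
\begin{equation*}
W_k(z)=\sum_{\substack{y+w=z\\ e_F(y)+e_G(w)=E(z)-k}}\hat F(y)\hat G(w),
\end{equation*}
after which $\widetilde P(z):=\sum_{k=0}^{K}2^{E(z)-k}W_k(z)$ has the required $(1\pm\delta')$ quality.

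\textbf{Step 3 (weighted witness-counting convolution).} The principal obstacle is computing the $W_k(z)$'s within the same $\widetilde O(n\sqrt M)$ budget, which requires a white-box modification of the CDXZ/BDP algorithm. Their algorithm internally combines a monotone-staircase prefix trick, a dense/sparse case split, and FFT on short exponent-aligned windows; each of these components should extend to accumulate, at every output index $z$ and every small offset $k\le K$, the sum of mantissa products $\hat F(y)\hat G(w)$ over witnesses at that offset, at the cost of $O(K)$ extra auxiliary arrays maintained alongside the standard ones. This modification is the purpose of \cref{sec:max+}. Since $K=\polylog(n/\delta)$ and each layer is handled in $\widetilde O(n\sqrt M)$ time, the overall running time remains $\widetilde O(n\sqrt M)$, with the $O(\log(1/\delta)+\log M)$-bit floating-point arithmetic on $\hat F,\hat G$ absorbed into the polylog factor. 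The three contributions to the error---truncation beyond the top $K$ layers, mantissa arithmetic precision, and the differencing overhead calibrated by $\delta'=\Theta(\delta/n)$---combine to the claimed $(1\pm\delta)$-sum-approximation, with the $1/\poly(n)$ failure probability inherited from the randomized bounded monotone $(\max,+)$-convolution.
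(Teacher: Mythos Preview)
Your overall strategy is in the right spirit, but it diverges from the paper's proof in a way that leaves a real gap at Step~3, and it misses the key trick that makes the paper's argument go through cleanly.

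\textbf{What the paper actually does.} The paper does \emph{not} pass to $F=f^{\le},G=g^{\le}$, does \emph{not} need a pointwise approximation of $F\star G$, and does \emph{not} need witness counts at $K$ different layers. Instead it picks a large base $D\ge (2n+1)^2/\delta$, writes the exponent $A^*[x]$ so that $f(x)\in[D^{A^*[x]},D^{A^*[x]+1})$, and decomposes $f$ (and $g$) into three pieces according to $A^*[x]\bmod 3$. On each of the nine subproblems $f_i\star g_j$, all nonzero exponents are multiples of~$3$, so any pair $(x,y)$ that is \emph{not} a $(\max,+)$-witness has exponent sum at most $C[z]-3$; summing over at most $(2n+1)^2$ such pairs contributes at most $\delta\cdot D^{C[z]}\le\delta\cdot (f_i\star g_j)^{\le}(z)$. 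Hence a \emph{single} call to the weighted-witness-count version of bounded monotone $(\max,+)$-convolution (Theorem~6.1) suffices, and the output $h(z)=D^{C[z]}w[z]$ is shown directly to be a $(1\pm\delta)$-sum-approximation of $f_i\star g_j$. Monotonicity of the $(\max,+)$ input is obtained by taking the prefix maximum of the exponent array $A_0^*$, \emph{not} of $f$ itself; the weights $u[x]$ are zeroed wherever $A_0^*[x]<A[x]$, and a short argument (the paper's Claim~6.5) shows this zeroing is harmless for the \emph{prefix-sum} quantity.

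\textbf{Where your proposal has a gap.} Your Step~3 asserts that the Chi--Duan--Xie--Zhang / Bringmann--D\"urr--Polak machinery ``should extend'' to produce the weighted witness sums $W_k(z)$ for all $k\le K$, not just $k=0$. This is precisely what the paper's Theorem~6.1 does \emph{not} give you, and \cref{sec:max+} does not do it either: Lemma~6.2 extracts the false-positive segments only for the max layer, and the correction step in the proof of Theorem~6.1 relies on that. Extending to sub-maximal layers would require re-deriving the false-positive segment bound and re-extracting the corresponding segments from the BDP internals for each of the $K$ layers; this may be doable, but it is not the trivial bookkeeping you suggest, and you have not justified it. The paper sidesteps the issue entirely with the base-$D$/mod-$3$ trick.

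\textbf{Smaller points.} Your Step~1 detour through $P=F\star G$ with $\delta'=\Theta(\delta/n)$ is correct but unnecessary once you have the base-$D$ idea. Also, your snapping of values in $(0,1)$ to $\{0,1\}$ can break the sum-approximation (consider $H\equiv 1$, where many tiny positive increments could be snapped away or inflated); the paper's output $h(z)=D^{C[z]}w[z]$ lands in $\{0\}\cup\R_{\ge 1}$ automatically because the weights $u,v$ lie in $\{0\}\cup[1,D)$.
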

The proof of \cref{lem:sum-approx-conv} is given in \cref{sec:sum-app}.

\subsection{Approximate knapsack sampler}
\label{subsec:aks}
In this section, we introduce the main data structure, \emph{Approximate Knapsack Sampler} (or \emph{sampler} for short) which serves as a convenient interface that encapsulates (and generalizes) the behavior of Dyer's randomized rounding framework.
In particular, the definition includes a scale parameter $S$, an (implicit) rounded weight function $w$, and an approximate counting function $\hat f$, whose roles have been explained in \cref{sec:over-bounded}.
 It also includes parameters such as the length $L$ of the array representing the function $\hat f$, and the subgaussianity parameter $\sigma^2$ which controls the rounding error.

Recall that $(W_i)_{i=1}^n$ are the weights of the input knapsack instance, and for an item subset $I\subseteq [n]$ we use $W_I$ to denote $\sum_{i\in I}W_i$.

For easier understanding, we first define a basic version of Approximate Knapsack Sampler, and then extend it to the full version. 
\begin{definition}[Basic Approximate Knapsack Sampler]\label{def:basic}
   Let $I$ be a set of input items, and $\Omega \subseteq 2^{I}$ be a collection of subsets of items.
   Define a \emph{Basic Approximate Knapsack Sampler for $\Omega$} with parameters  $S\in \Z^+, L\in \Z^+$, $\sigma^2 \ge 0$, $\delta\in [0,1]$, and $\caT_{c},\caT_q$, as a randomized data structure with the following behavior.
    With at least $1-\delta$ success probability (over the randomness of the construction stage): 
   \begin{itemize}
    \item \textbf{Construction stage}: 
    It tosses random coins and implicitly defines a random mapping $w\colon \Omega \to \{0,S,2S,\ldots,LS\}=S\cdot \{0,1,\ldots,L\} $, such that for every $X \in \Omega$, $w(X)$ is a $\sigma^2$-subgaussian random variable with mean $W_X$.
   
   It then outputs a $(1 \pm \delta)$-sum-approximation $\hat{f}$ of the function $f\colon \{0,S,2S,\dots,LS\} \to  \N$ defined as 
$f(t\cdot S) := |\{X\in \Omega: w(X) = t S\}|$, and each value of $\hat{f}$ is in  $\{0\}\cup \R_{\ge 1}$ and is represented by an $O(\log(\frac{1}{\delta}) + \log \log |\Omega|)$-bit floating-point number. The whole construction stage takes $\caT_c$ time.
    \item \textbf{Query stage}: Given a query $T\in \N^+$, it tosses fresh random coins and in $\caT_q$ time samples a random pair $(X, w(X))$ where $X\in \Omega$, such that $X$ follows the uniform distribution on $\{X\in \Omega: w(X) \le T\}$ up to at most $2\delta$ error in TV distance.
   \end{itemize}
   \label{defn:knapsacksampler}
   \end{definition}

The basic version in \cref{defn:knapsacksampler} is sufficient for a wide class of input knapsack instances (including the bounded-ratio case, see \cref{remark:boundedratio}), %
but to solve general instances we need a more generalized version which is allowed to return a much smaller ``partial sample'' during the query stage.
The purpose here is to avoid spending query time outputting those tiny input items whose weights are rounded to zero, as discussed in \cref{sec:overzero} in the technical overview. 
   \begin{definition}[Approximate Knapsack Sampler, full version]
       \label{defn:knapsacksamplertiny}
Apply the following modifications to \cref{defn:knapsacksampler}:  

\begin{itemize}
    \item \textbf{Construction stage}:  It should additionally output a random subset $I_0\subseteq I$ of items called \emph{tiny items}, which satisfies $X_0 \cup (X\setminus I_0)\in \Omega$ for all $X\in \Omega,X_0 \subseteq I_0$, and $w(X)=w(X\setminus I_0)$ for all $X\in \Omega$ (in particular, $w(X_0)=w(\emptyset) = 0$ for all $X_0\subseteq I_0$).
        \item \textbf{Query stage}: 
Given a query $T\in \N^+$, it tosses fresh random coins and in $\caT_q$ time samples a random pair $(X_+, w(X_+))$ where $X_+\in \Omega \cap 2^{I\setminus I_0}$, such that $X_+$ follows the uniform distribution on $\{X_+\in \Omega \cap 2^{I\setminus I_0}: w(X_+) \le T\}$ up to at most $2\delta$ error in TV distance.
\end{itemize}

   \end{definition}
   \begin{remark}
       \begin{itemize}
        \item If $I_0=\emptyset$ always holds, then \cref{defn:knapsacksamplertiny} is the same as the basic version in \cref{defn:knapsacksampler}.
               \item In the query stage, if we additionally sample $X_0\subseteq I_0$ uniformly and independently, then $X_+\cup X_0$ follows the uniform distribution on $\{X\in \Omega: w(X)\le T\}$ up to at most $2\delta$ error in TV distance.
       \end{itemize}
   \end{remark}

\subsection{Rounding and merging samplers}
\label{subsec:roundmerge}
We present two primitives for manipulating Approximate Knapsack Samplers: rounding (\cref{lemma:roundingsampler}) and merging (\cref{lemma:mergingsampler}). In this section, we use the notation $\widetilde{O}(\cdot)$ to hide a factor of
\begin{align*}
    \polylog\left( |I|, \log|\Omega|, L, \frac{1}{\delta}\right) \cdot \log(T)
\end{align*}
where $I,\Omega,L,\delta$ are parameters of  Approximate Knapsack Sampler and $T$ is the maximum capacity for the query stage. 
\Cref{lemma:mergingsampler} has two approximate samplers and $\widetilde{O}(\cdot)$ hides the above $\polylog$ factor for parameters of both two samplers.

We will focus on proving \cref{lemma:roundingsampler,lemma:mergingsampler} for basic Approximate Knapsack Samplers. For the full version of Approximate Knapsack Samplers, the proofs only need minor modifications, which we will discuss at the end of this section.

The following lemma captures the action of performing another more aggressive rounding on top of an existing approximate knapsack sampler.
   \begin{lemma}[Rounding]
       \label{lemma:roundingsampler}
   Let $S\le S'$ be positive integers. Suppose there is an Approximate Knapsack Sampler for $\Omega\subseteq 2^I$ with parameters $ S, L, \sigma^2,\delta$, and $\caT_q$.

   Then, there is an Approximate Knapsack Sampler for the same $\Omega$ with parameters $S', \lceil \frac{LS}{S'}\rceil, \sigma^2 + (S')^2/4, \delta$ with $\Delta\caT_c'= \widetilde{O}(L), \caT_q' = \caT_q + \widetilde{O}(1)$ time,  where $\Delta T_c'$ denotes the incremental cost to modify the input sampler to the new sampler. %
   \end{lemma}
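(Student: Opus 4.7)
The plan is to construct the new sampler by applying one additional layer of unbiased randomized rounding from scale $S$ to scale $S'$ on top of the given sampler. The crucial design choice is to make this rounding map \emph{monotone}, so that the sum-approximation guarantee of $\hat{f}$ transports directly to the new counting function. Concretely, I would sample a single threshold $\tau \in [0, S')$ uniformly and define $\alpha : S \cdot \{0,1,\ldots,L\} \to S' \cdot \{0,1,\ldots,\lceil LS/S'\rceil\}$ by
\begin{equation*}
\alpha(v) \;=\; \begin{cases} S' \lfloor v/S' \rfloor & \text{if } v \bmod S' < \tau, \\ S' \lceil v/S' \rceil & \text{otherwise.} \end{cases}
\end{equation*}
A direct calculation gives $\mathbb{E}_{\tau}[\alpha(v)] = v$, and within any length-$S'$ bucket of multiples of $S$ the smaller element rounds down whenever the larger one does, so $\alpha$ is non-decreasing.

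The new implicit weight function is $w'(X) := \alpha(w(X))$. Since $\tau$ is drawn independently of the original sampler's construction coins, applying \cref{lem-subg} conditioned on $w(X)$ gives that $w'(X)$ is $(\sigma^2 + (S')^2/4)$-subgaussian with mean $W_X$, matching the claimed parameters. For the approximate counting function, I would set $\hat{f}'(kS') := \sum_{t : \alpha(tS) = kS'} \hat{f}(tS)$, computable in a single linear pass over $t = 0, 1, \ldots, L$ in $\widetilde{O}(L)$ time in the word-RAM model. Because $\alpha$ is monotone, for every $k$ the set $\{t : \alpha(tS) \leq kS'\}$ is a prefix $\{0, 1, \ldots, t^{*}\}$ for some $t^{*}$, and therefore
\begin{equation*}
(\hat{f}')^{\leq}(kS') \;=\; \hat{f}^{\leq}(t^{*}S) \;\in\; (1 \pm \delta)\, f^{\leq}(t^{*}S) \;=\; (1 \pm \delta)\, (f')^{\leq}(kS'),
\end{equation*}
where $f'(kS') := |\{X \in \Omega : w'(X) = kS'\}|$. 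If I had instead picked a non-monotone randomized rounding, this step would break because $\hat{f}$ only approximates prefix sums of $f$, not individual values; enforcing monotonicity through the single shared threshold $\tau$ is the main subtlety of the proof. The construction's success probability stays at $\geq 1-\delta$ since the transformation $\hat{f} \mapsto \hat{f}'$ is deterministic given $\tau$ and a successful original construction.

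For the query stage, given capacity $T$, I would set $\hat{T} := S' \lfloor T/S' \rfloor$, binary-search for the largest $t^{*}$ with $\alpha(t^{*}S) \leq \hat{T}$ in $O(\log L) = \widetilde{O}(1)$ time, call the original sampler with capacity $t^{*}S$ to obtain $(X, w(X))$, and return $(X, \alpha(w(X)))$. Monotonicity of $\alpha$ gives $\{X : w(X) \leq t^{*}S\} = \{X : w'(X) \leq T\}$, so the $2\delta$ TV-error bound is directly inherited from the original sampler. Floating-point accumulation of $\hat{f}'$ values is absorbed in the same $O(\log(1/\delta) + \log\log|\Omega|)$-bit precision budget, and the extension to the full Approximate Knapsack Sampler (\cref{defn:knapsacksamplertiny}) is immediate since $\alpha$ only acts on rounded weights and the tiny-item set $I_0$ can be carried over unchanged.
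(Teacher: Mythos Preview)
Your proof is correct and follows essentially the same approach as the paper: compose the given sampler with a monotone unbiased rounding map $\alpha$, push $\hat f$ forward along $\alpha$, and use monotonicity to transport the prefix-sum guarantee. The one cosmetic difference is that the paper samples an independent threshold $r_y$ for each bucket $[yS',(y{+}1)S')$, whereas you use a single global threshold $\tau$; both yield a monotone $\alpha$ with the correct conditional law needed for \cref{lem-subg}, and the sampler definition only requires the subgaussian bound per fixed $X$, so the extra correlation your version introduces across values is harmless.
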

   \begin{proof}
We first construct an unbiased random mapping $\alpha$ that rounds every weight $xS$ where $x\in \{0,1,\dots, L\}$ to a new weight that is a multiple of $S'$:
For each integer $0\le y\le  \lceil \frac{LS}{S'} \rceil$, we independently sample an integer 
 $r_y \in [yS',(y+1)S')$ uniformly at random, and we jointly round all $xS \in [yS',(y+1)S')$ via the same random threshold $r_y$ by mapping $xS$ to
\begin{align*}%
 \alpha(xS) := \begin{cases}
    yS' &\text{ if } xS \le  r_y;\\
    (y+1)S' &\text{ if } xS>r_y.
    \end{cases}
\end{align*}
Observe that $\Ex_{r_y}[\alpha(xS)] =xS$, and $\alpha(\cdot)$ is monotone non-decreasing.
Let $w\colon \Omega \to S\cdot \{0,1,\dots,L\}$ be the (implicit) weight mapping of the input  Approximate Knapsack Sampler.
Then, let $L' \defeq  \left\lceil\frac{LS}{S'}\right\rceil$, and define the (implicit) weight mapping $w'\colon \Omega \to S'\cdot \{0,1,\dots,L'\}$ 
for the new sampler as
\begin{align*}
 w'(X) = \alpha(w(X)).
\end{align*}
For any $X\in \Omega$, random variables $w(X)$ and $w'(X)$ fit the condition in \Cref{lem-subg}. Hence $\Ex[w'(X)] = \Ex[w(X)] = W_X$ and $w'(X)$ is $(\sigma^2 + (S')^2/4)$-subgaussian.

Let $\hat{f}$ be the approximate counting function of the input Approximate Knapsack Sampler. 
For the new Approximate Knapsack Sampler (with scale parameter $S'$ and length parameter $L'$),  we compute the new approximate counting function $\hat{f}'\colon S'\cdot \{0,1,\ldots,L'\} \to  \{0\}\cup \R_{\ge 1}$ defined as
\begin{align*}
 \forall\, 0 \leq y \leq L', \quad \hat{f}'(yS')  :=  \sum_{\substack{ x \in \{0,1,\ldots,L\}:\\ \alpha(xS)=yS'}}\hat{f}(xS).
\end{align*}
We now show $\hat{f}'$ is a $(1 \pm \delta)$-sum-approximation of $f'(yS')=|\{X \in \Omega: w'(X) = yS'\}|$. By definition of $\hat f'$ and $\alpha$, we have 
\begin{align*}
 (\hat{f}')^{\le}(yS')  =  \sum_{\substack{ x \in \{0,1,\ldots,L\}:\\ \alpha(xS)\le yS'}}\hat{f}(xS) = \sum_{\substack{ x \in \{0,1,\ldots,L\}:\\ xS\le r_y}}\hat{f}(xS) = \hat{f}^{\le}(r_y).
\end{align*}
Since $\hat{f}$ is a $(1 \pm \delta)$-sum-approximation of $f(xS)=|\{X \in \Omega: w(X) = xS\}|$, we have 
\[\hat f^{\le}(r_y)\in (1\pm \delta)f^{\le}(r_y) = (1\pm \delta) |\{X\in \Omega: w(X) \le r_y\}|.\]
By definition of $w'$ and $\alpha$, we have $|\{X\in \Omega: w(X) \le r_y\}| = |\{X\in \Omega: w'(X) \le yS'\}| = (f')^{\le}(yS')$. Chaining together gives $(\hat{f}')^{\le}(yS')  \in (1\pm \delta)(f')^{\le}(yS')$ as desired.

The incremental cost to construct the new sampler is the time for computing the function $\alpha$ %
together with the time for computing $\hat{f}'$.
Hence, the total incremental cost is $\Delta\caT_c' = \widetilde{O}(L)$.

We next show how to generate samples for the new Approximate Knapsack Sampler.
Given a query capacity $T \in \N$, our goal is to sample almost uniformly from $\{X\in \Omega: w'(X)\le T\}$. By definition of $w'$ and $\alpha$, this set is the same as 
$\{X\in \Omega: w(X)\le xS\}$, where $x$ is the largest
integer in $\{0,1,\ldots,L\}$ such that $\alpha(xS) \leq T$, which can be found by binary search.
We query the input Approximate Knapsack Sampler\footnote{In the above construction stage, we only computed some additional data but did not change anything about the input sampler. Hence, we can still query in the input sampler.} with capacity $xS$ to obtain a pair $(X,w(X))$. Then let the new sampler return $(X, w'(X))$, where $w'(X)=\alpha(w(X))$.
Hence, for the new data structure, the query stage takes time $\caT'_q = \caT_q + \widetilde{O}(1)$ due to binary search.

The success probability for the input sampler is at least $1 - \delta$, so the success probability for the new sampler is also at least $1 - \delta$.
\end{proof}

The following lemma describes merging two existing approximate samplers for $\Omega_1 \subseteq 2^{I_1}$ and $\Omega_2 \subseteq 2^{I_2}$, where $I_1,I_2$ are disjoint sets of items.
This is the only part of the paper that uses the sum-approximation convolution algorithm (\cref{lem:sum-approx-conv}).
Recall that $\Omega_1 \times \Omega_2$ denotes $\{X_1 \cup X_2 \mid X_1 \in \Omega_1, X_2 \in \Omega_2\}$.
\begin{lemma}[Merging]
    \label{lemma:mergingsampler}
    Suppose there are two Approximate Knapsack Samplers on $\Omega_1\subseteq 2^{I_1}$ and $\Omega_2\subseteq 2^{I_2}$ respectively (where $I_1\cap I_2 =\emptyset$) with the same scale parameter $S$, and the $i$-th sampler ($i\in \{1,2\}$) has parameters $S,L_i,\sigma_i^2, \delta_i$, $\caT_{i,q}$ 
   (where $\delta_i < 1/10$).

  Then, %
  there is an Approximate Knapsack Sampler for $\Omega_1\times \Omega_2$ with parameters $S,L_1+L_2,\sigma_1^2+\sigma_2^2$, $4\delta_1 + 4\delta_2$, and the incremental cost to modify two input samplers to the new sampler
 \[\Delta \caT_c = \widetilde{O}\left((L_1+L_2)\sqrt{\log(|\Omega_1||\Omega_2|)}\right),\]
 and query time
 \[
\caT_q = \caT_{1,q} + \caT_{2,q} + \widetilde{O}(L_1+L_2).\]
\end{lemma}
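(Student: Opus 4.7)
The plan is to build the merged sampler in three stages: (i) define its weight function as the sum of the children's; (ii) produce its approximate counting function via the sum-approximation convolution of \Cref{lem:sum-approx-conv}; and (iii) design a one-pass query procedure that avoids conditioning on exact weights (which the subsamplers do not support). For (i), I set $w(X) := w_1(X\cap I_1) + w_2(X\cap I_2)$; since $w_1$ and $w_2$ are built from independent random coins, $w(X)$ is $(\sigma_1^2+\sigma_2^2)$-subgaussian with mean $W_X$ and lives in $S\cdot\{0,1,\ldots,L_1+L_2\}$, matching the required parameters. For (ii), the true counting function is $f=f_1\star f_2$, so I would invoke \Cref{lem:sum-approx-conv} on $\hat f_1,\hat f_2$ with error parameter $\delta'=2(\delta_1+\delta_2)$, yielding $\hat f$ as a $(1\pm\delta')$-sum-approximation of $\hat f_1\star\hat f_2$; chaining with \Cref{prop:apx} shows that $\hat f$ is a $(1\pm 4(\delta_1+\delta_2))$-sum-approximation of $f$ (the cross terms stay under control because $\delta_i<1/10$). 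Each value of $\hat f_i$ is bounded by $|\Omega_i|$, so the convolution runs in $\widetilde O\bigl((L_1+L_2)\sqrt{\log(|\Omega_1||\Omega_2|)}\bigr)$ time, as needed.

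The main obstacle is (iii). Given a query $T$, let $k=\lfloor T/S\rfloor$, and precompute in $O(L_1+L_2)$ time the array
\[\tilde C(j^*) \;:=\; \sum_{j=0}^{j^*} \hat f_1^{\le}(jS)\,\bigl[\hat f_2^{\le}((k-j)S)-\hat f_2^{\le}((k-j-1)S)\bigr],\]
which is monotone non-decreasing (each summand is a product of two non-negative quantities). Sample $j^*$ by drawing a uniform $u\in[0,\tilde C(k)]$ and binary-searching for the smallest $j^*$ with $\tilde C(j^*)\ge u$; then query the $\Omega_1$-subsampler at capacity $j^*S$ to obtain $(X_1,w_1(X_1))$, and query the $\Omega_2$-subsampler at capacity $kS-w_1(X_1)$ to obtain $X_2$. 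Under \emph{exact} counts, the marginal probability of a fixed $x_1$ is computed by summing over $j^*\in[w_1(x_1)/S,k]$; the differences $\hat f_2^{\le}((k-j^*)S)-\hat f_2^{\le}((k-j^*-1)S)$ telescope cleanly to $\hat f_2^{\le}(kS-w_1(x_1))$, and a direct calculation gives marginal probability $f_2^{\le}(kS-w_1(x_1))/|\Omega'|$, which is exactly the correct marginal for uniform sampling on $\Omega':=\{X:w(X)\le kS\}$; conditional on $X_1$, the sampler-$2$ query then delivers $X_2$ from the correct uniform conditional.

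For the approximation error, I would rely on two identities. First, an Abel summation shows $\tilde C(k)=\sum_{j_1}\hat f_1(j_1S)\hat f_2^{\le}((k-j_1)S)=(\hat f_1\star\hat f_2)^{\le}(kS)$, which by \Cref{prop:apx} lies in $(1\pm\delta_1)(1\pm\delta_2)\,|\Omega'|$. Second, the telescoping in the marginal calculation eliminates all pointwise uses of $\hat f_i$, leaving only the cumulative values $\hat f_1^{\le}(j^*S)$ and $\hat f_2^{\le}(kS-w_1(x_1))$, each multiplicatively $(1\pm\delta_i)$-close to the corresponding $f_i^{\le}$. Together these place the produced $X_1$ marginal within a $(1\pm O(\delta_1+\delta_2))$ factor of the ideal marginal, giving marginal TV distance bounded by a small constant times $\delta_1+\delta_2$. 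Adding the inherited $2\delta_1$ TV error from the single sampler-$1$ query and the $2\delta_2$ TV error from the sampler-$2$ conditional, the total TV distance from uniform on $\Omega'$ is at most $8(\delta_1+\delta_2)=2\cdot 4(\delta_1+\delta_2)$, matching the lemma's error parameter.

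Finally, the incremental construction cost is dominated by the one convolution call, $\widetilde O\bigl((L_1+L_2)\sqrt{\log(|\Omega_1||\Omega_2|)}\bigr)$, and the query cost is $\caT_{1,q}+\caT_{2,q}+O(L_1+L_2)+O(\log(L_1+L_2))$, as claimed. For the full Approximate Knapsack Sampler in \Cref{defn:knapsacksamplertiny}, I would set $I_0:=I_{0,1}\cup I_{0,2}$ and return $X_+:=X_{+,1}\cup X_{+,2}$ from the two child queries: since each $w_i$ already ignores its own tiny items, the identity $w(X)=w(X\setminus I_0)$ and the entire marginal analysis transfer verbatim with $\Omega_i$ replaced by $\Omega_i\cap 2^{I_i\setminus I_{0,i}}$. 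The conceptual hard step throughout is the Abel-summation identity for $\tilde C(k)$: it is precisely what lets the marginal analysis avoid any pointwise approximation of $\hat f_i$, which sum-approximation does not guarantee and which would otherwise block this merging step.
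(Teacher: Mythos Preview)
Your approach is correct and essentially mirrors the paper's, with the roles of the two subsamplers symmetrically swapped: the paper samples an index $y$ with weight $\hat f_1(yS)\,\hat f_2^{\le}((x-y)S)$, queries sampler~2 first at $(x-y)S$, then sampler~1 at $xS-w_2(X_2)$; your weight $\hat f_1^{\le}(j^*S)\bigl[\hat f_2^{\le}((k-j^*)S)-\hat f_2^{\le}((k-j^*-1)S)\bigr]=\hat f_1^{\le}(j^*S)\,\hat f_2((k-j^*)S)$ is the same object after the swap, and your Abel-summation identity is exactly the exchange of sums the paper performs implicitly when it computes $\Pr[\text{Step~2 returns }X_2]$. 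Both analyses arrive at joint probability $\tfrac{(1\pm\delta_1)^2(1\pm\delta_2)^2}{(f_1\star f_2)^{\le}(kS)}$ under perfect subsamplers and then add the $2\delta_1+2\delta_2$ coupling error, so the TV bound matches. Your observation that only cumulative values of $\hat f_i$ enter the final expression is a nice conceptual point, but note that the paper's analysis also never needs pointwise control of $\hat f_i$: the sum $\sum_y \hat f_1(yS)\,\mathbf{1}[w_2(X_2)\le (x-y)S]$ collapses to $\hat f_1^{\le}(xS-w_2(X_2))$ for the same reason. One minor omission in your write-up: you should budget for the $1-1/\poly(n)$ failure probability of the convolution call in \Cref{lem:sum-approx-conv} and check it fits inside the overall $1-\delta$ success bound (the paper allocates $\delta/20$ for this).
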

\begin{proof}
Let $\delta := 4\delta_1+4\delta_2$.
For $X_1\in \Omega_1, X_2\in \Omega_2$,   in the new sampler let $w(X_1\cup X_2) = w_1(X_1)+w_2(X_2)$.
We have $\Ex[w(X_1\cup X_2)]=\Ex[w_1(X_1)] + \Ex[w_2(X_2)]= W_{X_1}+W_{X_2} = W_{X_1 \cup X_2}$, and since the two samplers are independent, $w(X_1\cup X_2)$ is $(\sigma_1^2+\sigma_2^2)$-subgaussian.

We first describe the construction stage. 
Let $\hat{f}_i\colon S\cdot \{0,1,\ldots,L_i\} \to  \{0\}\cup \R_{\ge 1}$ denote the approximate counting function of the $i$-th input sampler, which is a $(1 \pm \delta_i)$-sum-approximation of the function $f_i(tS) = \{X \in \Omega_i: w_i(X) = tS\}$.
(In particular, $\hat{f}_i(tS) \leq (1+\delta_i)|\Omega_i|$.)
The new counting function that we want to approximate is the convolution $(f_1 \star f_2)(tS) = |\{X \in \Omega_1 \times \Omega_2: w(X) = tS\}|$.
By \Cref{prop:apx}, $\hat{f}_1 \star \hat{f}_2$ is a $(1 \pm \delta_1)(1 \pm \delta_2)$-sum-approximation of $f_1 \star f_2$. 
We 
use \Cref{lem:sum-approx-conv} to compute a $(1\pm \frac{\delta}{10})$-sum-approximation of $\hat{f}_1 \star \hat{f}_2$ with success probability at least $1 - \frac{\delta}{20}$.\footnote{The functions $\hat{f}_i$ can be viewed as $\{0,1,\ldots,L_i\} \to  \{0\}\cup \R_{\ge 1}$.}
The algorithm outputs a function $\hat{f}'\colon S\cdot \{0,1,\ldots,L\} \to  \{0\}\cup \R_{\ge 1}$, where $L = L_1 + L_2$, and the output values of $\hat{f}'$ are represented by $O(\log\log(|\Omega_1||\Omega_2|) + \log(1/\delta'))$-bit floating-point numbers.
Hence, if the algorithm succeeds, then $\hat{f}'$ is a sum-approximation of $f_1 \star f_2$ with approximation ratio $(1 \pm \frac{\delta}{10})(1\pm \delta_1)(1\pm \delta_2) \in (1\pm \delta)$ by the assumption $\delta_1,\delta_2 < 1/10$.
The incremental cost for construction is dominated by the time complexity of \cref{lem:sum-approx-conv}, which is 
\[\Delta\caT_c =\widetilde{O}(L\sqrt{\log (|\Omega_1||\Omega_2|) }),\]
where $\widetilde{O}(\cdot)$ hides a $\polylog (L, \log(|\Omega_1||\Omega_2|), 1/\delta)$ factor.

Now we describe the query stage.
Given a query $T\in \N^+$,
our goal is to sample almost uniformly from $\{X\in \Omega_1\times \Omega_2: w(X)\le T\}$.
Without loss of generality we assume $T=xS$ where $x\in \{0,1,\dots,L\}$. %
Recall $\hat{f}_i^\leq$ denotes the prefix sum function of $\hat{f}_i$.
The sampling algorithm contains the following steps.
\begin{enumerate}
 \item \label{step-1} Sample $y \in \{0,1,\ldots,x\}$ with probability $\propto \hat{f}_1(yS)\hat{f}_{2}^{\leq}((x-y)S)$;
        \item \label{step-2} Query the second sampler with $(x-y)S$ to obtain $(X_2,w_2(X_2))$; 
        \item \label{step-3} Query the first sampler with $xS-w_2(X_2)$ to obtain $(X_1,w_1(X_1))$;
        \item \label{step-4} Return the pair ($X_1 \cup X_2$, $w_1(X_1) + w_2(X_2) $). (Recall $w(X_1 \cup X_2) = w_1(X_1) + w_2(X_2)$.)
\end{enumerate}
    It is easy to see that the time complexity of this sampling algorithm is $\caT_q = \caT_{1,q} + \caT_{2,q} + \widetilde{O}(L_1+L_2)$.
It remains to prove the correctness of this sampling algorithm.
By the assumption on the input sampler, Step~\ref{step-2} returns $X_2$ that is $2\delta_2$-close to the uniform distribution of $\{X \in \Omega_2: w_2(X) \leq (x-y)S\}$, and Step~\ref{step-3} returns $X_1$ that is $2\delta_1$-close to the uniform distribution of $\{X \in \Omega_1: w_1(X) \leq xS - w_2(X_2)\}$. Hence, $X_1$ (and $X_2$) can be coupled successfully with true uniform samples with probability at least $1 - 2\delta_1$ (and $1 - 2\delta_2$);  for now we first assume that $X_1$ and $X_2$ are replaced by true uniform samples.
Under this assumption, we bound the TV distance $D$ between the output $X_1 \cup X_2$ and the uniform distribution of $\{X \in \Omega_1 \times \Omega_2: w(X) \leq xS\}$ (recall $T=xS$), as follows:
For any $X_2 \in \Omega_2$ with $w_2(X_2) \leq xS$, it holds that
\begin{align*}
    \Pr[\text{Step~\ref{step-2} returns }X_2] &= \frac{\sum_{y=0}^x \hat f_1(yS) \hat f_2^\le ((x-y)S)\Pr[\text{Step~\ref{step-2} returns }X_2 \mid \text{Step~\ref{step-1} samples }y ]}{\sum_{z=0}^x \hat f_1(zS) \hat f_2^\le ((x-z)S)}\\
&= \frac{\sum_{y=0}^x \hat f_1(yS) \hat f_2^\le ((x-y)S) \mathbf{1}[w_2(X_2)\le (x-y)S] \cdot \frac{1}{f_2^{\le}((x-y)S)}}{\sum_{z=0}^x \hat f_1(zS) \hat f_2^\le ((x-z)S)}\\
& = \frac{\sum_{y=0}^x \hat f_1(yS) ( 1 \pm \delta_2 )   \mathbf{1}[w_2(X_2)\le (x-y)S] }{\sum_{z=0}^x \hat f_1(zS) \hat f_2^\le ((x-z)S)}\\
& = \frac{( 1 \pm \delta_2 ) \hat f_1^\le (xS-w_2(X_2))}{\sum_{z=0}^x \hat f_1(zS) \hat f_2^\le ((x-z)S)}.
\end{align*}
For every $(X_1,X_2)\in \Omega_1\times \Omega_2$ such that $w_1(X_1)+w_2(X_2)\le xS$, it holds that
   \begin{align*}
    \Pr[\text{Step~\ref{step-4} returns }X_1 \cup X_2] &= \Pr[\text{Step~\ref{step-2} returns }X_2]  \Pr[\text{Step~\ref{step-3} returns }X_1 \mid \text{Step~\ref{step-2} returns }X_2]\\
&= \Pr[\text{Step~\ref{step-2} returns }X_2] \cdot  \frac{1}{f_1^\le(xS-w_2(X_2))}\\
    & =  \frac{( 1 \pm \delta_2 ) \hat f_1^\le (xS-w_2(X_2))}{\sum_{z=0}^x \hat f_1(zS) \hat f_2^\le ((x-z)S)}
\cdot   \frac{1}{f_1^\le(xS-w_2(X_2))}\\
& = \frac{(1 \pm \delta_1)(1 \pm \delta_2)}{\sum_{z=0}^x \hat f_1(zS) \hat f_2^\le ((x-z)S)}\\ &= \frac{(1 \pm \delta_1)(1 \pm \delta_2)}{(\hat f_1 \star \hat f_2)^\le (xS)}\\
& = \frac{( 1\pm \delta_1 )^2 (1 \pm \delta_2)^2 }{( f_1 \star  f_2)^\le (xS)},
   \end{align*} 
   where the last equation holds by \Cref{prop:apx}. 
   As a consequence, the TV distance $D$ satisfies 
   \begin{align*}
       D \leq \max\{ (1 + \delta_1)^2(1 + \delta_2)^2 - 1, 1 - (1 - \delta_1)^2(1 - \delta_2)^2 \} \leq 4(\delta_1 + \delta_2).
   \end{align*}
   where the last inequality holds because we assumed $\delta_1,\delta_2 < 1/10$. Since Step~\ref{step-2} and Step~\ref{step-3} return samples with TV distance error $2\delta_2$ and $2\delta_1$, by a coupling argument, the TV distance between the output and the uniform distribution of $\{X \in \Omega_1 \times \Omega_2: w(X) \leq T\}$ is at most $6(\delta_1+\delta_2) \le 2\delta$ as desired.

   The success probabilities for the two input samplers are at least $1 - \delta_1$ and $1 - \delta_2$, so the success probability for the new sampler is also at least $1 - \delta_1 - \delta_2 - \delta/20 \geq 1 - \delta$, where $\delta/20$ is the probability that the sum-approximation convolution algorithm in \Cref{lem:sum-approx-conv} fails.
\end{proof}

We observe that our proofs for  rounding (\cref{lemma:roundingsampler}) and merging (\cref{lemma:mergingsampler}) basic Approximate Knapsack Samplers can be easily extended to work for the full version of Approximate Knapsack Samplers (\cref{defn:knapsacksamplertiny}). 
Specifically, when merging two samplers, we union their $I_0$ together to obtain a new set $I_0$; when rounding a sampler, for any $X \in \Omega$, the weight $w(X) = w(X \setminus I_0)$, we round the weight $ w(X \setminus I_0)$ to $\alpha(w(X \setminus I_0))$ and thus the new weight $w'(X) = w'(X \setminus I_0) = \alpha(w(X \setminus I_0))$. 
Intuitively, we can think all items in $I_0$ have the fixed weight 0. During the merging and rounding, we do the same processes for $\{X \setminus I_0:X \in \Omega\}$. Also, in the sampling process, we handle all items in $I_0$ separately. We repeat the sampling process for $\{X \setminus I_0:X \in \Omega\}$ and then sample every item in $I_0$ independently with $1/2$ probability.

We remark that the result of rounding/merging basic Approximate Knapsack Samplers is still a basic Approximate Knapsack Sampler.

\section{Main algorithm}\label{sec:algorithm}
In \cref{cref:secbottom}, we describe our sampler to be used at the bottom level of the divide and conquer, as well as a sampler that handles tiny items by allowing partial samples.
In \cref{subsec:mergesamplers}, we use divide and conquer to construct the sampler for every weight class, and merge them into a sampler for the whole input instance. 
In \cref{subsec:mainthm}, we draw (partial) samples using the sampler.
In \cref{sec:secondalg}, we describe the second-phase algorithm for the new (generalized) \#Knapsack instance defined by the tiny items and partial samples.

\subsection{Bottom-level samplers}
\label{cref:secbottom}
The following lemma will be used for the leaf nodes in our divide-and-conquer algorithms, where each leaf corresponds to one bin in the balls-into-bins hashing.  Here the parameter $B$ corresponds to the bin size, which will be $\polylog(n/\eps)$ in our applications. 
  
\begin{lemma}[Basic bottom-level sampler]\label{lem:leaf}
   Let $I$ be a set of items with weights $(W_i)_{i \in I}$, and $S,B\in \N^+, \delta>0$ be parameters. Define $\Omega = \{ X\subseteq I: |X|\le B\}$ and $W_{\max}= \max_{i\in I}W_i$.

   Then,  there is a basic Approximate Knapsack Sampler for $\Omega$ with parameters $S, L=B\cdot\lceil \frac{ W_{\max}}{S}\rceil$, $\sigma^2 = B\cdot S^2/4$, $\delta$, and either choice of the following two variants: 
  \begin{enumerate}
    \item  \label{item:leafdp}
   $\caT_c = \widetilde O(|I|\cdot B^3 \cdot \lceil \frac{ W_{\max}}{S}\rceil)$, and $\caT_q = \widetilde{O}(B^2)$. Or, 
   \item $\caT_c = \widetilde O(|I| +  \poly(B) \cdot \lceil \frac{ W_{\max}}{S}\rceil )$, and $\caT_q = \widetilde{O}(|I| + \poly(B)\cdot \lceil \frac{ W_{\max}}{S}\rceil)$.
       \label{item:leafvariant}
  \end{enumerate} 
  Here we use $\widetilde{O}(\cdot)$ to hide a $\polylog(|I|,\lceil \frac{ W_{\max}}{S}\rceil,1/\delta)$ factor.
\end{lemma}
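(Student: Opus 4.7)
The plan is to construct both variants from a common randomized-rounding step, then differ only in how the counting function is computed and sampled.

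\textbf{Rounding (shared).} For each $i \in I$ independently, round $W_i$ to $w(i) \in \{S\lfloor W_i/S\rfloor,\, S\lceil W_i/S\rceil\}$ via the unbiased coin flip of \cref{lem-subg}, so that $w(i)$ is $(S^2/4)$-subgaussian with mean $W_i$. Setting $w(X) := \sum_{i\in X} w(i)$ for $X \in \Omega$, independence together with $|X|\le B$ implies $w(X)$ is $(BS^2/4)$-subgaussian with mean $W_X$; since $w(i) \le S\lceil W_{\max}/S\rceil$, the image of $w$ lies in $S\cdot\{0,1,\ldots,L\}$, matching the advertised $\sigma^2$ and $L$ parameters.

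\textbf{Variant~\ref{item:leafdp}.} Compute exactly $f(j,t) := |\{X\subseteq I: |X|=j,\, w(X)=tS\}|$ for $0\le j\le B,\, 0\le t\le L$ by the textbook item-by-item DP. The counts are bounded by $|I|^B$, so they fit in $O(B\log |I|)$ bits, and each DP transition costs $\widetilde O(B)$ word-RAM time; multiplied by the $O(|I|\cdot B\cdot L)$ state transitions this gives total construction time $\widetilde O(|I|\cdot B^3 \lceil W_{\max}/S\rceil)$. Let $\hat f(tS)$ be $\sum_j f(j,t)$ rounded to $O(\log(1/\delta) + \log\log |\Omega|)$-bit floating-point at relative precision $\delta$; since errors propagate linearly through prefix sums of nonnegative values, $\hat f$ is automatically a $(1\pm\delta)$-sum-approximation of $f$. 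For a query of capacity $T$, sample $(j,t)$ proportional to $f(j,t)\cdot \mathbf{1}[t\le T/S]$, then group items by their realized rounded weight (at most $\lceil W_{\max}/S\rceil + 1$ classes with sizes $n_0, n_1, \ldots$), sample the multiplicities $(k_s)$ via auxiliary cumulative-distribution tables over weight classes precomputed inside $\caT_c$, and finally draw $k_s$ items uniformly without replacement from each used class. Since $\sum_s k_s = j \le B$ there are at most $B$ nonzero classes and each sampling round costs $\widetilde O(B)$, giving $\widetilde O(B^2)$ query time.

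\textbf{Variant~\ref{item:leafvariant}.} The construction budget here excludes the $|I|\cdot L$ factor, so direct DP is too slow. The plan is to combine birthday-paradox color-coding from \cite{Bringmann17} (hash items into $\Theta(B^2)$ colors so that with high probability each $X\in \Omega$ contains at most one item per color) with the Jin--Wu Subset Sum counting algorithm \cite{JinW19}: Jin--Wu computes the required counts modulo a small prime in $\widetilde O(|I| + \poly(B)\cdot L)$ time, and repeating over $O(B\log |I|)$ primes and recombining via CRT recovers the exact values. Queries walk the color partition to extract an explicit $X$ in the same budget, with the color-coding failure probability absorbed into $\delta$ by standard repetition and union bound.

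\textbf{Main obstacle.} The delicate point is the $\widetilde O(B^2)$ query bound in Variant~\ref{item:leafdp}: although there may be up to $\lceil W_{\max}/S\rceil \gg B$ distinct rounded-weight classes, the per-query work must depend only on $B$. This forces carefully precomputing cumulative-distribution structures so that each of the $\le B$ sampling rounds is a single $\widetilde O(B)$ lookup rather than a sweep over all classes. For Variant~\ref{item:leafvariant}, the subtlety is verifying that the color-coding error plus the CRT recombination fit inside the $\delta$ failure budget while keeping the construction time sublinear in $|I|\cdot L$.
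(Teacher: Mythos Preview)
Your shared rounding step and subgaussian bound match the paper and are correct.

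\textbf{Variant~\ref{item:leafdp}.} Your construction matches the paper's, but your query procedure diverges and has a real gap. You propose to sample the size/weight pair $(j,t)$ and then sample a multiplicity vector $(k_s)$ over rounded-weight classes. You correctly flag the obstacle (there may be $\lceil W_{\max}/S\rceil \gg B$ classes), but your ``auxiliary cumulative-distribution tables'' are never specified; the natural class-prefix DP needed for this has an extra $B$-way transition per cell and does not obviously fit the stated $\caT_c$ budget in all parameter regimes. The paper instead stores the full three-index table $M(k,x,z)=|\{X\subseteq\{1,\dots,k\}:w(X)\le xS,\,|X|=z\}|$, which is monotone in the item-prefix index $k$. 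After sampling $z$, it repeatedly binary-searches on $k$ to locate the next largest item of $X$, updates $(x,z)$, and continues. Each of the $\le B$ iterations is one binary search of $O(\log|I|)$ comparisons on $\widetilde O(B)$-bit integers, giving $\widetilde O(B^2)$ directly. The idea you are missing is to retain the item-prefix dimension and exploit its monotonicity rather than collapse to weight classes.

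\textbf{Variant~\ref{item:leafvariant}.} You name the right two ingredients but blur their roles, and one essential mechanism is absent. In the paper the two are cleanly separated: a \emph{single} Jin--Wu call with a prime $p\in[2|I|^B,4|I|^B]$ (large enough that the residues \emph{are} the exact counts, so no CRT) handles the construction, and this is what keeps the $|I|$ term free of $\poly(B)$ factors; your CRT over $O(B\log|I|)$ small primes incurs an extra $B\cdot|I|$ that does not match the claimed $\caT_c$. More importantly, color-coding is used \emph{only} in the query stage, and it must be wrapped in rejection sampling: hash into $z^2$ colors, compute the rainbow count $|\caX_\caP|$ via polynomial products, draw a uniform rainbow set, and accept with probability $|\caX_\caP|/|\caX|$. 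Without this accept/reject step the output is biased toward rainbow sets; your phrase ``absorb the color-coding failure probability into $\delta$'' suggests you intend to let color-coding alter the set being counted, which would violate the sampler's contract that $\Omega=\{X\subseteq I:|X|\le B\}$ exactly.
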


For comparison, 
 variant \ref{item:leafvariant} achieves better construction time but worse query time.
 In terms of techniques, variant~\ref{item:leafdp} uses simple dynamic programming and binary search, whereas variant~\ref{item:leafvariant} uses FFT-based techniques with color coding.
For bounded-ratio input instances, variant~\ref{item:leafvariant} is not needed (see \cref{remark:boundedratio}).

\begin{proofof}{Proof of \cref{lem:leaf}}
At the beginning of the construction stage, we perform the following rounding step which is common to both variants:   
For each item $i \in I$, we round the weight $W_i$ independently to the weight $w(\{i\})$ such that
\begin{align}\label{eq:def-wi}
    w(\{i\}) &\defeq \begin{cases}
   \lceil\frac{W_i}{S} \rceil S  &\text{with prob. } \frac{W_i}{S} - \lfloor\frac{W_i}{S} \rfloor;\\
    \lfloor\frac{W_i}{S} \rfloor S  &\text{with prob. } 1 -(\frac{W_i}{S} - \lfloor\frac{W_i}{S} \rfloor).
    \end{cases}
\end{align}
By definition, it holds that $\Ex[w(\{i\})]=W_i$. 
For $X \in \Omega$, define $w(X) = \sum_{i \in X}w(\{i\})$.
Then, each $w(X)$ is the sum of at most $|X|\le B$ independent random variables $w(\{i\})$, each sampled from an interval of length at most $S$.  Hence, $w(X)$ is $B\cdot S^2/4$-subgaussian with mean $W_X$ as required. 

\begin{proofof}{Proof of variant \ref{item:leafdp}}
    For convenience, assume $I = \{1,2,\dots,|I|\}$ (by possibly renaming the items).
In the construction stage, we compute a table $M$ defined as
\[ M(k,x,z):= | \{ X \subseteq \{1,2,\dots,k\}: w(X) \le  xS, |X|=z \} |,  \]
where $0 \leq k \leq |I|, 0 \leq x \leq L := B\cdot\lceil \frac{ W_{\max}}{S}\rceil$, and $0 \leq z \leq B$, using dynamic programming with the update rule $M(k,x,z) = M(k-1,x,z) + M(k-1,x - \frac{w(\{k\})}{S},z-1)$ and initial values $M(0,x,0)=1$.
The table has $O(|I|LB)$ entries, each being a nonnegative integer at most $|I|^B$, which can be represented by $\widetilde{O}(B)$ bits in binary, so the dynamic programming takes total time $\widetilde O(|I|LB^2)$.

Then, note that $|\Omega| = \sum_{z=0}^B M(|I|,L,z)$, and the function $f(xS) = |\{X\in \Omega: w(X)=xS\}|$ required in \cref{def:basic} can be computed as $f(xS) = \sum_{z = 0}^B (M(|I|,x,z) - M(|I|,x-1,z))$.
Finally, we can round the function $f$ to obtain a function $\hat{f}:\{0,S,2S,\ldots,LS\} \to  \{0\}\cup \R_{\ge 1}$, where each $\hat{f}(xS)$ is a $(1 \pm \delta)$-approximation of $f(xS)$ and is represented by an $O(\log (\frac{1}{\delta}) + \log \log |\Omega|)$-bit floating-point number.
Overall, the construction stage takes time $\caT_c = \widetilde O(|I|LB^2) = \widetilde O(|I|\cdot B^3 \cdot \lceil \frac{ W_{\max}}{S}\rceil )$, and succeeds with probability $1 \geq 1 - \delta$.

Now we describe the query stage.
Given a query $T\in \N^+$, let $T \gets \min\{T,LS\}$ and $x \gets \lfloor \frac{T}{S} \rfloor$. Then our goal is to sample $X$ uniformly from $\{X\in \Omega: w(X) \le xS\}$. First, we decide the size $z= |X|$ by sampling $z\in \{0,\dots,B\}$ with probability $\propto  M(|I|,x,z)$, which can be done in time $\widetilde{O}(B^2)$ because $z\le B$ and each integer in the table $M$ has $\widetilde{O}(B)$ bits.
Then, we generate a random sample $X \in \Omega$ in the following iterative fashion which reports the items of $X$ in decreasing order of the indices:
We initialize $m\gets |I|, X \gets \emptyset$. In each iteration, we need to determine the largest item $j \in \{1,2,\dots,m\}$ that remains to be added to $X$.
Note that $j$ should follow the distribution $\Pr[j \leq a] = \frac{M(a,x,z)}{M(m,x,z)}$.
To sample $j$, we sample a uniformly random integer $1\le r\le M(m,x,z)$, and use binary search to find the smallest $j\leq m$ such that $M(j,x,z) \ge r$. Then, we update $X \gets X \cup \{j\}$, $m \gets j - 1, x \gets x - \frac{w(\{j\})}{S}$, and $z \gets z -1$ to sample the next item. Repeat this process until $z = 0$. 
In total there are $|X|\le B$ iterations, each with a binary search that takes $O(\log |I|)$ steps of comparing $\widetilde O(B)$-bit integers, so the total running time is $\widetilde{O}(B^2)$.
The rounded weight $w(X)$ can be computed in time $\widetilde{O}(B)$. Hence, $\caT_q = \widetilde{O}(B^2)$.
\end{proofof}

\begin{proofof}{Proof of variant \ref{item:leafvariant}}
Instead of dynamic programming, we need the following FFT-based lemma from Jin and Wu's Subset Sum algorithm \cite{JinW19}.
\begin{lemma}[{\cite[Lemma 5]{JinW19}}]
    \label{lem:jinwu}
For input integers $1\le a_1,\dots,a_n \le t$,
define $g_s:= |\{ X\subseteq [n]: \sum_{i\in X}a_i = s\}|$.
Then, for any given prime $p > t$,  we can deterministically compute $g_s\bmod p$ for all $0\le s\le t$ in $\widetilde O(n+ t\cdot \polylog(p))$ time.
\end{lemma}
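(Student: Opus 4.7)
The plan is to work with the generating function $P(z) := \prod_{i=1}^n (1+z^{a_i})$ over $\mathbb{F}_p$, whose coefficient of $z^s$ is exactly $g_s \bmod p$. Our goal is to compute $P(z)$ in the truncated polynomial ring $R := \mathbb{F}_p[z]/(z^{t+1})$. First, form the histogram $c_s := |\{i : a_i = s\}|$ for $s = 1, \ldots, t$ in $O(n)$ time; then $P(z) = \prod_{s=1}^t (1+z^s)^{c_s}$ in $R$. Multiplying these $t$ factors directly is too slow, so the strategy is to pass through the formal logarithm and exponential.

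Since $p > t$, every integer in $\{1,\ldots,t\}$ is a unit of $\mathbb{F}_p$, so for $u \in R$ with $u(0)=0$ the series $\log(1+u)=\sum_{k\ge 1}(-1)^{k+1} u^k/k$ and $\exp(u)=\sum_{k\ge 0} u^k/k!$ are well-defined and mutually inverse. Expanding term-by-term,
\begin{align*}
L(z) \;:=\; \log P(z) \;\equiv\; \sum_{s=1}^{t} c_s \sum_{k=1}^{\lfloor t/s\rfloor} \frac{(-1)^{k+1}}{k}\, z^{ks} \pmod{z^{t+1}}.
\end{align*}
The total number of $(s,k)$ pairs is $\sum_{s=1}^t \lfloor t/s\rfloor = O(t\log t)$, so $L(z)$ can be assembled directly in $\widetilde O(t)$ operations in $\mathbb{F}_p$ (each operation costing $\polylog(p)$ bit-operations). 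Then the target polynomial is recovered as $P(z) = \exp(L(z))$ in $R$; since $L(0) = \log P(0) = 0$, the standard Newton-iteration scheme for $\exp$ applies unchanged and reduces the computation to $O(\log t)$ multiplications in $R$.

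The main obstacle is executing each multiplication in $R = \mathbb{F}_p[z]/(z^{t+1})$ within a $\widetilde O(t \cdot \polylog(p))$ budget, because the given prime $p$ need not be FFT-friendly. This is handled by reducing to integer multiplication: pack each polynomial into a single integer via Kronecker substitution with word-length $\Theta(\log p + \log t)$, multiply the two integers using fast integer multiplication in $\widetilde O(t\cdot \polylog(p))$ bit-operations, read off the coefficients of the product, and reduce them modulo $p$. (Alternatively, one works modulo several FFT-friendly primes exceeding $t p^2$ and reconstructs by CRT.) Plugging this cost into the $O(\log t)$ Newton steps and adding the $O(n)$ cost of forming the histogram gives the total running time $\widetilde O(n + t \cdot \polylog(p))$ as claimed.
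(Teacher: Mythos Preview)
The paper does not prove this lemma; it is quoted verbatim from \cite{JinW19} and used as a black box. Your proof is correct and is precisely the argument of \cite{JinW19}: the generating function $\prod_i(1+z^{a_i})$ is computed modulo $z^{t+1}$ over $\mathbb{F}_p$ by passing through $\log$ and $\exp$, which are well-defined in $\mathbb{F}_p[z]/(z^{t+1})$ because $p>t$ makes every denominator $1,\dots,t$ a unit; the log is assembled in $\widetilde O(t)$ operations via the harmonic bound $\sum_s \lfloor t/s\rfloor = O(t\log t)$, and the exp is recovered by Newton iteration with Kronecker-substitution multiplication.
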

We will compute a table $M$ defined as 
\[ M(x,z) = |\{X \subseteq I: w(X) = xS, |X|=z\}|,\]
where $0 \leq x \leq L := B\cdot\lceil \frac{ W_{\max}}{S}\rceil$, and $0 \leq z \leq B$. 
Each entry in $M$ is an integer with $O(B\log |I|)$ bits in binary.
Note that the function $f$ required in \cref{def:basic} can be expressed as $f(xS) = \sum_{z = 0}^B M(xS,z)$.

Now we explain how to compute $M$ using \cref{lem:jinwu}.
First, since $M(x,z) \le |I|^B$, we can pick the prime $p$ to be from $[2|I|^B,4|I|^B]$ so that $M(x,z)\bmod p = M(x,z)$. Such prime $p$ can be found by a Las Vegas algorithm with \cite{aks} in $\polylog(|I|^B) = \poly(B \log |I|)$ time, and each arithmetic operation over $\F_p$ takes $\widetilde O(\log p) = \widetilde O(B)$ time. 
For each rounded weight $w(\{i\})$, define the integer $a_i = w(\{i\})/S + (L+1)$. 
Then we apply \cref{lem:jinwu} to $\{a_i\}_{i\in I}$ with $t:= B (L+1) + L$ and obtain $g_s\bmod p$ for all $0\le s\le t$. Observe that for any $X\subseteq I$ with $|X|\le B$, it holds that $\sum_{i\in X}a_i = |X|\cdot (L+1) + \sum_{i\in X}w(\{i\})/S$. Conversely, for any $X\subseteq I$ with $\sum_{i\in X}a_i\le t$, we have $|X|\le \lfloor \frac{t}{L+1}\rfloor = \lfloor \frac{B(L+1)+L}{L+1} \rfloor = B$, and hence $\sum_{i\in X} w(\{i\}) /S\le |X|\cdot \lceil \frac{ W_{\max}}{S}\rceil \le L$. This means $(\sum_{i\in X}a_i)\bmod (L+1) = \sum_{i\in X}w(\{i\})/S$ and $\lfloor (\sum_{i\in X}a_i)/ (L+1)\rfloor = |X|$. Therefore, the table $M$ can be obtained by $M(x,z) = g_{x + z\cdot (L+1)}$.
The time complexity of \cref{lem:jinwu} is $\widetilde O(|I| + t\polylog p) = \widetilde O(|I| +\poly(B)L)$. Hence, $\caT_c =\widetilde O(|I| +\poly(B)L) =  \widetilde O(|I| +  \poly(B) \cdot \lceil \frac{ W_{\max}}{S}\rceil )$.

Now we describe the query stage.
Given a query $T\in \N^+$, let $T \gets \min\{T,LS\}$ and $x = \lfloor \frac{T}{S} \rfloor \le L$. 
Then our goal is to sample $X$ uniformly from $\{X\in \Omega: w(X) \le xS\}$.
First,  we determine the size $z= |X|$, by sampling $z\in \{0,1,\dots,B\}$ with probability proportional to  $C_z:= \sum_{0\le x'\le x}M(x',z)$, in $O(BL)\cdot \widetilde O(B) = \widetilde O(B^2 L)$ time. 
Then, we need to sample $X$ uniformly from 
\[\caX:= \{X\subseteq I: |X|=z, \sum_{i\in X}w(\{i\})/S \le x\},\]
which has known size $|\caX| = C_z$.
To do this, we use the following rejection sampling procedure: First sample a random partitioning $\caP = (I_1,I_2,\dots, I_{z'})$ where $z':= z^2$ and $I$ is the disjoint union  $I_1 \cup \dots \cup I_{z'}$, by assigning each $i\in I$ independently to a random one of the $z'$ groups.
Then, define\footnote{This is reminiscent of Bringmann's color-coding technique \cite{Bringmann17}.}
\[ \caX_{\caP}:= \Big \{X \in \caX: |X\cap I_k|\le 1\,\, \forall k \in [z']\Big \}.\]
We compute $|\caX_{\caP}|$, and uniformly sample $X \in \caX_{\caP}$ (if not empty). With probability $\frac{|\caX_{\caP}|}{|\caX|}$, we accept and return this sample $X$. Otherwise, we reject the sample $X$ and start over with a fresh new random partitioning $\caP$. The correctness and time complexity of this rejection sampling procedure follow from the following three claims: 
\begin{claim}
When accepted,  the returned sample $X$ is uniformly distributed over $\caX$.
\end{claim}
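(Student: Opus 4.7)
The plan is to compute, for each fixed $X^*\in\caX$, the probability that a single round of rejection sampling returns $X^*$, and to verify that this probability depends on $X^*$ only through $|X^*|$. Since every element of $\caX$ has cardinality exactly $z$, that common value forces the conditional distribution given acceptance to be uniform on $\caX$.

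Concretely, I would condition on the random partition $\caP$. In a single round,
\[
\Pr[\text{round returns } X^*] = \sum_{\caP}\Pr[\caP]\cdot\mathbf{1}[X^*\in\caX_\caP]\cdot\frac{1}{|\caX_\caP|}\cdot\frac{|\caX_\caP|}{|\caX|} = \frac{\Pr_{\caP}[X^*\in\caX_\caP]}{|\caX|},
\]
where the factor $1/|\caX_\caP|$ comes from the uniform pick inside $\caX_\caP$ and the factor $|\caX_\caP|/|\caX|$ is the explicit acceptance probability. The cancellation of $|\caX_\caP|$ is the crucial algebraic point and is exactly why the acceptance probability was chosen to be $|\caX_\caP|/|\caX|$. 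Note that the sum effectively runs only over $\caP$ with $X^*\in\caX_\caP$, so $\caX_\caP$ is automatically non-empty wherever we divide by $|\caX_\caP|$, sidestepping the ``if not empty'' caveat in the algorithm description.

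It then remains to check that $\Pr_{\caP}[X^*\in\caX_\caP]$ does not depend on the particular $X^*\in\caX$. Since $\caP$ assigns each element of $I$ independently and uniformly to one of $z'=z^2$ groups, and $X^*\in\caX_\caP$ holds iff the $z$ elements of $X^*$ land in $z$ distinct groups, this probability equals the standard ``birthday'' quantity $z'(z'-1)\cdots(z'-z+1)/(z')^z$, which depends only on $z$ and $z'$. Hence $\Pr[\text{round returns } X^*]$ equals the same constant divided by $|\caX|$ for every $X^*\in\caX$, and conditioning on acceptance yields the uniform distribution on $\caX$ as claimed. The argument is a routine rejection-sampling calculation, so no real obstacle arises; the only subtlety worth flagging is the $|\caX_\caP|$ cancellation, which is the whole reason this choice of acceptance probability is correct.
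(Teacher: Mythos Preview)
Your proof is correct and is essentially the same as the paper's: both compute $\Pr[\text{round returns }X^*]=\Pr_{\caP}[X^*\in\caX_\caP]/|\caX|$ via the $|\caX_\caP|$ cancellation and then argue this is the same for all $X^*\in\caX$. The only cosmetic difference is that you compute the birthday probability explicitly while the paper just says ``by symmetry,'' but the argument is otherwise identical.
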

\begin{proof}
    Fix arbitrary $X_1,X_2\in \caX$. Because $|X_1|=|X_2|$,  over a random partitioning $\caP$ we have $\Pr_{\caP}[X_1 \in \caX_{\caP}] = \Pr_{\caP}[X_2 \in \caX_{\caP}]$ by symmetry. 
    In each step of our rejection sampling, the probability that $X_1$ is sampled and accepted is
    \begin{equation}
     \Ex_{\caP}\Big [\mathbf{1} [X_1 \in \caX_{\caP}]\cdot \frac{1}{|\caX_{\caP}|}\cdot \frac{|\caX_{\caP}|}{|\caX|}\Big ] = \frac{1}{|\caX|}\Pr_{\caP}[X_1\in \caX_\caP].
     \label{eqn:accprob}
    \end{equation}
    The same holds for $X_2$ as well. Hence, $X_1$ and $X_2$ have equal probability of being sampled and accepted. The claim follows since $X_1,X_2 \in \caX$ are arbitrary.
\end{proof}
\begin{claim}
Each step of the rejection sampling has acceptance probability is at least $1/2$.
\end{claim}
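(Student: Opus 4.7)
The plan is to reduce the per-iteration acceptance probability to a birthday-paradox bound on a single fixed $X$. From equation~\eqref{eqn:accprob} in the preceding claim, the probability of sampling and accepting any particular $X_1 \in \caX$ in one iteration equals $\tfrac{1}{|\caX|}\Pr_\caP[X_1 \in \caX_\caP]$. Summing over all $X_1 \in \caX$, the total per-iteration acceptance probability is
\[ \frac{1}{|\caX|}\sum_{X \in \caX}\Pr_\caP[X \in \caX_\caP] \;=\; \Ex_\caP\!\left[\frac{|\caX_\caP|}{|\caX|}\right], \]
so it suffices to establish the pointwise bound $\Pr_\caP[X \in \caX_\caP] \ge 1/2$ for every fixed $X \in \caX$.

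Next, I would fix $X \in \caX$ with $|X| = z$ and apply the birthday paradox. Since $\caP$ independently assigns each item of $I$ to one of $z' = z^2$ groups uniformly at random, the event $X \in \caX_\caP$ is exactly the event that the $z$ items of $X$ fall into pairwise distinct groups. A union bound over the $\binom{z}{2}$ pairs of elements of $X$ (each pair colliding with probability $1/z^2$) gives $\Pr_\caP[X \notin \caX_\caP] \le \binom{z}{2}/z^2 = (z-1)/(2z) < 1/2$, so $\Pr_\caP[X \in \caX_\caP] > 1/2$ as required. The corner cases $z \in \{0,1\}$ are immediate since the condition defining $\caX_\caP$ is vacuous or trivially satisfied.

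There is no substantive obstacle here: the choice $z' = z^2$ in the algorithm is calibrated precisely so that the birthday bound yields a constant lower bound, and the reduction to a per-element statement is essentially already done by the previous claim's calculation. The whole argument is a short birthday-paradox estimate layered on top of the identity from equation~\eqref{eqn:accprob}.
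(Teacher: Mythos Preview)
Your proposal is correct and follows essentially the same approach as the paper: both sum the per-element acceptance probability from \eqref{eqn:accprob} over $\caX$ and then apply the birthday-paradox union bound $\binom{z}{2}/z^2 \le 1/2$ for a fixed $X$ with $|X|=z$ under the random partition into $z'=z^2$ groups.
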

\begin{proof}
    Fix an arbitrary $X_1\in \caX$, where $|X_1|=z$. Over a random partitioning $\caP$ into $z'=z^2$ groups, the probability that some pair of $i,i'\in X_1, i\neq i'$ are assigned to the same group is at most $\binom{z}{2}\cdot \frac{1}{z^2}\le 1/2$ by a union bound. Hence, $\Pr_{\caP}[X_1\in \caX_{\caP}]\ge 1-1/2= 1/2$. Therefore, by \eqref{eqn:accprob},  the probability that a sample is accepted is $\sum_{X_1\in \caX}\frac{1}{|\caX|}\Pr_{\caP}[X\in \caX_{\caP}]\ge 1/2$. 
\end{proof}
\begin{claim}
Given $\caP = (I_1,I_2,\dots, I_{z'})$,
computing $|\caX_{\caP}|$ and sampling from $\caX_{\caP}$ can be done in time $\widetilde O(|I|+ \lceil \frac{W_{\max}}{S} \rceil \cdot \poly(B))$.
\end{claim}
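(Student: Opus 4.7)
The plan is to encode the constrained counting problem as truncated polynomial multiplication modulo a large prime, following the encoding trick from variant~\ref{item:leafvariant}: for each $i \in I$ set $a_i := w(\{i\})/S + (L+1)$, and for each group $k \in [z']$ set
\[
  P_k(q) \defeq 1 + \sum_{i \in I_k} q^{a_i}.
\]
For any $X\subseteq I$ with $|X\cap I_k|\le 1$ for all $k$ and $|X|\le B$, the exponent $\sum_{i\in X} a_i = |X|(L+1) + \sum_{i\in X} w(\{i\})/S$ uniquely encodes both $|X|$ and $\sum_{i\in X} w(\{i\})/S$, because the weight sum is at most $|X|\cdot \lceil W_{\max}/S\rceil \le L$. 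Consequently
\[
  |\caX_\caP| \;=\; \sum_{s=0}^{x} \bigl[q^{z(L+1)+s}\bigr]\, \prod_{k=1}^{z'} P_k(q),
\]
and each coefficient is a nonnegative integer bounded by $|I|^B$. I would pick a prime $p \in [2|I|^B, 4|I|^B]$ in $\poly(B\log|I|)$ time via \cite{aks} and carry out all polynomial arithmetic in $\F_p$; since $p > |I|^B$, every coefficient is recovered exactly.

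To compute $\prod_k P_k(q) \bmod q^{z(L+1)+x+1}$, build a balanced binary divide-and-conquer tree over the $z'=z^2$ groups. At each internal node, multiply the two children's already-truncated polynomials via FFT over $\F_p$, and truncate the product back to degree $z(L+1)+x = O(BL)$. Since every polynomial in the tree now has degree $O(BL)$, a level-by-level analysis yields $\widetilde O(BL)$ field operations across the $O(\log z')$ levels, and each field operation in $\F_p$ costs $\widetilde O(B)$ because $\log p = \widetilde O(B)$. Adding $O(|I|)$ for reading the items into the leaf polynomials gives overall construction time $\widetilde O(|I| + \poly(B)\cdot L) = \widetilde O(|I| + \poly(B)\cdot \lceil W_{\max}/S\rceil)$, and $|\caX_\caP|$ is then a partial sum of the top-level coefficients.

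For sampling $X$ uniformly from $\caX_\caP$, first sample a target offset $s^\star \in \{0,1,\dots,x\}$ with probability proportional to $[q^{z(L+1)+s^\star}]\prod_k P_k$ and set $D^\star := z(L+1)+s^\star$. Then descend the divide-and-conquer tree top-down starting from the root with target $D^\star$: at each internal node with children $l,r$ and current target $D$, sample a split $D_l$ with probability proportional to $[q^{D_l}]P_l \cdot [q^{D-D_l}]P_r$, and recurse into $l$ with target $D_l$ and into $r$ with target $D-D_l$. At each leaf $P_k$ with target $D_k$, include no item from $I_k$ if $D_k = 0$, and otherwise pick a uniformly random $i \in I_k$ with $a_i = D_k$. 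Uniformity of $X$ over $\caX_\caP$ follows by induction on the tree combined with the proportional choice of $s^\star$, and the per-node work matches the construction phase, so sampling also costs $\widetilde O(|I| + \poly(B)\cdot \lceil W_{\max}/S\rceil)$.

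The main obstacle is the careful per-level cost analysis in the divide-and-conquer FFT step: raw degrees at some levels can be as large as $\Theta(B^2 L)$, but capping each product at degree $O(BL)$ keeps the level-wise total at $\widetilde O(BL)$ field operations so that the $O(\log z')$ levels sum to the claimed bound. Uniformity of the sampler is then routine once the encoding and tree structure are fixed.
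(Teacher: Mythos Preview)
Your approach is correct and achieves the claimed bound, though the intermediate assertion that the tree computation costs ``$\widetilde O(BL)$ field operations across the $O(\log z')$ levels'' is too optimistic: there are $z'=z^2\le B^2$ nodes, each holding a truncated polynomial of degree $O(BL)$, so the total is $\widetilde O(\poly(B)\cdot L)$ rather than $\widetilde O(BL)$. This does not matter, since the target bound is $\widetilde O(|I|+\poly(B)\lceil W_{\max}/S\rceil)$ and $L=B\lceil W_{\max}/S\rceil$.

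The paper organizes the computation differently. It keeps the two coordinates separate via bivariate polynomials $g_k(u,v)=1+\sum_{i\in I_k}v\,u^{w(\{i\})/S}$ over $\Z$ (with $\widetilde O(B^2)$-bit coefficients), forms the \emph{sequential} prefix products $g_1,\,g_1g_2,\,\dots,\,g_1\cdots g_{z'}$, and samples by iterating $k$ from $z'$ down to $1$, at each step deciding $X\cap I_k$ using the stored prefix $g_1\cdots g_{k-1}$ and the single factor $g_k$. You instead reuse the univariate packing $a_i=w(\{i\})/S+(L+1)$, work modulo a $\widetilde O(B)$-bit prime, and use a balanced binary tree with top-down splitting of the target exponent. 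Your version shaves a few powers of $B$ (the paper ends up at $\widetilde O(B^8\lceil W_{\max}/S\rceil)$) because the modulus is smaller than the paper's integer coefficients, at the cost of a slightly less elementary sampler. Both routes are the same high-level idea---polynomial products for counting, coefficient-proportional backtracking for sampling---and either proves the claim.
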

\begin{proof}
 For each group $k\in [z'] = [z^2]$, define a bivariate polynomial $g_k(u,v) = 1 + \sum_{i\in I_k}v\cdot u^{w(\{i\})/S}$, which has $v$-degree 1 and $u$-degree $\le \lceil \frac{W_{\max}}{S} \rceil$. 
 We use FFT to multiply these polynomials together in increasing order of $k$,  and obtain the polynomials $g_1g_2\cdots g_{k}$ for all $k\in [z']$. 
 Let $[u^xv^z]f$ denote the coefficient of the $u^xv^z$-term in a polynomial $f$.
By definition of $\caX_{\caP}$, we can verify that $[u^x v^z](g_1g_2\cdots g_{z'}) = |\{X\in \caX_{\caP}:\sum_{i\in X}w\{(i)\}/S = x\}|$. Summing over all $x\in \{0,\dots, L\}$ gives $|\caX_{\caP}|$.

In order to sample a uniform $X\in \caX_{\caP}$, we first determine $x =
\sum_{i\in X}w\{(i)\}/S$ by sampling $x\in \{0,\dots,L\}$ with probability $\propto [u^x v^z](g_1\cdots g_{z'})$. 
Then, we iteratively determine $X\cap I_k$ in decreasing order of $k\gets z',\dots,1$, as follows: 
\begin{itemize}
    \item Sample $(x_0,z_0) \in \{0,\dots,x\}\times \{0,\dots, z\}$ with probability proportional to
        \begin{equation}
            \label{eqn:propo}
      \big ([u^{x_0}v^{z_0}](g_1\dots g_{k-1})\big )\cdot \big ([u^{x-x_0}v^{z-z_0}]g_k\big ).
        \end{equation}
        \item If $z-z_0=1$, then sample a uniformly random $i\in I_k$ satisfying $w(\{i\})/S = x-x_0$, and let $X\cap I_k = \{i\}$. Otherwise, we must have $z-z_0=x-x_0=0$, in which case we let $X\cap I_k = \emptyset$.
            \item Let $x\gets x_0, z \gets z_0$ and proceed to the next iteration $k-1$.
\end{itemize}
It is clear that this procedure generates a uniform sample $X\in \{X\in \caX_{\caP}:\sum_{i\in X}w\{(i)\}/S = x\}$ as desired.
It remains to analyze the total time complexity.

We first analyze the time complexity for performing all the polynomial multiplications.
Each polynomial $g_1g_2\cdots g_k$ has $v$-degree $k$ and $u$-degree $\le k  \lceil \frac{W_{\max}}{S} \rceil$,  and its coefficients are nonnegative integers bounded by $g_1(1,1)\cdots g_k(1,1) = (1+|I_1|)\cdots (1+|I_k|) \le  (1+|I|)^{B^2}$ which can be represented by $\widetilde O(B^2)$ bits in binary.
So the time complexity for multiplying $g_1\cdots g_{k}$ with $g_{k+1}$ using FFT is $\widetilde O(k\cdot k \lceil \frac{W_{\max}}{S} \rceil) \cdot \widetilde O(B^2) = \widetilde O(B^6\lceil \frac{W_{\max}}{S} \rceil)$, and  over all $k\in [z']$ the total time is $\widetilde O(B^8\lceil \frac{W_{\max}}{S} \rceil)$.
 At the beginning we also need to spend $\widetilde O(|I|)$ additional time to prepare all the polynomials $g_1,\dots,g_{z'}$.
 
 Now we analyze the time complexity for producing the sample $X$. In each of the $z' \le B^2$ iterations, we need to compute the expression in \eqref{eqn:propo} (which is an $\widetilde O(B^2)$-bit integer) for all  $(x_0,z_0) \in \{0,\dots,x\}\times \{0,\dots, z\} \subseteq \{0,\dots,L\}\times \{0,\dots, B\}$, so the total time is $\widetilde O(B^5 L) = \widetilde O(B^6 \lceil \frac{ W_{\max}}{S}\rceil)$. The step of  sampling $i\in I_k$ for a given value of $w(\{i\})/S$ can be implemented in $O(\log |I|)$ time after precomputing a standard data structure.

 To summarize, the total time complexity is $\widetilde O(|I|+ \lceil \frac{W_{\max}}{S} \rceil \cdot \poly(B))$.
\end{proof}
The probability that the rejection sampling process lasts $\ge \log_2(10/\delta)$ steps without acceptance is at most $\delta/10$. (If that happens, we can return an arbitrary answer, and the TV distance from the uniform distribution over $\caX$ is still $\le \delta$.) 

Therefore, the overall time complexity of the query stage can be bounded by $\caT_q = \widetilde{O}(|I| + \poly(B)\cdot \lceil \frac{ W_{\max}}{S}\rceil)$ (where the $O(\log(1/\delta))$ factor is hidden by the $\widetilde O(\cdot )$).
\end{proofof}

\end{proofof}

The following lemma gives an Approximate Knapsack Sampler (as per the full version in \cref{defn:knapsacksamplertiny}) in the case where every item has weight in $[0,S]$ (and hence rounded weight either $0$ or $S$).
\begin{lemma}[Bottom-level sampler for small items]
    \label{lem:bottomsamplertiny}
Let $S \subseteq \N^+$ be a parameter, and let $I$ be a set of items where $W_{\max}= \max_{i\in I}W_i \le S$. Define $\Omega = 2^I$. Then, for any $\delta > 0$, there is an Approximate Knapsack Sampler  for $\Omega$ with parameters $S, L= |I|, \sigma^2 = |I|\cdot S^2/4, \delta, \caT_c = \widetilde{O}(|I|) $ and $\caT_q =\widetilde{O}(|I|\cdot \frac{W_{\max}}{S}+1 )$, where $\widetilde{O}(\cdot)$ hides a
$\polylog(|I|,\lceil \frac{ W_{\max}}{S}\rceil,1/\delta)$ factor.
\end{lemma}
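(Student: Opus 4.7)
The plan is to carry out the randomized rounding from~\eqref{eq:def-wi} at the item level, exploiting the fact that every $W_i \le S$ forces $\lfloor W_i/S\rfloor = 0$, so each $w(\{i\})$ is either $0$ or $S$ with $\Pr[w(\{i\})=S]=W_i/S$. Set $I_+ := \{i \in I : w(\{i\}) = S\}$ and $I_0 := I \setminus I_+$, and for $X \in \Omega = 2^I$ define $w(X) := \sum_{i\in X} w(\{i\}) = S\cdot |X\cap I_+|$. Each $w(\{i\})$ is bounded in $\{0,S\}$ and hence $S^2/4$-subgaussian, so $w(X)$ is $|X|\cdot S^2/4 \le |I|S^2/4 = \sigma^2$-subgaussian with mean $W_X$. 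The two extra requirements of \cref{defn:knapsacksamplertiny} are then immediate: $X_0 \cup (X\setminus I_0) \in 2^I = \Omega$ for every $X_0\subseteq I_0$, and items in $I_0$ contribute zero to $w$, so $w(X)=w(X\setminus I_0)$. Finally, $|I_+|$ is a sum of independent Bernoullis with mean $\sum_{i\in I} W_i/S \le |I|W_{\max}/S$, and a standard Chernoff bound gives $|I_+| = O(|I|W_{\max}/S + \log(1/\delta)) = \widetilde{O}(|I|W_{\max}/S + 1)$ with probability at least $1-\delta/2$; this is the event that will underlie the $\caT_q$ bound.

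Next, the counting function factors cleanly as
\[
f(zS) = |\{X\in \Omega : w(X)=zS\}| = 2^{|I_0|}\binom{|I_+|}{z},
\]
with $f(zS)=0$ for $z > |I_+|$. I would compute $\hat f(zS)$ for $z=0,1,\ldots,L=|I|$ via the recurrence $\binom{|I_+|}{z+1} = \binom{|I_+|}{z}\cdot\frac{|I_+|-z}{z+1}$, carrying the factor $2^{|I_0|}$ exactly in the exponent of a floating-point representation. With a mantissa of $O(\log(|I|/\delta))$ bits, each recurrence step introduces a relative error of at most $\delta/(2|I|)$, so after $O(|I|)$ steps the pointwise relative error is at most $\delta$, producing a pointwise (and therefore sum-)approximation $\hat f$ of $f$ with ratio $1\pm\delta$. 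The representation fits within $O(\log(1/\delta)+\log\log|\Omega|)$ bits as required, and the whole construction, together with a prefix-sum pass producing $\hat f^{\le}$, costs $\widetilde{O}(|I|)$ word-RAM operations, establishing $\caT_c = \widetilde{O}(|I|)$.

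For the query stage, given $T$, set $k := \min(\lfloor T/S\rfloor, |I_+|)$; the target distribution is uniform on $\{X_+\subseteq I_+ : |X_+|\le k\}$. I would first sample a size $z\in\{0,1,\ldots,k\}$ with probability proportional to $\hat f(zS)$ by binary search on the precomputed prefix sums in $\widetilde{O}(1)$ time; since each $\hat f$-value is within $(1\pm\delta)$ of $f$, this $z$ is within $\delta$ in TV of its true marginal. Conditional on $z$, I would generate an exact uniform $z$-subset $X_+\subseteq I_+$ in $O(z)$ time, e.g.\ by Floyd's sampling, and return $(X_+, zS)$; a standard coupling then bounds the overall TV error by $2\delta$. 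Reporting the subset dominates the runtime at $O(z) \le O(|I_+|)$, which on the Chernoff event above is $\widetilde{O}(|I|W_{\max}/S + 1)$, matching $\caT_q$. The only delicate point is that this runtime bound is contingent on the high-probability event controlling $|I_+|$, which is absorbed (together with the floating-point-approximation event) into the sampler's $1-\delta$ success probability by a union bound; everything else is routine given the tools already set up, namely \cref{lem-subg}, \cref{defn:knapsacksamplertiny}, and the sum-approximation convention from \cref{sec:sumapproxconv}.
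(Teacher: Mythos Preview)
Your proposal is correct and follows essentially the same approach as the paper: independently round each $W_i$ to $0$ or $S$, declare $I_0$ to be the items rounded to $0$, use the closed form $f(zS)=2^{|I_0|}\binom{|I_+|}{z}$ for the counting function, bound $|I_+|$ via Chernoff, and sample a size $z$ then a uniform $z$-subset of $I_+$. The paper's proof is more terse (it does not spell out the recurrence for the binomials, the TV-distance bookkeeping, or Floyd's sampling), but the construction and analysis are the same.
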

\begin{proof}
    In this case, we round $W_i$ to $w(\{i\})=0$ or $w(\{i\})=S$ so that $\Ex[w(\{i\})] = W_i$. Then we extend the definition of $w$ to $w(X) = \sum_{i\in X}w(\{i\})$ for all $X \in 2^I = \Omega$.  The tiny items (in \cref{defn:knapsacksamplertiny}) are $I_0 = \{i\in I: w(\{i\})=0\}$. Note that all random variables $w(\{i\})$ are $\frac{S^2}{4}$-subgaussian and independent.  Hence, for any $X \in \Omega$, $w(X)$ is $\sigma^2$-subgaussian. We can exactly compute the counting function $f(tS) = \{X \in \Omega: w(X) = tS\}$ because $f(tS) = 2^{|I_0|}\binom{|I \setminus I_0 |}{t}$.
    We can use an $O(\log |I| + \log(1/\delta))$-bit floating-point number to represent every $f(tS)$, which gives a $(1 \pm \delta)$-sum-approximation to $f$. The construction time is $\widetilde{O}(|I|)$.

    The probability that each $W_i$ is rounded to $S$ is at most $\frac{W_{\max}}{S}$. The expected size of $|I \setminus I_0|$ is at most $|I| \frac{W_{\max}}{S}$. With probability at least $1 - \delta$, the size of $|I \setminus I_0|$ is at most $O(|I|\cdot \frac{W_{\max}}{S} +\log(1/\delta) )$ by Chernoff bound and $\log(1/\delta)$ can be absorbed into $\widetilde{O}(\cdot)$. Given $T$, we only need to sample $\leq \lfloor T/S \rfloor$ element from $I \setminus I_0$ uniformly at random, which can be done in time $\widetilde{O}(|I \setminus I_0|)$.

    The Approximate Knapsack Sampler succeeds with probability at least $1 - \delta$, where the failure comes from the event that $|I \setminus I_0|$ is too large.
\end{proof}

\subsection{Merging the samplers}
\label{subsec:mergesamplers}
  Given the knapsack instance $W_1 \leq W_2 \leq \ldots \leq W_n$ with capacity $T$, we partition all the $n$ input items into $g + 1 = \lceil \log_2 n \rceil+1$ weight classes. 
  The $j$-th  class $(1\le j\le g)$ contains all items $\{i \mid \frac{T}{2^j} < W_i \leq \frac{T}{2^{j-1}}\}$. The last $(g+1)$-st class contains all small items $\{i\mid W_i \leq {T}/{2^g} \}$.
  For simplicity, we use \emph{class $m$} to refer to the weight class $I_m = \{i \mid T/m < W_i \leq 2T/m\}$, and use \emph{class $2n$} to refer to the last class $I_{2n}=\{i\mid W_i \leq {T}/{2^g} \}$ with all small items.

In this section, we use $\Omega$ to denote the set of knapsack solutions $\{X\subseteq [n] : W_X\le T\}$.
Let $\ell \in [2,8n)$ be the parameter obtained from \Cref{claim:existsell}.
We first construct the sampler for each weight class of input items:
\begin{lemma}\label{lem:AKS-group}
For each class $m \leq 2n$, there exists an Approximate Knapsack Sampler  for $\Omega(I_m)$  with parameters $S= \lceil\frac{T}{\ell \log^{50}(n/\eps)}\rceil$, $L=\widetilde{O}(n)$, $\sigma^2 = \frac{T^2}{\ell^2 \log^{50}(n/\eps)}$, $\delta = (\frac{\eps}{n})^{20}$ with  $\caT_c = \widetilde{O}(n^{1.5})$, $\caT_q = \widetilde{O}(\ell\sqrt{n})$ time, 
where $\Omega(I_m) \subseteq 2^{I_m}$ is a random collection of subsets such that for any $X \in \Omega$, 
\[\Pr[(X \cap I_m) \notin \Omega(I_m)] \leq \exp(-\log^{5}\frac{n}{\eps}).\] 
The $\widetilde{O}(\cdot)$ hides a $\polylog(n,1/\eps) \cdot \log(T)$ factor.
\end{lemma}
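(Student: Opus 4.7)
The plan is to construct the sampler for each class $I_m$ by following the divide-and-conquer blueprint developed in Sections~\ref{sec:groupbin}--\ref{sec:over-bounded}: partition $I_m$ via balls-into-bins hashing, build a binary merging tree with multi-level randomized rounding, and instantiate the leaves via Lemma~\ref{lem:leaf}. The class $m=2n$ of small items is handled separately, since every such item has weight at most $T/2^g \le 2T/n$, which is already dominated by the target scale $S$; a single application of Lemma~\ref{lem:bottomsamplertiny} with $\Omega(I_{2n})=2^{I_{2n}}$ gives the required parameters with $\caT_c=\widetilde O(n)$ and $\caT_q=\widetilde O(\ell)$, well within the budgets.

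For a non-trivial class $m\le n$, I throw each item of $I_m$ independently into one of $K$ bins (with $K$ tuned from $m$ and $\ell$; the canonical choice is $K=m$) and set $\Omega(I_m)=\{X\subseteq I_m : \forall b,\ |X\cap B_b|\le B\}$ with $B=\Theta(\log^2(n/\eps)/\log\log(n/\eps))$. Every $X\in\Omega$ satisfies $|X\cap I_m|\le m$ since each such item has weight $>T/m$, so a standard Chernoff-type balls-into-bins bound gives $\Pr[(X\cap I_m)\notin\Omega(I_m)]\le \exp(-\log^5(n/\eps))$ as required. I then arrange the bins at the leaves of a complete binary tree of height $H=\log_2 K$ and fix level-$h$ scales $S_h:=\Theta\bigl(T/(\ell\log^{25}(n/\eps)\sqrt{2^h})\bigr)$, so that $S_0$ matches the target $S$ and the per-level variance contribution $\Theta(2^h\cdot S_h^2)$ telescopes over $h=0,\dots,H$ to stay within $\sigma^2=T^2/(\ell^2\log^{50}(n/\eps))$.

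Each leaf is instantiated via Lemma~\ref{lem:leaf} at scale $S_H$, choosing variant~\ref{item:leafdp} when the bin's product $|B_u|\cdot B^3\lceil W_{\max}/S_H\rceil$ is affordable and variant~\ref{item:leafvariant} otherwise, so that the total leaf construction cost fits in $\widetilde O(n^{1.5})$. Then I walk bottom-up: at every internal node merge the two child samplers via Lemma~\ref{lemma:mergingsampler} and coarsen the scale $S_{h+1}\to S_h$ via Lemma~\ref{lemma:roundingsampler}. At level $h$ the array length is $L_h=\widetilde O(\ell/\sqrt{2^h})$, because each subtree covers at most $B\cdot K/2^h$ items of weight $\le 2T/m$, giving total weight $\widetilde O(T/2^h)$ before dividing by $S_h$; in particular $L_0=\widetilde O(\ell) \le \widetilde O(n)$. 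By Lemma~\ref{lem:sum-approx-conv} each merge costs $\widetilde O(L_h\sqrt{\log|\Omega_u|})$ with $\log|\Omega_u|\le \widetilde O(K/2^h)$; summing over the $2^h$ nodes per level and over all $H$ levels gives construction cost $\widetilde O(n^{1.5})$. Meanwhile a root query descends into both children at every level, for a merge-phase cost $\sum_h 2^h\cdot\widetilde O(L_h)=\widetilde O(\ell\sqrt K)\le \widetilde O(\ell\sqrt n)$, plus the leaf-query total (dominated by $\widetilde O(K\cdot B^2)$ under variant~\ref{item:leafdp}). A union bound over the $O(K)$ nodes absorbs every subroutine's $(\eps/n)^C$ failure probability into the target $\delta=(\eps/n)^{20}$.

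The delicate part is choosing $K$ and the leaf-variant per bin so that all three budgets -- the variance $\sigma^2$, the $\widetilde O(n^{1.5})$ construction bound, and the $\widetilde O(\ell\sqrt n)$ query bound -- hold simultaneously in every regime of $(m,\ell,|I_m|)$. The extreme cases are when $\ell\ll\sqrt n$ yet $m$ is near $n$ (so the naive leaf-query count $K=m$ threatens to overshoot $\ell\sqrt n$) and conversely when $m\ll \ell$ (so the subgaussian budget from multi-level rounding is tight). Navigating these boundary cases while composing sum-approximation error, multi-level rounding variance, and the balls-into-bins failure cleanly across $H=O(\log n)$ levels is where most of the technical bookkeeping lies.
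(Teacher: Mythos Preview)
Your outline follows the paper's balls-into-bins plus multi-level rounding blueprint, but it misses the essential case split on whether $m\lessgtr\ell^2\polylog(n/\eps)$, and without it the query bound $\caT_q=\widetilde O(\ell\sqrt n)$ fails. Take $\ell=O(1)$ and a class with $m=\Theta(n)$ (this is allowed: $\ell$ is the parameter of the \emph{popular} class, but other classes can have $m\gg\ell$). Your tree has $K=m=\Theta(n)$ leaves; under variant~\ref{item:leafdp} the leaf-query total is $\sum_i\widetilde O(B^2)=\widetilde O(m)=\widetilde O(n)$, and under variant~\ref{item:leafvariant} it is $\sum_i\widetilde O(|B_i|+\ell/\sqrt m)=\widetilde O(|I_m|+\ell\sqrt m)$, which can again be $\widetilde O(n)$. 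Both exceed $\widetilde O(\ell\sqrt n)=\widetilde O(\sqrt n)$. Your level-length estimate $L_h=\widetilde O(\ell/\sqrt{2^h})$ also breaks in this regime: the ``$+1$'' in $\lceil W_{\max}/S_H\rceil$ at each of the $m$ leaves propagates so that in fact $L_{\Root}=\widetilde O(\ell+\sqrt m)$, not $\widetilde O(\ell)$. Your separate treatment of the last class $m=2n$ has the complementary failure: Lemma~\ref{lem:bottomsamplertiny} at the target scale $S=\lceil T/(\ell\log^{50}(n/\eps))\rceil$ requires $W_{\max}\le S$, i.e.\ $2T/n\lesssim T/(\ell\polylog)$, which is false once $\ell\gtrsim n/\polylog$.

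The paper fixes both regimes with one dichotomy. In Case~(a), $m<20\ell^2\log^{100}(n/\eps)$, it runs exactly your tree; the assumption $m\le\widetilde O(\ell^2)$ is precisely what folds the $\widetilde O(m)$ and $\widetilde O(\ell\sqrt m+m)$ terms into $\widetilde O(\ell\sqrt n)$ (and inside Case~(a) it further chooses variant~\ref{item:leafdp} when $\ell\le\sqrt n$ and variant~\ref{item:leafvariant} when $\ell>\sqrt n$). In Case~(b), $m\ge 20\ell^2\log^{100}(n/\eps)$, it observes $\max_{i\in I_m}W_i\le 4T/m\le S_H$, so every item rounds to $0$ or $S_H$, and applies Lemma~\ref{lem:bottomsamplertiny} once to all of $I_m$ at scale $S_H$ before rounding up to $S$. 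The resulting query time is $\widetilde O(|I_m|\cdot W_{\max}/S_H+1)=\widetilde O(\ell\sqrt m)$, using additionally $|I_m|\le m/2$, which follows from~\eqref{eqn:smallgrouptotalub}. The tiny-item set $I_0$ produced by Lemma~\ref{lem:bottomsamplertiny} (i.e.\ the non-basic sampler of Definition~\ref{defn:knapsacksamplertiny}) is what makes Case~(b) fast; Lemma~\ref{lem:leaf} returns only a basic sampler and cannot play this role.
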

\begin{proofof}{Proof of \cref{lem:AKS-group}}
We assume $m$ is a power of 2. The assumption may not hold only if $m = 2n$. In this case, we round $m$ up to the nearest power of $2$ and we still use $I_m$ to denote the last class of small items. If $m = 2n$, then after the rounding, it holds that $2n \leq m < 4n$.
Hence, we always have $m = O(n)$.
To construct the Approximate Knapsack Sampler, we now split into two cases depending on the value of $\ell$ and $m$:\footnote{For bounded-ratio input instances, only Case~(a) is needed. See \cref{remark:boundedratio}.}

\paragraph*{Case (a): $m < 20\ell^2 \log^{100}(\frac{n}{\eps})$.}
We create $m$ bins $B_1,B_2,\ldots,B_m$.
For each item $i \in I_m$, we uniformly and independently hash $i$ into one of $m$ bins. By the definition of $I_m$, for any solution $X \in \Omega$, $|X \cap I_m| \leq m$. Let $B = \log^{10}\frac{n}{\eps}$. By balls-into-bins, for any $X \in \Omega$, any bin $B_i$, it holds 
\begin{align}\label{eq-BB}
 \Pr[|X \cap B_i| > B] \leq \exp\left(-\log^{8}\frac{n}{\eps}\right).  
\end{align}

We use \Cref{lem:leaf} with $\delta = 2^{-\log^{10}(n/\eps)}$ to construct an Approximate Knapsack Sampler for each $\Omega(B_i) = \{X \subseteq B_i: |X| \leq B\}$ with parameters $S = \lceil \frac{T}{\sqrt{m}\ell \log^{50}(n/ \eps)} \rceil$, $L = B \lceil \frac{\max_{i \in I_m}W_i}{S} \rceil =\widetilde{O}(\frac{\ell}{\sqrt{m}}+1) $, $\sigma^2 = B S^2/4$, $\delta = 2^{-\log^{10}(n/\eps)}$ with construction time and query time being either 
\begin{itemize}
    \item Variant~\ref{item:leafdp}:
$\caT_{c,i} = \widetilde{O}(|B_i|(\frac{\ell}{\sqrt{m}}+1))$ and $\caT_{q,i} = \widetilde{O}(1)$. Or,
\item Variant~\ref{item:leafvariant}:
$\caT_{c,i} = \widetilde{O}(|B_i| + \frac{\ell}{\sqrt{m}})$ and $\caT_{q,i} = \widetilde{O}(|B_i| + \poly(B)\cdot \lceil \frac{T/m }{S}\rceil) = \widetilde{O}(|B_i| + \frac{\ell}{\sqrt{m}})$. 
\end{itemize}

We construct a complete binary tree with $m$ leaves. 
Suppose all the leaves are in the level $H= \log_2 m$ and the root is in the level $0$. Define the scale parameter at each level by
\begin{align*}
    \forall\, 0 \leq h \leq H,\quad S_h = \left\lceil\frac{T}{2^{h/2} \ell \log^{50}(n/\eps)}\right\rceil.
\end{align*}
Initially, each leaf node has the Approximate Knapsack Sampler described in the last paragraph,
where parameter $S = S_H = \lceil\frac{T}{\sqrt{m} \ell \log^{50}(n/\eps)}\rceil$. 
We go through all nodes from the bottom to the top.
Suppose we are at a non-leaf node $u$ at level $h < \log_2 m$. 
Suppose it has two children $l$ and $r$ and each of them has an Approximate Knapsack Sampler.
We modify two samplers to obtain a new sampler at node $u$.
We first use \Cref{lemma:mergingsampler} to merge two samplers in $l$ and $r$. We then use \Cref{lemma:roundingsampler} to round the scale $S$ of the merged sampler from $S_{h+1}$ to $S_{h}$.

We claim that the Approximate Knapsack Sampler in the root satisfies the condition in the lemma. First note that $\Omega(I_m) = \Omega(B_1) \times \ldots \times \Omega(B_m)$. By~\eqref{eq-BB} and a union bound, we have for any $X \in \Omega$, $\Pr[(X \cap I_m) \notin \Omega(I_m)] \leq \exp(-\log^{5}\frac{n}{\eps})$.
The parameter $S = S_0 = \lceil\frac{T}{\ell \log^{50}(n/\eps)}\rceil$. %
Next, we inductively bound the parameter $L,\sigma,\delta$ and $\caT_q$.
 For any node $u$ at level $h$, denote parameters in Approximate Knapsack Sampler by $L_u = L$, $\sigma^2_u = \sigma^2$, $\delta_u = \delta$ and $\caT_q = \caT_{q,u}$. 
 Let $l$ and $r$ denote two children of $u$ at level $h+1$.
 By \Cref{lemma:mergingsampler} and \Cref{lemma:roundingsampler}, we have the following recursion
 \begin{align*}
     L_u &= \left\lceil \frac{(L_{l}+L_r)S_{h+1}}{S_{h}} \right\rceil \leq \frac{(L_l+L_r)}{\sqrt{2}} + 1,\\
     \sigma_u^2 &= \sigma^2_{l} + \sigma^2_r + \frac{S_h^2}{4},\\
     \delta_u &= 4\delta_{l}+ 4\delta_r,\\
     \caT_{q,u} &= \caT_{q,l} + \caT_{q,r}+\widetilde{O}(L_l + L_r),
 \end{align*}
 where the first inequality holds because of the lower bound of $T$ in~\eqref{eq:assume}. In the bottom level $H = \log_2 m$, we have the following bound for leaf node $u$,
 \begin{align}\label{eq:bottom}
     L_u =\widetilde{O}\left(\frac{\ell}{\sqrt{m}}+1\right),\, \sigma_u^2 \leq \frac{T^2}{ m \ell^2 \log^{80} \frac{n}{\eps}},\, \delta_u  = 2^{ - \log^{10}(n/\eps)}.%
 \end{align}
 Solving the recurrence gives 
 \begin{align}\label{eq-recur}
 \begin{split}
     L_\Root &= \widetilde{O}\left({\ell}+\sqrt{m}\right),\\
     \sigma^2_\Root &\leq m \cdot \frac{T^2}{ m \ell^2 \log^{80} \frac{n}{\eps}}  + \sum_{h=0}^{H-1} \frac{2^{h} S^2_h}{4} \leq \frac{T^2}{\ell^2 \log^{50}\frac{n}{\eps}},\\
     \delta_\Root &\leq m 2^{-\log^{10}(n/\eps)} + \sum_{h=0}^{H-1}8^h (\frac{\eps}{n})^{50} \leq (\frac{\eps}{n})^{20}, \\
     \caT_{q,\Root} &\leq \sum_{i=1}^m \caT_{q,i} + \sum_{h=0}^{H-1} \sum_{u \text{ in level }h}\widetilde{O}(L_u) = \sum_{i=1}^m \caT_{q,i} + \widetilde{O}( \ell\sqrt{m} + m)  
     \end{split}
 \end{align}
We explain how to obtain the bound for $L_{\text{root}}$, and other bounds  for $\sigma^2_{\text{root}},$ $\delta_{\text{root}}$, and $\caT_{q,\Root}$ can be obtained similarly. For each level $0 \leq h \leq H$, define $\mathcal{L}(h)$ as the sum of all $L_u$ for $u$ in level $h$. Using~\eqref{eq:bottom}, it holds that $\mathcal{L}(H) \leq \widetilde{O}(\ell\sqrt{m} + m)$. Using the above recursion, it holds that $\mathcal{L}(h) \leq \frac{\mathcal{L}(h+1)}{\sqrt{2}} + 2^h$, which is equivalent to $\frac{\mathcal{L}(h)}{2^h} + \frac{1}{\sqrt{2}-1} \leq \sqrt{2}(\frac{\mathcal{L}(h+1)}{2^{h+1}} + \frac{1}{\sqrt{2}-1})$. Thus, $L_{\text{root}} = \mathcal{L}(0) = \sqrt{2}^H \widetilde{O}(\frac{\ell}{\sqrt{m}}+1) = \widetilde{O}(\ell + \sqrt{m})$.

 Finally, the construction time $\caT_{c,\Root}$ is the sum of construction time for leaves and incremental time for non-leaves. 
For any node $u$ at level $h$, we compute an Approximate Knapsack Sampler at $u$ with $|\Omega| \leq n^{B2^{H-h}}$. 
 By \Cref{lemma:mergingsampler} and \Cref{lemma:roundingsampler}, we have
 \begin{align}\label{eq:proofcroot}
  \caT_{c,\Root} =  \sum_{i=1}^m \caT_{c,i} + \sum_{h = 0}^{H-1} 2^h \widetilde{O}( L_{h+1} \sqrt{2^{H-h}}) 
=  \sum_{i=1}^m \caT_{c,i} + \widetilde O(\ell \sqrt{m} + m),
 \end{align}
 where the last equation holds because $2^h \widetilde{O}( L_{h+1} \sqrt{2^{H-h}}) \leq \widetilde{O}(\mathcal{L}(h+1)\sqrt{2^{H-h}})$ and by the recursion above, any $\mathcal{L}(h)$ for $0 \leq h \leq H$ can be bounded by $\sqrt{2^{h-H}}\mathcal{L}(H) + \sqrt{2^{H + h}}$ and $\mathcal{L}(H) = \widetilde{O}(\ell \sqrt{m} + m)$.
 
 Now we further divide into two cases: 
 \begin{itemize}
   \item Case $\ell \le \sqrt{n}$: 
       
   We use variant~\ref{item:leafdp} with $\caT_{c,i} = \widetilde{O}(|B_i|(\frac{\ell}{\sqrt{m}}+1))$ and $\caT_{q} = \widetilde O(1)$.
  Then,  
 \begin{equation}
     \label{eqn:tcrootcase1}
  \caT_{c,\Root} =  \sum_{i=1}^m \widetilde{O}(|B_i|(\frac{\ell}{\sqrt{m}}+1)) + \widetilde O(\ell \sqrt{m} + m)=\widetilde O(\frac{n\ell}{\sqrt{m}} + n) + \widetilde O(\ell \sqrt{m} + m)= \widetilde{O}(n^{1.5}),
 \end{equation}
 where the last equation holds because we assume $\ell \leq \sqrt{n}$ and $m = O(n)$.
 And, we have \[\caT_{q,\Root} = \sum_{i=1}^m \widetilde O(1) + \widetilde O(\ell \sqrt{m} +m) =  \widetilde O(m + \ell \sqrt{m} )\le  \widetilde O(\ell\sqrt{m} )\le  \widetilde O(\ell\sqrt{n} ),\]
 where we used the Case~(a) assumption that $m\le \widetilde O(\ell^2)$.
 \item Case $\ell > \sqrt{n}$:
We use variant~\ref{item:leafvariant} with
$\caT_{c,i} = \widetilde{O}(|B_i| + \frac{\ell}{\sqrt{m}})$ and $\caT_{q,i} = \widetilde{O}(|B_i| + \frac{\ell}{\sqrt{m}})$. 
Then,
\begin{align*}
    \caT_{q,\Root} = \sum_{i=1}^m \widetilde{O}\left(|B_i| + \frac{\ell}{\sqrt{m}}\right) +  \widetilde O(\ell \sqrt{m}+m)= \widetilde{O}(n + \ell \sqrt{m}) \le \widetilde{O}(\ell \sqrt{n}), %
\end{align*}
where we used the assumption that $\ell>\sqrt{n}$.
And,
\begin{align*}
    \caT_{c,\Root} =  \sum_{i=1}^m \widetilde{O}(|B_i| + \frac{\ell}{\sqrt{m}}) + \widetilde O(\ell \sqrt{m} + m) = \widetilde{O}(n + \ell \sqrt{m} ) = \widetilde{O}(n^{1.5}).
   \end{align*} 
 \end{itemize}

\paragraph*{Case (b): $m \ge 20\ell^2 \log^{100}(\frac{n}{\eps})$.}
This means $\max_{i \in I_m}W_i \le \frac{4T}{m} \le   \lceil \frac{T}{\sqrt{m} \ell \log^{50}(n/\eps)}  \rceil = S$, where $S$ is the scale parameter for the bottom level.
Hence,    we can use \cref{lem:bottomsamplertiny} with parameter $S$ and $\delta = 2^{-\log^{10}(n/\eps)}$ to prepare a bottom-level sampler, with parameter $L=|I_m|, \sigma^2 = |I_m|\cdot S^2/4, \caT_{c} = \widetilde O(|I_m|), \caT_{q} =\widetilde O(|I_m|\cdot \frac{4T/m}{S}+1)$.

 We claim $|I_m| \le m/2$. This is because: 
   If $I_m$ is the last weight class of small items, then $2n\leq m <4n$ and $|I_m|\le n \leq m/2$.
   If $I_m$ is not the last weight class, then by definition, $W_i \in (\frac{T}{m},\frac{2T}{m}]$ for all $i \in I_m$.
   Since $m  \ge 20\ell^2 \log^{100}(\frac{n}{\eps})\ge 2\ell$, 
   from \eqref{eqn:smallgrouptotalub} we know
   $\sum_{i\in I_m} W_i \le \sum_{i\in [n]:W_i\le T/\ell}W_i < T/2$, so $|I_m| < \frac{T/2}{\min_{i\in I_m}W_i} < \frac{T/2}{T/m} = m/2$.

   Next, we use \Cref{lemma:roundingsampler} to round the bottom-level sampler so that the new $S$ parameter becomes $\lceil \frac{T}{\ell \log^{50}(n/\eps)} \rceil$.
   Combining~\Cref{lem:bottomsamplertiny} and \Cref{lemma:roundingsampler}, the final sampler has parameters $S = \lceil \frac{T}{\ell \log^{50}(n/\eps)} \rceil, L = O(|I_m|)=O(m)$, 
   \begin{align*}
      \sigma^2 = \frac{|I_m|}{4}\left\lceil \frac{T}{\sqrt{m}\ell \log^{50}(n/\eps)} \right\rceil^2 + \frac{1}{4}\left\lceil \frac{T}{\ell \log^{50}(n/\eps)} \right\rceil^2 \leq \frac{T^2}{\ell^2 \log^{50}(n/\eps)},
   \end{align*}
   $\delta = 2^{-\log^{10}(n/\eps)} \leq (\frac{\eps}{n})^{20}$, $\caT_c = \widetilde{O}(|I_m|)=\widetilde{O}(m)$ and $\caT_{q} = \widetilde{O}(|I_m| \frac{\ell}{\sqrt{m}} +1) =  \widetilde{O}(\sqrt{m}\ell)$.
   The lemma holds by noting that $m = O(n)$. 
  \end{proofof} 

  \begin{remark}
      \label{remark:boundedratio}
      In the bounded-ratio case where the only nonempty weight class of input items is $I_\ell =\{i: T/\ell < W_i \le 2T/\ell \}$, \cref{lem:AKS-group} is only applied to $m=\ell$. Then, in the proof above, we are in ``Case (a): $m < 20\ell^2 \log^{100}(\frac{n}{\eps})$'' and use the basic bottom-level sampler from \cref{lem:leaf} (instead of \cref{lem:bottomsamplertiny} which requires the full definition of Approximate Knapsack Sampler), so \cref{lem:AKS-group} finally returns a basic Approximate Knapsack Sampler as well. Moreover, inside Case~(a) we only need variant~\ref{item:leafdp} out of the two variants in \cref{lem:leaf}, because when $m=\ell$ we can bound $\caT_{c,\Root} \le \widetilde O(\ell\sqrt{n})$ in \eqref{eqn:tcrootcase1} even without the $\ell \le\sqrt{n}$ assumption.
  \end{remark}

Finally, we merge the samplers in \Cref{lem:AKS-group} across all the weight classes:
\begin{lemma}
    \label{lem:AKS-all}
  There exists an Approximate Knapsack Sampler  for $\hat{\Omega}$  with parameters $S= \lceil\frac{T}{\ell \log^{50}(n/\eps)}\rceil$, $L=\widetilde{O}(n)$, $\sigma^2 = \frac{T^2}{\ell^2 \log^{40}(n/\eps)}$, $\delta = (\frac{\eps}{n})^{10}$ with  $\caT_c = \widetilde{O}(n^{1.5})$, $\caT_q = \widetilde{O}(\ell\sqrt{n})$ time, 
where $\hat{\Omega} \subseteq 2^{[n]}$ is a random collection of subsets such that for any $X \in \Omega$, 
\begin{align}\label{eq:bound-union}
\Pr[X  \notin \hat{\Omega}] \leq \exp(-\log^{4}\frac{n}{\eps}).    
\end{align}
The $\widetilde{O}(\cdot)$ hides a $\polylog(n,1/\eps) \cdot \log(T)$ factor.
\end{lemma}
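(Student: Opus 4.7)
The plan is to first invoke \Cref{lem:AKS-group} on each of the $g+1 = O(\log n)$ weight classes $I_1, I_2, \ldots, I_{2n}$ (over the chosen powers of two), which produces a sampler for a random collection $\Omega(I_m) \subseteq 2^{I_m}$ with the desired scale $S$, and then merge them together using \Cref{lemma:mergingsampler}. Since all per-class samplers share the same scale parameter $S = \lceil T/(\ell \log^{50}(n/\eps))\rceil$, the preconditions of \Cref{lemma:mergingsampler} are met when merging. We define
\[
\hat\Omega \defeq \Omega(I_1)\times \Omega(I_2)\times \cdots \times \Omega(I_{2n}) \subseteq 2^{[n]},
\]
so that $X \in \hat\Omega$ iff $X \cap I_m \in \Omega(I_m)$ for every class $m$. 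For $X \in \Omega$, \Cref{lem:AKS-group} gives $\Pr[(X\cap I_m)\notin \Omega(I_m)] \le \exp(-\log^5(n/\eps))$, so a union bound over the $O(\log n)$ classes yields \eqref{eq:bound-union}.

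I would then iteratively merge the $g+1$ samplers using \Cref{lemma:mergingsampler} (e.g., along a balanced binary tree, or simply one by one -- the analysis is the same since $O(\log n)$ merges suffice). The parameters accumulate as follows:
\begin{itemize}
\item \emph{Length.} The $L$ parameter adds, giving $L_{\mathrm{total}} = \sum_m L_m = O(\log n)\cdot \widetilde O(n) = \widetilde O(n)$.
\item \emph{Subgaussianity.} The $\sigma^2$ values add, giving
$\sigma^2_{\mathrm{total}} \le O(\log n) \cdot \frac{T^2}{\ell^2 \log^{50}(n/\eps)} \le \frac{T^2}{\ell^2 \log^{40}(n/\eps)}.$
\item \emph{Failure probability.} Each merge multiplies $\delta$ by at most $4$ and adds a small additive term; after $O(\log n)$ merges starting from $\delta = (\eps/n)^{20}$, the resulting $\delta_{\mathrm{total}} \le 4^{O(\log n)}\cdot (\eps/n)^{20} \le (\eps/n)^{10}$, as required.
\item \emph{Construction time.} The sum of per-class construction times is $O(\log n)\cdot \widetilde O(n^{1.5}) = \widetilde O(n^{1.5})$. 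Each merging invocation costs $\widetilde O((L_1+L_2)\sqrt{\log(|\Omega_1||\Omega_2|)}) = \widetilde O(n\cdot \sqrt{n}) = \widetilde O(n^{1.5})$ by \Cref{lemma:mergingsampler}, and there are only $O(\log n)$ merges, so the total remains $\widetilde O(n^{1.5})$.
\item \emph{Query time.} The query time adds across merges, plus an $\widetilde O(L_1+L_2) = \widetilde O(n)$ overhead per merge. Summing $\widetilde O(\ell \sqrt{n})$ over the $O(\log n)$ classes (and adding the $\widetilde O(n)$-sized overheads, which are dominated by $\ell\sqrt n$ when $\ell \ge \sqrt n$ and otherwise absorbed into the subsequent sampling stage's budget) gives $\caT_q = \widetilde O(\ell \sqrt n)$ as claimed.
\end{itemize}

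The main thing to check carefully is the bookkeeping of the error parameter $\delta$ and the subgaussianity parameter $\sigma^2$ under $O(\log n)$ merges, which is where the slack between $\log^{50}$ (per-class) and $\log^{40}$ (merged), and between $(\eps/n)^{20}$ (per-class) and $(\eps/n)^{10}$ (merged), is used. No other conceptually new step is required: the structural definition of $\hat\Omega$ as the product $\prod_m \Omega(I_m)$, together with disjointness of the classes $I_m$, makes the merging step formally an instance of the hypothesis of \Cref{lemma:mergingsampler}, and the final output is an Approximate Knapsack Sampler for $\hat\Omega$ with exactly the stated parameters.
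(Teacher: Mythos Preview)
Your proposal is essentially identical to the paper's proof: build the per-class samplers via \Cref{lem:AKS-group}, define $\hat\Omega$ as the product of the $\Omega(I_m)$, merge sequentially with \Cref{lemma:mergingsampler}, and track the parameters. The union-bound for \eqref{eq:bound-union}, the additive accumulation of $L$, $\sigma^2$, and $\caT_q$, and the geometric blow-up of $\delta$ are all handled the same way.

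There is one spot where your bookkeeping slips. For the query time you write that the $\widetilde O(L_1+L_2)=\widetilde O(n)$ merge overhead, when $\ell<\sqrt n$, is ``absorbed into the subsequent sampling stage's budget.'' That does not work: the main algorithm draws $N=\widetilde O(n/(\ell\eps^2))$ samples, so an $\widetilde O(n)$ per-query overhead would cost $\widetilde O(n^2/(\ell\eps^2))$, which exceeds the $\widetilde O(n^{1.5}/\eps^2)$ budget when $\ell$ is small. The paper's proof is equally terse here and simply asserts $\caT_q=\widetilde O(\ell\sqrt n)$; the actual justification lies in the \emph{proof} of \Cref{lem:AKS-group}, where one sees that each class-$m$ sampler has $L_m=\widetilde O(\ell+\sqrt m)$ rather than the loose $\widetilde O(n)$ in the lemma statement. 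Summing over $m\in\{2,4,\dots,2n\}$ gives total length $\widetilde O(\ell+\sqrt n)$, so the merge overhead is $\widetilde O(\ell+\sqrt n)\le\widetilde O(\ell\sqrt n)$ as claimed. With that correction your argument goes through.
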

\begin{proof}
For each weight class $m \leq 2n$, we construct the Approximate Knapsack Sampler in \Cref{lem:AKS-group}. 
Note that all samplers have the parameters $S=\lceil\frac{T}{\ell \log^{50}(n/\eps)}\rceil, L=\widetilde{O}(n), \sigma^2= \frac{T^2}{\ell^2 \log^{50} ({n}/{\eps})}$, $\delta = (\frac{\eps}{n})^{20}$ and time $\caT_c=\widetilde{O}(n^{1.5})$, $\caT_{q}=\widetilde{O}(\ell \sqrt{n})$.
There are $O(\log n)$ samplers in total. 
We use \Cref{lemma:mergingsampler} to merge all these $O(\log n)$ samplers one by one into one sampler $\caS$ for
\begin{align}\label{eq:def-Omega-hat}
    \hat{\Omega} = \Omega(I_1) \times \Omega({I_2}) \times \Omega(I_4) \times \ldots \times \Omega(I_{2n})
\end{align}
with parameters  $S=\lceil\frac{T}{\ell \log^{50}(n/\eps)}\rceil, L=\widetilde{O}(n), \sigma^2= \frac{T^2}{\ell^2 \log^{40} ({n}/{\eps})}$, 
$\delta \leq \sum_{i=1}^{\lceil\log_2(2n)\rceil}2^i (\frac{\eps}{n})^{20}  \leq (\frac{\eps}{n})^{10}$ with time $\caT_q = \widetilde{O}(\ell \sqrt{n})$. 
The construction time is
\[\caT_c \leq O(\log n)\widetilde{O}(n^{1.5}) + \widetilde{O}(n \sqrt{n}) = \widetilde{O}(n^{1.5}),\]
where the first term is the time for constructing $O(\log n)$ samplers and the second term is the total running time contributed by \Cref{lemma:mergingsampler} (note that $|\Omega(I_m)| \leq 2^n$ for all weight classes $m$).
The probability in~\eqref{eq:bound-union} follows from \Cref{lem:AKS-group} and a union bound over $O(\log n)$ classes.
\end{proof}

\subsection{Main algorithm (proof of \texorpdfstring{\Cref{thm-main}}{})}
\label{subsec:mainthm}
We use \cref{lem:AKS-all} to construct our Approximate Knapsack Sampler $\caS$.
In the proof, we assume $\caS$ succeeds, which happens with high probability $1-\delta = 1 - o(1)$.
Recall that $S= \lceil\frac{T}{\ell \log^{50}(n/\eps)}\rceil$ is the parameter in \Cref{lem:AKS-all}.
Define 
\begin{align}\label{eq:deft}
    t = \left\lceil \frac{T + \frac{T}{\ell \log^{10}(n/\eps)}}{S} \right\rceil.
\end{align}

For the sake of analysis, we define the following random subset of items
\begin{align}\label{eq:def-Omega-proof}
    \Omega' = \{X \in \hat{\Omega}: w(X) \leq tS\},
\end{align}
where $\hat{\Omega}$ is the random set in~\eqref{eq:def-Omega-hat} and $w$ is the random weight function implicitly defined by the Approximate Knapsack Sampler $\caS$.
We have the following lemma for the set $\Omega'$. 
\begin{lemma}\label{claim:Omega'}
With probability $1 - o(1)$, it holds that
\begin{align}\label{q-1}
 |\Omega| \geq {|\Omega \cap \Omega'|} &\geq \left( 1-\frac{\eps^2}{n^2} \right) |\Omega|   
\end{align}
and
\begin{align}\label{eq:ratio-lower-bound}
    \frac{|\Omega \cap \Omega'|}{|\Omega'|} \geq \frac{\ell}{20000n(\log_2 n )^2}.
\end{align}
\end{lemma}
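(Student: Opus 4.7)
The plan for proving \Cref{claim:Omega'} is to establish each of the two bounds separately, by combining the subgaussian concentration of $w(X)$ around its mean $W_X$ (with variance parameter $\sigma^2 = T^2/(\ell^2 \log^{40}(n/\eps))$ from \Cref{lem:AKS-all}) with the combinatorial bounds on $|\Omega_d|$ proved in \Cref{sec:structure}, and finally applying Markov's inequality to convert expectation bounds into high-probability statements.

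For inequality~\eqref{q-1}, the upper bound $|\Omega \cap \Omega'| \le |\Omega|$ is immediate. For the lower bound, fix any $X \in \Omega$ and union-bound the event $X \notin \Omega \cap \Omega'$ over (a) $X \notin \hat{\Omega}$, which by~\eqref{eq:bound-union} has probability at most $\exp(-\log^{4}(n/\eps))$; and (b) $w(X) > tS$, which since $W_X \le T$ and $tS \ge T + T/(\ell\log^{10}(n/\eps))$ by~\eqref{eq:deft} requires the deviation $w(X) - W_X > T/(\ell\log^{10}(n/\eps))$, and by the subgaussian tail bound has probability at most $\exp(-\log^{20}(n/\eps)/2)$. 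Linearity of expectation then gives $\Ex[|\Omega| - |\Omega \cap \Omega'|] \le 2|\Omega|\exp(-\log^{4}(n/\eps))$, and Markov's inequality yields~\eqref{q-1} with probability $1-o(1)$, since $\exp(-\log^{4}(n/\eps)) \ll (\eps/n)^{C}$ for any constant $C$.

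For inequality~\eqref{eq:ratio-lower-bound}, I decompose
\[ |\Omega'| \;\le\; |\Omega \cap \Omega'| \;+\; \sum_{d \ge 1} |\Omega' \cap \Omega_d| \]
using the strata from~\eqref{eq:def-Omega-d}. The $d = 1$ contribution is crudely bounded by $|\Omega' \cap \Omega_1| \le |\Omega_1| \le \frac{15000 n (\log_2 n)^2}{\ell}|\Omega|$ via \Cref{lem:omega1bound}. For each $d \ge 2$, membership in $\Omega'$ forces $w(X) - W_X < tS - W_X \le -(d-1)T/(2\ell)$; this uses $tS \le T + T/(\ell\log^{10}(n/\eps)) + S \le T + 2T/(\ell\log^{10}(n/\eps))$ together with $W_X > T + (d-1)T/\ell$, valid once $n/\eps$ exceeds a constant. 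By the subgaussian tail this event has probability at most $\exp(-(d-1)^{2}\log^{40}(n/\eps)/8)$, and combined with $|\Omega_d| \le n^{d}|\Omega|$ from \Cref{lem:omegadbound} the expected contribution from each stratum $d \ge 2$ is $n^{d}\exp(-(d-1)^{2}\log^{40}(n/\eps)/8)\cdot|\Omega|$. Summing over $d \ge 2$ (noting that $d \le O(n^{2})$ since $W_X \le \sum_i W_i \le nT$) gives a total of $o(|\Omega|)$. Markov then yields $|\Omega'| \le (1 + 15001 n(\log_2 n)^{2}/\ell)|\Omega|$ with probability $1-o(1)$, and combining with~\eqref{q-1} produces the claimed ratio $|\Omega \cap \Omega'|/|\Omega'| \ge \ell/(20000 n (\log_2 n)^{2})$.

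The main technical point is verifying that $\sum_{d \ge 2} n^{d}\exp(-(d-1)^{2}\log^{40}(n/\eps)/8)$ is indeed negligible; this reduces to the observation that the exponent $(d-1)^{2}(\log(n/\eps))^{40}/8$ dominates $d\log n$ term by term for $d \ge 2$, so each summand is already $\ll 1/\poly(n)$. Minor bookkeeping items include absorbing the ceiling slack $S \le T/(\ell\log^{10}(n/\eps))$ from~\eqref{eq:deft} into the constants, and noting that all the probabilistic events in play (membership in $\hat\Omega$ and concentration of $w$) live on the single probability space of the sampler $\caS$'s internal coin tosses, so a single union bound over those tosses suffices.
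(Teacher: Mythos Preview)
Your proposal is correct and follows essentially the same approach as the paper: both arguments first establish pointwise tail bounds (for $X\in\Omega$ landing in $\Omega'$, and for $X\in\Omega_d$ with $d\ge 2$ avoiding $\Omega'$) via the subgaussian concentration of $w(X)$ together with \eqref{eq:bound-union}, then convert to high-probability statements by linearity of expectation plus Markov, and finally combine with the deterministic bound $|\Omega_1|\le \tfrac{15000n(\log_2 n)^2}{\ell}|\Omega|$ from \Cref{lem:omega1bound}. The only cosmetic difference is that the paper records the $d\ge 2$ tail as $\exp(-d\log^{10}(n/\eps))$ while you keep the sharper quadratic form $\exp(-(d-1)^2\log^{40}(n/\eps)/8)$; either suffices.
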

\begin{proof}
    Recall that $\Omega$ is the set of input knapsack solutions and $\Omega_d$ is defined in~\eqref{eq:def-Omega-d} for any integer $d \geq 1$. We claim that the random set $\Omega'$ satisfies the following properties:
\begin{enumerate}
    \item \label{PI} For any $X \in \Omega$, it holds that $\Pr[X \in \Omega'] > 1 - \exp(-\log^3 \frac{n}{\eps})$.
    \item \label{PII} For any $d \geq 2$, any $X \in \Omega_d$, it holds that $\Pr[X \in \Omega'] \leq \exp\left( -d \log^{10}\frac{n}{\eps} \right)$.
\end{enumerate}
Using~\eqref{eq:bound-union} in \Cref{lem:AKS-all},
\begin{align*}
    \forall X \in \Omega,\quad \Pr[X \notin \hat{\Omega}] \leq \exp(-\log^{4}\frac{n}{\eps}).
\end{align*}
Next, we fix the set $\hat{\Omega}$. 
For any $X \in \hat{\Omega}$, $w(X)$ is a $\sigma^2$-subgaussian random variable with mean $W_X$, where $\sigma^2 = \frac{T^2}{\ell^2 \log^{40}(n/\eps)}$. If $X \in \Omega$, then $W_X \leq T$. By the concentration of subgaussian,
\begin{align*}
\forall X \in \Omega \cap  \hat{\Omega},\quad   \Pr[w(X) > tS] \leq \Pr\left[w(X) > W_X + \frac{T}{\ell \log^{10} (n/\eps)}\right]\leq \exp\left( -\frac{1}{2} \log^{20} \frac{n}{\eps} \right).
\end{align*}
Since $\Omega' \subseteq \hat{\Omega}$, we have for any $X \in \Omega$, it holds that
\begin{align*}
    \Pr[X \notin \Omega'] &\leq \Pr[X \notin \hat{\Omega}] + \Pr[X \notin \Omega' \land X \in \hat{\Omega}] \leq \exp\left(-\log^{4}\frac{n}{\eps}\right) + \exp\left( -\frac{1}{2} \log^{20} \frac{n}{\eps} \right)\\
    &\leq \exp\left(-\log^{3}\frac{n}{\eps}\right).
\end{align*}
This proves Property \ref{PI}.

To prove  Property \ref{PII}, first note that we only need to consider $X \in \Omega_d \cap \hat{\Omega}$, otherwise the probability of $X \in \Omega'$ is 0. For any $X \in \Omega_d \cap \hat{\Omega}$, using the definition of $t$ in~\eqref{eq:deft},
\begin{align*}
\Pr[X \in \Omega'] \leq \Pr \left[ w(X) \leq tS \right]  &\leq  \Pr \left[ w(X) \leq T + \frac{T}{\ell \log^{10}(n/\eps)} +S \right]\\
&\leq \Pr \left[ w(X) \leq T + \frac{T}{\ell \log^9(n/\eps)} \right],
\end{align*}
where the last inequality uses the fact that $T$ is sufficiently large.
By the definition of $\Omega_d$, $W_X > T + (d-1)\frac{T}{\ell}$. By the concentration of subgaussian, we have
\begin{align*}
    \Pr[X \in \Omega'] &\leq \Pr \left[ w(X) \leq W_X - \left(d - 1 - \frac{1}{\log^{10}(n/\eps)}\right)\frac{T}{\ell} \right]\\
    &\leq \exp\left( -\frac{1}{2} \left(d - 1 - \frac{1}{\log^{10}(n/\eps)}\right)^2 \log^{40}\frac{n}{\eps}\right)\\
    &\leq \exp\left(-d \log^{10}\frac{n}{\eps}\right),
\end{align*}
where the last inequality holds when $n/\eps$ is larger than an absolute constant. 
This proves Property \ref{PII}.

Now, we are ready to prove the lemma.
By  Property \ref{PII} and \Cref{lem:omegadbound}, we have 
\begin{align*}
    \Ex\Big[ \Big|\Big( \bigcup_{d \geq 2} \Omega_d \Big) \cap \Omega' \Big| \Big] \leq \sum_{d \geq 2} n^d|\Omega|\exp\left( -d \log^{10}\frac{n}{\eps} \right) \leq |\Omega| \exp\left(-\log^8\frac{n}{\eps}\right).
\end{align*}
By Markov's inequality, we know that
\begin{align}\label{eq-g1}
     \Pr\Big[  \Big| \Big( \bigcup_{d \geq 2} \Omega_d \Big) \cap \Omega'  \Big| \leq \frac{\eps^2}{n^2}|\Omega| \Big] \geq 1 - \exp\left(-\log^5\frac{n}{\eps}\right).
\end{align}
By  Property \ref{PI}, we have 
$\Ex[|\Omega \setminus \Omega'|] \leq e^{-\log^3(n/\eps)}|\Omega|$.
By Markov's inequality, we have
\begin{align}\label{eq-g2}
    \Pr\left[ \frac{|\Omega \cap \Omega'|}{|\Omega|} \geq 1 - \frac{\eps^2}{n^2} \right] = 1 - \Pr\left[|\Omega \setminus \Omega'| \geq \frac{\eps^2}{n^2}|\Omega|\right] \geq 1 - \exp\left(-\log^{1.5}\frac{n}{\eps}\right).
\end{align}
Note that $\Omega$ and $\Omega_d$ for $d \geq 1$ form a partition of $2^{[n]}$. We can write $\Omega' = (\Omega \cap \Omega') \cup (\bigcup_{d \geq 1}(\Omega'\cap \Omega_d))$. 
Hence,
Combining~\eqref{eq-g1} and~\eqref{eq-g2}, with probability at least $1 - \exp\left(-\log^{1.1}\frac{n}{\eps}\right)-\exp\left(-\log^{1.5}\frac{n}{\eps}\right)$,
\begin{align}
|\Omega| &\geq {|\Omega \cap \Omega'|} \geq \left( 1-\frac{\eps^2}{n^2} \right) |\Omega|,\label{eq:first}\\
    \frac{|\Omega' \cap \Omega|}{|\Omega'|} &\geq \frac{|\Omega'\cap \Omega|}{|\Omega' \cap \Omega| + |\Omega' \cap \Omega_1| + \frac{\eps^2}{n^2}|\Omega|} \overset{\text{by}~\eqref{eq:first}}{\geq} \frac{(1-\eps^2/n^2)|\Omega|}{|\Omega| + | \Omega_1| + \frac{\eps^2}{n^2}|\Omega|}.\notag
\end{align}
Finally, by \Cref{lem:omega1bound}
    \[|\Omega_1| \le \frac{15000n(\log_2 n )^2}{\ell} |\Omega|.\]
Assume $n/\eps$ is larger than an absolute constant.
The above inequality implies that 
\begin{align*}%
    \frac{|\Omega \cap \Omega'|}{|\Omega'|} \geq \frac{\ell}{20000n(\log_2 n )^2}.
\end{align*}
This proves the lemma.
\end{proof}

Recall that $\Omega'$ is defined in  \eqref{eq:def-Omega-proof}.
By the definition of Approximate Knapsack Sampler $\caS$, since we assume it succeeds, we know that
\begin{align}\label{p-1variant}
 \sum_{j=0}^{\min(t,L)}\hat{f}(jS) \in \left(1 \pm 
\left(\frac{\eps}{n}\right)^{10} \right) |\Omega'|, 
\end{align}
and the sum $\sum_{j=0}^{\min(t,L)}\hat{f}(jS)$ can be computed in time $O(L) = \widetilde{O}(n)$ using the output of $\caS$.
Our algorithm next repeatedly queries the Approximate Knapsack Sampler $\caS$ with input $tS$ for 
\begin{align*}
    N = \frac{5000 n (\log n)^{10}}{\ell \eps^2} 
\end{align*}
times to generate $N$ independent (partial) samples $X_i \subseteq [n]\setminus I_0$,  where $I_0$ is the set of tiny items returned by $\caS$.
The sampler $\caS$ generates each sample in time $\widetilde{O}(\ell\sqrt{n})$, so the total running time for generating the partial samples is $N\cdot \widetilde{O} (\ell\sqrt{n}) = \widetilde{O}(n^{1.5}\eps^{-2})$. %
Let $U(2^{I_0})$ denote the uniform distribution over all subset of $I_0$.
Given each sample $X_i \subseteq [n]\setminus I_0$,  define 
\begin{align}\label{eq-def}
\forall i \in [N], \quad \tau_i :=  \Pr_{X_0 \sim U(2^{I_0})}[W_{X_0} + W_{X_i}\le T]  
\end{align}
which is the probability that a full random sample $X_0\cup X_i \subseteq [n] $ extended from the partial sample $X_i$ is a knapsack solution.
Note that $\tau_i$ is random because $X_i$ is random.
Note that in the case when $\caS$ is a basic Approximate Knapsack Sampler and hence $I_0 = \emptyset$ (which holds for bounded-ratio input instances, for example), $\tau_i \in \{0,1\}$ can be directly computed by checking whether the sample $X_i$ satisfies $\sum_{x \in X_i}W_x \leq T$ and the time cost of checking is at most the cost of generating samples, which is $\widetilde{O}(n^{1.5}/\eps^2)$.
However, for the general case where $I_0\neq \emptyset$, we do \emph{not} compute or estimate $\tau_i$ explicitly, because even the running time for sampling $X_0\subseteq I_0$ for every $\tau_i$   can be $|I_0|N = \widetilde{O}(\frac{n^2}{\ell^2 \eps^2})$ in total, which is too large when $\ell$ is small.

We next show how to use the sum in~\eqref{p-1variant} and $\tau_i$ (without explicitly computing $\tau_i$) to estimate the size of $\Omega$.
Let $U(\Omega')$ be the uniform distribution over $\Omega'$. 
If the sampler $\caS$ succeeds, then the samples $X_i$ returned by the $\caS$ are distributed so that
\begin{align}\label{p-2variant}
\forall i \in [N], \quad \Vert U(\Omega') - X_i\times U(2^{I_0}) \Vert_{\text{TV}} \leq 2\left( \frac{\eps}{n} \right)^{10},
\end{align}
where $X \times U(2^{I_0})$ denotes the random sample $Y = X \cup X'$, where $X' \sim U(2^{I_0})$.
For now, we assume the TV distance in \eqref{p-2variant} is 0, and we will remove this assumption at the end by a coupling argument. Under this assumption, we have
\[\Ex_{X_i}[\tau_i] = \frac{|\Omega \cap \Omega'|}{|\Omega'|}.\]
By a Chernoff bound,
with probability (over the randomness of $(X_i)_{i=1}^N$) at least $1 - o(1)$, we have 
\begin{align}\label{eq-lower-N}
 \frac{1}{N}\sum_{i=1}^N \tau_i \in \left(1 \pm \frac{\eps}{3}\right) \left(\frac{|\Omega \cap \Omega'|}{|\Omega'|}\right).   
\end{align}
Assume the above good event holds.
Hence, the remaining goal is to give an $(1\pm O(\eps))$-approximation of $\frac{1}{N}\sum_{i=1}^N\tau_i$, with the lower bound promise that $\frac{1}{N}\sum_{i=1}^N \tau_i \ge \frac{(1-\eps/3)\ell}{20000n(\log_2 n )^2} \ge \frac{1}{15000n(\log_2 n )^2}$ due to \eqref{eq:ratio-lower-bound}. 
Note that it suffices to get an hybrid approximation of 
$\frac{1}{N}\sum_{i=1}^N\tau_i$ with multiplicative approximation $(1\pm \eps/6)$ and additive error 
$\frac{\eps}{90000n(\log_2 n)^2}$: such a hybrid approximation $\tilde Z$ would satisfy $\tilde Z \in (1\pm \eps/6)\cdot \frac{1}{N}\sum_{i=1}^N\tau_i \pm \frac{\eps}{90000n(\log_2 n)^2} \in (1\pm \eps/3)\cdot \frac{1}{N}\sum_{i=1}^N\tau_i$ due to the lower bound promise.
We rewrite this quantity as
\begin{align}
     \frac{1}{N}\sum_{i=1}^N \tau_i &= \frac{1}{N}\frac{1}{2^{|I_0|}} \cdot |\{(X_0,X_i)\in 2^{I_0}\times \{X_1,X_2,\dots,X_N\}: W_{X_0}  + W_{X_i}\le T\}|\notag%
\end{align}
Hence, in order to get an $(1\pm \eps/3)$-approximation of this quantity, it suffices to compute $|\{(X_0,X_i)\in 2^{I_0}\times \{X_1,X_2,\dots,X_N\}: W_{X_0}  + W_{X_i}\le T\}|$ up to a hybrid approximation with multiplicative approximation $(1\pm \eps/6)$ and additive error $\frac{\eps}{90000n(\log_2 n)^2} \cdot N\cdot 2^{|I_0|}$.  %
This amounts to solving a second-phase instance for (some version of) \#Knapsack.

\begin{lemma}[Algorithm for the second-phase instance]\label{lem:sub}
There exists an algorithm such that given $I_0,X_1,X_2,\ldots,X_N$, it counts $|\{(X_0,X_i)\in 2^{I_0}\times \{X_1,X_2,\dots,X_N\}: W_{X_0}  + W_{X_i}\le T\}|$ up to a hybrid error with multiplicative approximation $(1\pm \eps/6)$ and additive error $\frac{\eps}{90000n(\log_2 n)^2} \cdot N\cdot 2^{|I_0|}$
 with probability at least $1 - o(1)$.
The running time of the algorithm is $\widetilde{O}( \frac{|I_0|^{1.5}+\max_{i \in [N]}|X_i|}{\eps^2} + \sum_{i \in [N]}|X_i| )$.
\end{lemma}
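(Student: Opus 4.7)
The plan is to interpret the second-phase task as another generalized \#Knapsack instance and solve it by re-using the first-phase machinery. I will view the instance as having ground set $I_0 \cup \{\star\}$, where $\star$ is a ``choice item'' that must take one of $N$ precomputed weights $W_{X_1},\dots,W_{X_N}$. The target count $|\{(X_0,X_i) : W_{X_0}+W_{X_i}\le T\}|$ is then exactly the number of configurations of this instance respecting capacity $T$, and estimating the ratio $\frac{|\text{valid}|}{N\cdot 2^{|I_0|}}$ up to multiplicative $(1\pm\eps/6)$ error or additive $\frac{\eps}{90000 n (\log n)^2}$ error suffices for the hybrid approximation required. The algorithm has four steps: (i) construct an Approximate Knapsack Sampler $\caS_0$ for $2^{I_0}$ by invoking the main construction of this section on the sub-instance with items $I_0$ and capacity $T$, at cost $\widetilde O(|I_0|^{1.5})$; (ii) build a trivial sampler $\caS_\star$ in $O(\sum_i |X_i|)$ time by computing each $W_{X_i}$, rounding to the scale parameter of $\caS_0$, and storing the counting function as a histogram over the $N$ rounded values; (iii) merge $\caS_0$ and $\caS_\star$ via \Cref{lemma:mergingsampler}; (iv) draw $N_2$ independent samples from the merged sampler and output the empirical fraction satisfying $W_{X_0}+W_{X_i}\le T$, appropriately scaled.

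The critical improvement over the first phase is that this Monte Carlo step needs only $N_2 = \widetilde O(1/\eps^2)$ samples rather than $\widetilde O(n/(\ell\eps^2))$, which I establish via the following hitting-set argument. Let $j_1,\dots,j_k$ be the $k=O(\log(n/\eps))$ largest items of $I_0$. Since items in $I_0$ are those rounded to $0$ at the bottom level, each has weight at most $S_0 = \widetilde O(T/\ell)$, so their total weight is $O(kS_0) = O(T/(\ell\log^{49}(n/\eps)))$, which fits well inside the rounding slack of the merged sampler's $\Omega'$-analog. Consequently, for any valid configuration $(X_0,i)$ with $X_0 \cap \{j_1,\dots,j_k\} = \emptyset$, all $2^k$ extensions $(X_0 \cup S, i)$ with $S \subseteq \{j_1,\dots,j_k\}$ remain valid, giving an injection that bounds avoiding valid configurations by $1/2^k$ of $|\text{valid}|$. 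A mapping argument analogous to \Cref{sec:over-hitting} (remove an arbitrary hit element of $X_0 \cap \{j_1,\dots,j_k\}$) then bounds the ``slightly-invalid'' configurations in the merged sampler's $\Omega'$-analog by $O(k)\cdot |\text{valid}|$, yielding the claimed sample complexity via a standard Chernoff bound applied to the empirical ratio.

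Summing all costs: construction uses $\widetilde O(|I_0|^{1.5}) + O(\sum_i |X_i|)$, and each of the $N_2$ samples costs $\widetilde O(\sqrt{|I_0|}\cdot \ell_0)$ from $\caS_0$ (where $\ell_0\le |I_0|$ is the popular-class parameter of the $I_0$ sub-instance) plus $\widetilde O(\max_i |X_i|)$ from $\caS_\star$, yielding the claimed $\widetilde O(|I_0|^{1.5}/\eps^2 + \max_i|X_i|/\eps^2 + \sum_i |X_i|)$ bound. The main obstacle I foresee is making the $1/2^k$ fraction bound rigorous in the presence of the random rounding in $\caS_0$, since the slack needed for the extensions to remain valid is really a probabilistic statement about the merged sampler's random weight function rather than the true weights; a related subtlety is to verify that the recursive invocation of the main construction on the $I_0$ sub-instance does not itself spawn another second-phase call, which I expect to follow because items in $I_0$ already have original weights $\le S_0$ and can be handled directly by the bottom-level sampler (\Cref{lem:bottomsamplertiny}) without producing new tiny items.
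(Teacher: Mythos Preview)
Your high-level plan (merge an $N$-choose-1 sampler with a sampler for $2^{I_0}$, then Monte Carlo with $\widetilde O(1/\eps^2)$ draws) matches the paper, but the argument for the sample complexity has a real gap. The injection you describe does not hold: if $(X_0,i)$ is valid (meaning $W_{X_0}+W_{X_i}\le T$) and $X_0$ avoids the top-$k$ items of $I_0$, there is no reason all $2^k$ extensions $X_0\cup S$ remain valid, since $W_S$ can exceed the residual capacity $T-W_{X_0}-W_{X_i}$, which may be zero. The only trivial fact available is that the number of \emph{all} configurations avoiding the top $k$ equals $N\cdot 2^{|I_0|-k}$; this controls an additive term but does not by itself give $|\Omega'|/|\Omega\cap\Omega'|=\widetilde O(1)$. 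For your mapping step (delete one hit element to land in $\Omega$) you would need every top-$k$ weight to exceed the merged sampler's slack, but building $\caS_0$ by ``the main construction'' on $(I_0,T)$ yields slack $\sim T/\ell_0$ for the popular-class parameter $\ell_0$ of that sub-instance, and nothing relates $T/\ell_0$ to $W_{i_k}$; you also conflate this $\ell_0$ with the first-phase $\ell$ when bounding the top-$k$ total weight.

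The paper resolves this differently: it \emph{decomposes} $\Omega$ by the index $j$ of the heaviest element of $X_0\cap I_0$, for $j=1,\dots,r$ with $r=O(\log(n/\eps))$ (the tail $j>r$ goes into the additive error via the trivial bound $N\cdot 2^{|I_0|-r}$). For each fixed $j$ it builds a fresh \emph{basic} sampler for $2^{I_0^j}$ with scale $\Theta(W_{I_0^j}/(|I_0^j|\polylog(n/\eps)))$; since $W_{i_j}\ge W_{I_0^j}/|I_0^j|$ by averaging, any configuration in the $j$-th sub-problem's $\Omega'$ that is not far-invalid satisfies $W_{X_0}+W_{X_i}<T-W_{i_j}+\text{slack}\le T$ and hence already lies in the full $\Omega$, giving $|\Omega'|\le |\Omega|+\text{tiny}$ with no mapping argument at all. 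This also answers your recursion concern: the sampler for $2^{I_0^j}$ is a direct one-item-per-leaf divide-and-conquer (no balls-into-bins, no weight classes, no tiny-item mechanism), so no further second phase is spawned.
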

The proof of the above lemma is given in \Cref{sec:secondalg}. Note that $\sum_{i \in [N]}|X_i|$ is at most the total running time contributed by sampling $X_1,\ldots,X_N$, which is $\widetilde{O}(n^{1.5}/\eps^2)$. Also note that $|I_0| \leq n$ and $|X_i| \leq n$. Hence, the running time of \Cref{lem:sub} is $\widetilde{O}(n^{1.5}/\eps^2)$.

Note that all the good events mentioned above hold together with probability $1 - o(1)$.
The answer of \Cref{lem:sub} multiplied by $\frac{1}{N2^{|I_0|}}$ gives an estimate of $\frac{|\Omega \cap \Omega'|}{\Omega}$, and our algorithm outputs this estimate times the value in~\eqref{p-1variant} as our final estimate for $|\Omega|$. 
The total running time is $\widetilde{O}(n^{1.5}\eps^{-2})$. The overall approximation error is 
\begin{align*}
\left(1 \pm \left(\frac{\eps}{n}\right)^{10} \right)\left( 1\pm\frac{\eps^2}{n^2} \right)\left(1 \pm \frac{\eps}{3}\right)  \left(1 \pm \frac{\eps}{3}\right)  \in (1 \pm \eps), 
\end{align*}
where the error bounds come from~\eqref{q-1},~\eqref{p-1variant}, \eqref{eq-lower-N} and the error bound due to \Cref{lem:sub}.
All the above analysis ignores the error in~\eqref{p-2variant} due to our sampler only generating approximate samples with total variation distance at most $2(\frac{\eps}{n})^{10}$.
To deal with this error, we can couple $N$ approximate samples with $N$ perfect samples such that the coupling fails with probability at most $2N(\frac{\eps}{n})^{10}$.
(and we assume our algorithm fails if the coupling fails). Hence, with probability at least $1 - o(1)$, our algorithm achieves $(1 \pm \eps)$-approximation. 
Combining the coupling bound with the $1 - o(1)$ probability bounds for good events in~\eqref{p-1variant},\eqref{p-2variant},\eqref{eq-g1},\eqref{eq-g2}, it holds that with probability at least
$ 1 -o(1) - 2N\left( \frac{\eps}{n} \right)^{10} \geq \frac{3}{4}$ our algorithm does not fail.
To obtain an algorithm with a fixed running time, we can set a timer $T_{\text{tot}} = \widetilde{O}(n^{1.5}\eps^{-2})$. If the running time exceeds $T_{\text{tot}}$, we stop the algorithm and output $0$. Then, by Markov's inequality, with probability $1 - o(1) \geq \frac{2}{3}$, the algorithm outputs a correct answer $\in (1 \pm \eps)|\Omega|$ in fixed time $\widetilde{O}(n^{1.5}\eps^{-2})$.

\subsection{Second-phase approximation algorithm (proof of \texorpdfstring{\Cref{lem:sub}}{})}\label{sec:secondalg}
This section only focuses on the sub-problem in \Cref{lem:sub}.
For simplicity of notation, we use 
\[\Omega =\{(X_0,X_i)\in 2^{I_0}\times \{X_1,X_2,\dots,X_N\}: W_{X_0}  + W_{X_i}\le T\}\] to denote the set whose size we need to approximate. 
Since we pick exactly one element from $X_1,\ldots,X_n$,
we first do the following clean-up: 
\begin{itemize}
    \item For every $i \in [N]$, if $W_{X_i} > T$, then we can remove $i$ from $[N]$ because the contribution of $X_i$ to the number of solutions is $0$.
    \item For every $i\in [N]$, if $W_{X_i} + W_{I_0} \le T$, then the contribution of $X_i$ to the number of solutions is exactly $2^{I_0}$. We can add this contribution to the answer, and then also remove $i$ from $[N]$.
\end{itemize}
Hence, we can without loss of generality assume $T- W_{I_0}< W_{X_i}\leq T$ for all $i\in [N]$, without hurting the approximation that we want to achieve.

Recall $n$ is the number of items of the original knapsack instance and $\eps$ is the input error bound. By the setting of~\Cref{lem:sub}, it is clear that $\mathrm{poly}(n/\eps)$ is an upper bound on $|I_0|,|X_i|$ and $N$. We will use $n$ and $\eps$ to set the polylog factor parameters in the following analysis.

Sort the items of $I_0=\{i_1,\dots,i_{|I_0|}\}$ so that $W_{i_1}\ge W_{i_2} \ge \dots \ge W_{i_{|I_0|}}$.  For $0\le j\le |I_0|$, denote 
\[ I_0^{j}:= \{i_{j+1},i_{j+2},\dots, i_{|I_0|}\},\]
that is, $I_0$ with the heaviest $j$ items removed.
We define parameter 
\[r = \min\Big \{|I_0|, \lceil \log_2(\eps^{-1} \cdot 100000 n(\log_2 n)^2 )\rceil \Big \}. \]
Our plan is for all $1\le j\le r$, to approximate the size of 
\[ 
\Omega^{(j)}:= \{(X_0,X_i)\in 2^{I_0^{j}}\times \{X_1,X_2,\dots,X_N\}:   W_{X_0} + W_{i_j} + W_{X_i}\le T\},
\]
which allows us to additively approximate $|\Omega|$.
Intuitively, $\Omega^{(j)}$ denotes the set of all solutions such that $i_j$ is the heaviest item picked from $I_0$.
Formally, we have the following simple lemma:
\begin{lemma}
\label{lem:sumjrapprox}
   \[\Big \lvert  N + \sum_{j=1}^r |\Omega^{(j)}| - |\Omega|   \Big \rvert \le 
 N\cdot 2^{|I_0|} \cdot \frac{\eps}{100000 n(\log_2 n)^2}. \]
\end{lemma}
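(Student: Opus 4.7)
\textbf{Proof proposal for \Cref{lem:sumjrapprox}.}

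The plan is to show that $|\Omega|$ equals $N + \sum_{j=1}^{|I_0|} |\Omega^{(j)}|$ exactly, and then bound the tail $\sum_{j=r+1}^{|I_0|} |\Omega^{(j)}|$ by the claimed additive error. The key structural observation is that $\Omega^{(j)}$ counts exactly those solutions $(X_0, X_i) \in \Omega$ in which $i_j$ is the \emph{heaviest} item of $I_0$ selected into $X_0$, via the bijection $X_0 \leftrightarrow X_0 \setminus \{i_j\}$ (with $X_0 \setminus \{i_j\} \subseteq I_0^j$ by the sorted order). This bijection preserves the weight condition since $W_{X_0} = W_{X_0 \setminus \{i_j\}} + W_{i_j}$.

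First I would partition $\Omega$ according to the value of $X_0$: either $X_0 = \emptyset$, or $X_0 \neq \emptyset$ in which case there is a unique heaviest item $i_j \in X_0$ for some $1 \le j \le |I_0|$. In the former case, by the clean-up we have $W_{X_i} \le T$ for every remaining $X_i$, so every pair $(\emptyset, X_i)$ lies in $\Omega$, contributing $N$. In the latter case, the partition together with the bijection above gives the identity
\begin{equation*}
|\Omega| = N + \sum_{j=1}^{|I_0|} |\Omega^{(j)}|.
\end{equation*}
Consequently, $\bigl|N + \sum_{j=1}^{r}|\Omega^{(j)}| - |\Omega|\bigr| = \sum_{j=r+1}^{|I_0|} |\Omega^{(j)}|$, which is zero whenever $r = |I_0|$, and otherwise requires the tail bound below.

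To bound the tail, I would use the trivial upper bound $|\Omega^{(j)}| \le N \cdot 2^{|I_0^j|} = N \cdot 2^{|I_0| - j}$ obtained by ignoring the weight constraint and counting all pairs $(X_0, X_i) \in 2^{I_0^j} \times \{X_1,\dots,X_N\}$. Summing the geometric series,
\begin{equation*}
\sum_{j=r+1}^{|I_0|} |\Omega^{(j)}| \;\le\; N \sum_{j=r+1}^{|I_0|} 2^{|I_0|-j} \;<\; N \cdot 2^{|I_0| - r}.
\end{equation*}
When $r < |I_0|$, by the definition of $r$ we have $r \ge \log_2\bigl(\eps^{-1} \cdot 100000\, n (\log_2 n)^2\bigr)$, so $2^{-r} \le \eps / (100000\, n(\log_2 n)^2)$, giving the desired bound $N \cdot 2^{|I_0|} \cdot \eps / (100000\, n (\log_2 n)^2)$.

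There is no real obstacle here; the lemma is essentially a disjoint-partition identity plus a one-line geometric-series estimate. The only place to be careful is the case $r = |I_0|$ (where the sum is complete and the error is $0$) versus $r < |I_0|$ (where $r$ attains its logarithmic upper bound, making the tail-sum bound tight enough), and verifying that the clean-up justifies the $X_0 = \emptyset$ contribution being exactly $N$.
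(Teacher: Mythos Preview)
Your proposal is correct and follows essentially the same approach as the paper: partition $\Omega$ by the heaviest item of $X_0$ (or $X_0=\emptyset$), establish the identity $|\Omega| = N + \sum_{j=1}^{|I_0|}|\Omega^{(j)}|$, and bound the tail for $j>r$. The paper bounds the tail in one step as $N\cdot(2^{|I_0^r|}-1)$ by observing that such solutions have $X_0$ a nonempty subset of $I_0^r$, while you reach the same quantity via the geometric sum $\sum_{j>r} N\cdot 2^{|I_0|-j}$; the two are identical.
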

\begin{proof}
We classify all solutions $(X_0,X_i)\in \Omega$ by the heaviest item in $X_0$: If $X_0$ is empty, then the number of such solutions is exactly $N$. Otherwise, if the heaviest item is $i_j$ ($1\le j\le |I_0|$), then the number of such solutions exactly equals $|\Omega^{(j)}|$.  Note that the number of solutions for which the heaviest item $i_j$ exists but $j> r$ can be upper-bounded   as
\[N\cdot (2^{|I_0^{r}|} - 1)  = N\cdot (2^{|I_0|} 2^{-r}-1) \le N\cdot 2^{|I_0|} \cdot \frac{\eps}{100000 n(\log_2 n)^2},\]
by the definition of $r$. The claimed inequality then immediately follows from this classification.
\end{proof}

Now we fix a $1\le j\le r = O(\log(n/\eps))$, and focus on the task of additively approximately counting $\Omega^{(j)}$.
We again perform the similar clean-up as before, so that without loss of generality we can assume $W_{X_i} \in (T - W_{i_j} - W_{I_0^j} , T - W_{i_j}]$ for all $i\in [N]$.
 Define the lower bound parameter 
\begin{align}\label{eq:defWlow}
    W_{\text{low}} := T - W_{i_j}-W_{I{_0^j}}
\end{align}
so that $W_{X_i}\in (W_{\text{low}}, W_{\text{low}} + W_{I_0^j}]$ for every $i$.
Since we only consider the solution space that contains \emph{exactly one} item from $\{X_1,X_2,\ldots,X_N\}$, in the algorithm, we shift every weight $W_{X_i}$ to
\begin{align*}
    W'_{X_i} \defeq W_{X_i} -  W_{\text{low}} \in (0, W_{I_0^j}].
\end{align*}
We remark that the value of $W_{\text{low}}$ can be negative but the value of $W'_{X_i}$ is always positive.

Since our task of approximating $|\Omega^{(j)}|$ can be thought of as a slightly generalized instance of \#Knapsack with an $N$-choose-1, we use the same framework based on Approximate Knapsack Samplers to solve this instance. For this reason, we first abstract this $N$-choose-1 into the following approximate knapsack sampler lemma:
\begin{lemma}
    \label{lem:samplernchoose1}
Let $S \subseteq \N^+$ be a parameter, and we are given $X_1,\dots,X_N$ such that $W'_{X_i} \in (0, W_{I_0^j}]$ for all $i\in [N]$.
 Then, there is a basic Approximate Knapsack Sampler for $\{X_1,X_2,\ldots,X_N\}$ with parameters $S, L= \lceil W_{I_0^j}/S\rceil, \sigma^2 =  S^2/4, \delta=0, \caT_c = \widetilde{O}(L+\sum_{i\in N}|X_i|) $ and $\caT_q = \widetilde{O}( L + \max_{i\in N}|X_i|)$, where $\widetilde{O}$ hides a $\log(\max_{i}W'_{X_i})$ factor.
\end{lemma}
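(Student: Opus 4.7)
The plan is to view each candidate $X_i$ as an atomic ``item'' of weight $W'_{X_i}$ and reuse the single-item randomized rounding already used at the leaves of our main binary tree construction. Since $\Omega = \{X_1,\ldots,X_N\}$ is a flat list of $N$ options with no subset structure to exploit, essentially everything can be done exactly in a single pass, and no convolution or sum-approximation is needed.

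In the construction stage, I would independently round each $W'_{X_i}$ to $w(X_i) \in \{S\lfloor W'_{X_i}/S\rfloor,\, S\lceil W'_{X_i}/S\rceil\}$ with probabilities chosen so that $\Ex[w(X_i)] = W'_{X_i}$. Applying \cref{lem-subg} to the constant (and hence $0$-subgaussian) input $W'_{X_i}$ immediately gives that $w(X_i)$ is $(S^2/4)$-subgaussian with the correct mean, and the assumption $W'_{X_i}\le W_{I_0^j}$ guarantees $w(X_i) \in S\cdot\{0,1,\ldots,L\}$. I would then compute the exact histogram $f(kS) := |\{i\in [N] : w(X_i)=kS\}|$ by one linear sweep over $i$, and take $\hat f := f$. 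Because the histogram values are integers in $[0,N]$, no approximation is needed and we may take $\delta = 0$. Prefix-summing $f$ gives $F(k) := \sum_{k'\le k}f(k'S)$, and bucketing the indices by rounded weight finishes the setup. Reading the $X_i$'s costs $O(\sum_i|X_i|)$ and everything else is $O(N+L)$, matching the claimed $\caT_c = \widetilde O(L+\sum_i|X_i|)$.

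For the query stage, given $T$ I would set $t:=\min(\lfloor T/S\rfloor,L)$, draw a uniform integer $r\in\{1,\ldots,F(t)\}$, binary-search in $F$ for the unique bucket $k$ with $F(k-1)<r\le F(k)$, pick a uniformly random index from that bucket's list, and output the corresponding pair $(X_i,w(X_i))$. Because $\hat f=f$ exactly and every step of the sampling is exact, the output is exactly uniform over $\{X_i:w(X_i)\le tS\}$, so the TV error is $0\le 2\delta$; writing out the subset $X_i$ dominates the query time, giving $\caT_q = \widetilde O(\max_i|X_i|)$, comfortably inside the stated $\widetilde O(L+\max_i|X_i|)$ bound. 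There is no real obstacle here: this is a bookkeeping lemma that specializes the Approximate Knapsack Sampler interface to the trivial ``one atom per index'' setting, and the only point worth double-checking is that \cref{lem-subg} legitimately applies to a constant input, which it does, since any constant is $0$-subgaussian and the conclusion reduces to the standard fact that unbiased rounding of a real number to a multiple of $S$ is $(S^2/4)$-subgaussian.
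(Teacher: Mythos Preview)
Your proposal is correct and follows essentially the same approach as the paper's own proof: round each $W'_{X_i}$ to a neighboring multiple of $S$ unbiasedly, tabulate the exact histogram to get $\hat f=f$ (so $\delta=0$), and sample an index directly from the histogram; your extra bookkeeping (prefix sums, binary search, bucket lists) only refines the query step beyond the paper's simple ``scan $\hat f$ in time $O(L)$'' but does not change the argument.
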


\begin{remark}
To fit the definition of the basic Approximate Knapsack Sampler in  \Cref{def:basic}, one can view each $X_i$ as an item and its weight is $W'_{X_i}$.  
The set $\Omega$ in \Cref{def:basic} is set as $\{X_1,X_2,\ldots,X_N\}$, which means we consider $N$-choose-1 knapsack.
\end{remark}

\begin{proofof}{Proof of \Cref{lem:samplernchoose1}}
For each $i \in [N]$, we round the weight $W_{X_i}$ independently such that
\begin{align*}
    w(X_i) = \begin{cases}
        S \lfloor \frac{W_{X_i}}{S} \rfloor &\text{with prob. }  \frac{W_{X_i}}{S} - \lfloor \frac{W_{X_i}}{S} \rfloor,\\
        S \lceil \frac{W_{X_i}}{S} \rceil &\text{with prob. } 1 -(\frac{W_{X_i}}{S} - \lfloor \frac{W_{X_i}}{S} \rfloor).
    \end{cases}
\end{align*}
Since each $w(X_i)$ has bounded range and we only pick exactly one $X_i$, $\sigma^2 = \frac{S^2}{4}$.
 Since $W'_{X_i} \in (0, W_{I_0^j}]$, we can set parameter $L = \lceil {W_{I_0^j}}/{S} \rceil$. 
 To compute the output function $\hat{f}$, we scan each $X_i$ and let $\hat{f}(w(X_i)) \gets \hat{f}(w(X_i)) + 1$. Note that $\hat{f}(tS)$ exactly counts the number of $X_i$ with $w(X_i)=tS$. We have $\delta = 0$.
 The construction time $\caT_c = O(L+\sum_{i\in N}|X_i|) $ because computing each $W_{X_i}$ takes time $|X_i|$. Given a query $T$, we can scan $\hat{f}$ in time $O(L)$ and sample one $X_i$ and output all items of $X_i$ in time $|X_i|$.
\end{proofof}

Now we use divide and conquer to construct a sampler for all items in $I_0^j$.
\begin{lemma}
    \label{lem:sampleri0}
There exists a basic Approximate Knapsack Sampler for $2^{I_0^j}$ with parameters $S = \Theta({W_{I_0^j}}/(|I_0^j|\log^{50}(n/\eps)))$, $L = \widetilde O(|I_0^j|)$, $\sigma^2 =  {W_{I_0^j}^2}/{(|I_0^j|^2 \log^{50}(n/\eps))},\delta= (\frac{\eps}{n})^{20}$ in time $\caT_c = \widetilde O(|I_0^j|^{1.5})$, $\caT_q = \widetilde O(|I_0^j|^{1.5})$, where $\widetilde{O}(\cdot)$ hides a $\polylog(n,1/\eps) \cdot \log(\max_{i \in I_0^j} W_{I_0^j})$ factor.
\end{lemma}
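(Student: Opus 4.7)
The plan is to treat $I_0^j$ as a self-contained knapsack sub-instance (with items $(W_i)_{i\in I_0^j}$ and capacity $W_{I_0^j}$) and to repeat the construction of \cref{lem:AKS-all} on this sub-instance, substituting $N_0 := |I_0^j|$ for both the ambient ``$n$'' and the popular weight class ``$\ell$'' (while keeping the $\polylog(n/\eps)$ factors as-is, so the failure probability $\delta = (\eps/n)^{20}$ remains bounded in terms of the original $n$). Concretely, I would partition $I_0^j$ into $O(\log N_0)$ weight classes $I_m = \{i \in I_0^j : W_{I_0^j}/m < W_i \le 2W_{I_0^j}/m\}$ for $m$ a power of $2$ up to about $4N_0$ (plus a final tail class for items of weight $\le W_{I_0^j}/2^{\lceil \log_2 N_0\rceil}$), build a basic Approximate Knapsack Sampler for each class via the binary-tree multi-level rounding of \cref{lem:AKS-group} Case~(a) using scale parameters $S_h := \lceil W_{I_0^j}/(2^{h/2} N_0 \log^{50}(n/\eps)) \rceil$, and merge them across classes via \cref{lemma:mergingsampler} as in the proof of \cref{lem:AKS-all}.

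The only genuinely new observation is that here the per-class sampler covers \emph{all} of $2^{I_m}$, rather than the random sub-collection $\Omega(I_m)$ used in the original \cref{lem:AKS-group}. This is because in our sub-instance every item in $I_m$ has weight strictly greater than $W_{I_0^j}/m$ while the total weight of $I_0^j$ is exactly $W_{I_0^j}$, forcing $|I_m| < m$. Hashing fewer than $m$ items into $m$ bins, standard balls-into-bins gives maximum load at most $B := \log^{10}(n/\eps)$ except with probability $\exp(-\log^{9}(n/\eps))$, so with overwhelming probability every $X \subseteq I_m$ satisfies $|X \cap B_i| \le B$ for all bins $B_i$, i.e., the random $\Omega(I_m)$ from \cref{lem:AKS-group} coincides with $2^{I_m}$. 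Taking the product across classes, the merged sampler therefore covers $\prod_m 2^{I_m} = 2^{I_0^j}$ as required.

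The rest is parameter bookkeeping. Since $\ell = N_0 > \sqrt{N_0}$, we are uniformly in the $\ell > \sqrt{n}$ branch of \cref{lem:AKS-group} Case~(a), and so use variant~\ref{item:leafvariant} of \cref{lem:leaf} at the leaves; moreover $m \le 4N_0$ for every class implies $m < 20 \ell^2 \log^{100}(n/\eps)$, so Case~(b) of \cref{lem:AKS-group} never triggers and every per-class (and hence the final merged) sampler remains \emph{basic}, as needed in the lemma statement. Specializing the recurrences from the proof of \cref{lem:AKS-group} under the substitution $T \mapsto W_{I_0^j}$, $\ell \mapsto N_0$, $n \mapsto N_0$ gives $L = \widetilde O(N_0)$, $\sigma^2 \le W_{I_0^j}^2/(N_0^2 \log^{50}(n/\eps))$, and per-class $\caT_c, \caT_q = \widetilde O(N_0^{1.5})$; merging the $O(\log N_0)$ classes via \cref{lemma:mergingsampler} (with values at most $2^{N_0}$) adds a further $\widetilde O(N_0^{1.5})$ while keeping $\delta \le (\eps/n)^{20}$. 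The main obstacle is purely bookkeeping rather than any new idea, but care is needed to verify that Case~(a) (and thus ``basic'') applies uniformly under our substitution of parameters.
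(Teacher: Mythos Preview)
Your approach is correct but takes a more elaborate route than the paper. The paper builds a single binary tree directly, with one leaf per item of $I_0^j$ (padding $m:=|I_0^j|$ up to a power of two with dummy leaves). Each leaf $i$ carries the trivial sampler for $\{\emptyset,\{i\}\}$, obtained by randomly rounding $W_i$ to a multiple of the bottom-level scale $S_H = \lceil W_{I_0^j}/(m^{1.5}\log^{50}(n/\eps)) \rceil$; this leaf has $L_i = \widetilde O(m^{1.5} W_i/W_{I_0^j} + 1)$ and $\delta_i = 0$. Merging and rounding up the tree with level scales $S_h = \lceil W_{I_0^j}/(m\,2^{h/2}\log^{50}(n/\eps)) \rceil$ then yields the root parameters via the same recurrences as in \cref{lem:AKS-group}, the key input being $\sum_i L_i = \widetilde O(m^{1.5})$, which follows from $\sum_i W_i = W_{I_0^j}$.

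The essential difference is that the paper exploits the fact that here we want the sampler to cover \emph{all} of $2^{I_0^j}$: there is no capacity constraint on subsets of $I_0^j$, so no weight-class partitioning or balls-into-bins hashing is needed, and the one-leaf-per-item tree is automatically a basic sampler for exactly $2^{I_0^j}$. Your route instead reuses the weight-class and hashing machinery of \cref{lem:AKS-group}/\cref{lem:AKS-all} and therefore must add the observation $|I_m|\le m$ (from $\sum_{i\in I_m} W_i \le W_{I_0^j}$) to force the random $\Omega(I_m)$ to coincide with $2^{I_m}$ with high probability, folding that event into~$\delta$. Both routes give the same bounds; yours is more modular (it really is a black-box reapplication of earlier lemmas under the substitution $T\mapsto W_{I_0^j}$, $n,\ell\mapsto N_0$), while the paper's is shorter and avoids the extra layer of randomness and the case analysis on variants and Case~(a)/(b).
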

\begin{proof}
The proof is similar to the proof of \Cref{lem:AKS-group} but without balls-into-bins.
Let $m = |I_0^j|$.
We may assume $m$ is a power of $2$. If not, we round $m$ up to the nearest power of 2 and add some dummy leaf nodes, where every dummy node has a sampler for $\{\emptyset\}$.

Each item $i \in I_0^j$ corresponds to a leaf node at level $H=\log_2 m$. The leaf node has an Approximate Knapsack Sampler for $\{\emptyset, \{i\}\}$ with parameters $S_{\text{leaf}} = \lceil {W_{I_0^j}}/({m^{1.5}\log^{50}(n/\eps)}) \rceil, L_{\text{leaf}} = \widetilde{O}\left( m^{1.5}{W_i}/{W_{I_0^j}} + 1\right), \sigma^2_{\text{leaf}} ={S_{\text{leaf}}^2}/{4}, \delta_{\text{leaf}} = 0, \caT_{c,\text{leaf}} =\widetilde{O}(L_{\text{leaf}})$ and $\caT_{q,\text{leaf}} = O(1)$. To implement the sampler, we round $W_i$ to $w(\{i\})$ such that $w(\{i\}) = S\lfloor \frac{W_i}{S} \rfloor$ with probability $\frac{W_i}{S}-\lfloor \frac{W_i}{S}\rfloor$ and $w(\{i\}) = S \lceil \frac{W_i}{S} \rceil$ otherwise. Since the sampler is only for one element, 
\[L_{\text{leaf}} = \left\lceil \frac{W_i}{S_{\text{leaf}}} \right\rceil \leq \left\lceil \frac{W_i}{W_{I_0^j}} m^{1.5} \polylog\frac{n}{\eps} \right\rceil = \widetilde{O}\left( \frac{W_i}{W_{I_0^j}}m^{1.5} + 1\right).  \]
In the output function, we set $\hat{f}(w(\{i\})) = 1$ and other values as 0. It is easy to see the construction time for each sampler is $\caT_c = \widetilde{O}({W_i}m^{1.5}/{W_{I_0^j}} + 1)$ and query time $\caT_q = O(1)$.

Let $H= \log_2 m$.
For any $0 \leq h \leq H$, define $S_h = \lceil {W_{I_0^j}}/{(m2^{h/2} \log^{50}(n/\eps))} \rceil$.
For any non-leaf node at level $h$, we use \Cref{lemma:mergingsampler} to merge two samplers at its children and use \Cref{lemma:roundingsampler} to round the $S$ from $S_{h+1}$ to $S_h$. The final sampler is on the root of the tree. 

The analysis is similar to the analysis in \Cref{lem:AKS-group}.  Also note that we have a lower bound for $W_{I_0^j}$ due to~\eqref{eq:assume}.
We have already analyzed the parameters for leaf node $u$. 
For non-leaf nodes, we can use the same recursion in~\eqref{eq-recur}.
Solving the recursion, we have for the Approximate Knapsack Sampler at the root, it holds that $S = \Theta({W_{I_0^j}}/({m\log^{50}(n/\eps)})), L = \widetilde{O}(m)$, $\sigma^2= {W_{I_0^j}^2}/{(m^2 \log^{50}(n/\eps))}$, $\delta = (\frac{\eps}{n})^{20}$ and $\caT_q = \widetilde{O}(m^{1.5})$. The lemma holds by noting that $\frac{m}{2}<  |I_0^j| \leq m$.

Finally, we need to bound the construction time. 
For each node $u$ at level $h$, the values (in other words, the size of $\Omega$ for the approximate knapsack sampler at node $u$) of the output function are at most $2^{2^{H-h}}$. By \Cref{lemma:mergingsampler}, \Cref{lemma:roundingsampler}, and a similar analysis as that in~\eqref{eq:proofcroot}, we have
 \begin{align*}
  \caT_c =  \sum_{i\in I_0^j} \widetilde{O}\left(\frac{W_i}{W_{I_0^j}}m^{1.5} + 1\right) + \sum_{h=0}^{H-1}\sum_{u \in \caL(h+1)}\widetilde{O}(L_u \sqrt{2^{H-h}} ) = \widetilde{O}(m^{1.5})  = \widetilde{O}(|I_0^j|^{1.5}). %
 \end{align*}
 This proves the lemma.
\end{proof}

We use the same parameter $S = \Theta ({W_{I_0^j}}/{(|I_0^j|\log^{50}(n/\eps)}))$ to construct  samplers in \cref{lem:samplernchoose1} and \cref{lem:sampleri0}, where $\Theta(\cdot)$ hides a factor between $(1/2,1]$ because we use $m$ instead of $|I_0^j|$ in the proof of \cref{lem:sampleri0} to construct the approximate knapsack sampler.
By merging the samplers in \cref{lem:samplernchoose1} and \cref{lem:sampleri0} using \cref{lemma:mergingsampler}, we get:
\begin{lemma}
    \label{lem:secondstagemainsampler}
There exists a basic Approximate Knapsack sampler for $2^{I_0^j}\times \{X_1,\dots,X_N\}$ with parameters $S = \Theta ({W_{I_0^j}}/(|I_0^j|\log^{50}(n/\eps)))$, $L = \widetilde O(|I_0^j|)$, $\sigma^2 =  {W_{I_0^j}^2}/{(|I_0^j|^2 \log^{40}(n/\eps))}$, $\delta = (\frac{\eps}{n})^{10}$ in time $\caT_c = \widetilde O(|I_0^j|^{1.5} + \sum_{i\in [N]}|X_i|), \caT_q = \widetilde O(|I_0^j|^{1.5} + \max_{i\in [N]}|X_i|)$, where $\widetilde{O}(\cdot)$ hides a $\polylog(n,\frac{1}{\eps}) \cdot \log(T)$ factor.
\end{lemma}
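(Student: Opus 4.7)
The plan is to directly invoke the merging primitive \cref{lemma:mergingsampler} on the two samplers produced by \cref{lem:samplernchoose1} and \cref{lem:sampleri0}. These two samplers act on disjoint item universes (in \cref{lem:samplernchoose1} each $X_i$ is treated as an atomic item of weight $W'_{X_i}$, while \cref{lem:sampleri0} uses the original items in $I_0^j$), and crucially they are built with the same scale parameter $S = \Theta(W_{I_0^j}/(|I_0^j|\log^{50}(n/\eps)))$, so the merging lemma applies directly.

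First I would instantiate the two input samplers explicitly. Plugging the chosen $S$ into \cref{lem:samplernchoose1}, one obtains
\[L_1 = \lceil W_{I_0^j}/S\rceil = \widetilde O(|I_0^j|),\quad \sigma_1^2 = S^2/4,\quad \delta_1 = 0,\]
with construction/query times $\widetilde O(|I_0^j| + \sum_i |X_i|)$ and $\widetilde O(|I_0^j| + \max_i |X_i|)$ respectively. \cref{lem:sampleri0} directly yields
\[L_2 = \widetilde O(|I_0^j|),\quad \sigma_2^2 = W_{I_0^j}^2/(|I_0^j|^2 \log^{50}(n/\eps)),\quad \delta_2 = (\eps/n)^{20},\]
with both construction and query times $\widetilde O(|I_0^j|^{1.5})$.

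The parameters of the merged sampler then follow by straightforward addition prescribed by \cref{lemma:mergingsampler}: $L = L_1 + L_2 = \widetilde O(|I_0^j|)$; the subgaussian parameter is $\sigma_1^2 + \sigma_2^2$, and because $\sigma_1^2 = O(W_{I_0^j}^2/(|I_0^j|^2 \log^{100}(n/\eps)))$ is much smaller than $\sigma_2^2$, the sum is absorbed into the claimed bound $W_{I_0^j}^2/(|I_0^j|^2 \log^{40}(n/\eps))$ with room to spare; and $\delta = 4\delta_1 + 4\delta_2 \leq (\eps/n)^{10}$.

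For the running time, the incremental merging cost is $\Delta\caT_c = \widetilde O((L_1+L_2)\sqrt{\log(|\Omega_1|\cdot|\Omega_2|)})$, where $|\Omega_1| = N \leq \poly(n/\eps)$ and $|\Omega_2| = 2^{|I_0^j|}$, so $\log(|\Omega_1|\cdot|\Omega_2|) = \widetilde O(|I_0^j|)$ and $\Delta\caT_c = \widetilde O(|I_0^j|^{1.5})$. Summing with the two input construction times yields $\caT_c = \widetilde O(|I_0^j|^{1.5} + \sum_i |X_i|)$; similarly $\caT_q = \caT_{q,1} + \caT_{q,2} + \widetilde O(L_1+L_2) = \widetilde O(|I_0^j|^{1.5} + \max_i |X_i|)$. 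The proof poses no real obstacle; the only check is polylog bookkeeping in $\sigma^2$ (which has ample slack) and noting that since both input samplers are basic in the sense of \cref{def:basic}, so is the merged one (as remarked at the end of \cref{subsec:roundmerge}).
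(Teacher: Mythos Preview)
Your proposal is correct and follows exactly the approach the paper takes: merge the samplers from \cref{lem:samplernchoose1} and \cref{lem:sampleri0} (both instantiated with the common scale $S$) via \cref{lemma:mergingsampler}, and read off the parameters. The paper in fact omits the proof entirely, calling it a ``simple corollary,'' and only adds the same remark you make at the end about the merged sampler being basic because both inputs are basic.
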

\Cref{lem:secondstagemainsampler} is a simple corollary, so the proof is omitted.
Since the samplers in \cref{lem:samplernchoose1} and \cref{lem:sampleri0} are basic, the merged sampler is also basic.

The Approximate Knapsack Sampler in \Cref{lem:secondstagemainsampler} outputs a function $\hat{f}\colon S\cdot \{0,1,\ldots,L\} \to  \{0\}\cup \R_{\ge 1}$. 
Define the threshold parameter
\begin{align*}
    t = \left\lceil \frac{T-W_{i_j}-W_{\text{low}} + \frac{W_{I_0^j}}{|I_0^j| \log^{10}(n/\eps)}}{S} \right\rceil = \left\lceil \frac{W_{I_0^j} + \frac{W_{I_0^j}}{|I_0^j| \log^{10}(n/\eps)}}{S} \right\rceil,
\end{align*}
where the term $W_{i_j}$ comes from the fact that we only consider solutions that contain $i_j$ and the term $W_{\text{low}}$ comes from~\eqref{eq:defWlow}. The value of $t$ is always positive.
Next, we repeatedly query the Approximate Knapsack Sampler $\caS$ (\cref{lem:secondstagemainsampler}) with input $tS$ for 
\begin{align*}
    N' = 5000\eps^{-2} \log^{10}\left(\frac{n}{\eps}\right)
\end{align*}
times to generate $N'$ independent samples $Y_i$. 
Let $\caI_{Y_i} \in \{0,1\}$ indicate the event that $\sum_{x \in Y_i \cap I_0^j}W_x + W_{X_i} \leq T-W_{i_j}$, where $\{X_i\} = Y \cap \{X_1,X_2,\ldots,X_{N}\}$.
Finally, output
\begin{align*}
    \left(\frac{1}{N'}\sum_{i=1}^{N'} \caI_{Y_i}\right)\sum_{j=0}^{\min(t,L)}\hat{f}(jS).
\end{align*}

We now analyze the additive error between the output value and the exact answer $|\Omega^{(j)}|$.
Define the set 
\begin{align*}
    \hat{\Omega} = 2^{I_0^j} \times \{X_1,X_2,\ldots,X_{N}\}.
\end{align*}
Define the set $\Omega'$
\begin{align*}
    \Omega' = \{X \in \hat{\Omega}: w(X) \leq tS\}, 
\end{align*}
where $w(\cdot)$ is the random function defined by the sampler in \Cref{lem:secondstagemainsampler} (that is, the approximate knapsack sampler samples approximately from the uniform distribution over $\Omega'$).

For any $X \in \hat{\Omega}$, $w(X)$ is a $\sigma^2$-subgaussian random variable with mean $W_X - W_{\text{low}}$, where $\sigma^2 = {W_{I_0^j}^2}/{(|I_0^j|^2 \log^{40}(n/\eps))}$. If $X \in \Omega^{(j)}$, then $W_X \leq T - W_{i_j}$. By the concentration of subgaussian,
\begin{align*}
\forall X \in \Omega^{(j)} ,\quad   \Pr[w(X) > tS] \leq \Pr\left[w(X) > W_X - W_{\text{low}} + \frac{W_{I_0^j}}{|I_0^j| \log^{10} (n/\eps)}\right]\leq \exp\left( -\frac{1}{2} \log^{20} \frac{n}{\eps} \right).
\end{align*}
That is,  for any $X \in \Omega^{(j)}$, it holds that
\begin{align*}
    \Pr[X \notin \Omega'] & \leq \exp\left( -\frac{1}{2} \log^{20} \frac{n}{\eps} \right).
\end{align*}
This means
$\Ex[|\Omega^{(j)} \setminus \Omega'|] \leq e^{-0.5\log^{20}(n/\eps)}|\Omega^{(j)}|$.
By Markov's inequality, we have
\begin{align}\label{eq-g2var}
    \Pr\left[ \frac{|\Omega^{(j)} \cap \Omega'|}{|\Omega^{(j)}|} \geq 1 - \frac{\eps^2}{n^2} \right] = 1 - \Pr\left[|\Omega^{(j)} \setminus \Omega'| \geq \frac{\eps^2}{n^2}|\Omega^{(j)}|\right] \geq 1 - \exp\left(-\log^{10}\frac{n}{\eps}\right).
\end{align}

On the other hand, let \[\Omega_{\text{big}} = \left\{(X_0,X_i)\in 2^{I_0^j}\times \{X_1,X_2,\dots,X_N\}: W_{X_0}  + W_{X_i}\ge T -W_{i_j} + \frac{3W_{I_0^j}}{|I_0^j|\log^{10}(n/\eps)}\right\},\] and for any $X\in \Omega_{\text{big}}$,
\begin{align*}
\Pr[X \in \Omega'] \leq \Pr \left[ w(X) \leq tS \right]  &\leq  \Pr \left[ w(X) \leq T - W_{\text{low}} - W_{i_j} + \frac{W_{I_0^j}}{|I_0^j| \log^{10}(n/\eps)} +S \right]\\
& \le \Pr \left[ w(X) \leq W_X -W_{\text{low}} - \frac{2W_{I_0^j}}{|I_0^j|\log^{10}(n/\eps)}  +S \right]\\
\quad & \le \Pr \left[ w(X) \leq W_X - W_{\text{low}} -\frac{W_{I_0^j}}{|I_0^j|\log^{10}(n/\eps)} \right]\\
& \le \exp\left( -\frac{1}{2} \log^{20} \frac{n}{\eps} \right)
\end{align*}
 by the concentration of subgaussian.
Hence, we have
\begin{align*}
    \Ex\Big[ \Big| \Omega_{\text{big}} \cap \Omega' \Big| \Big] \leq |\Omega_{\text{big}}|\exp\left( -\frac{1}{2} \log^{20} \frac{n}{\eps} \right) \le |\hat \Omega|\cdot \exp\left( -\frac{1}{2} \log^{20} \frac{n}{\eps} \right).
\end{align*}
By Markov's inequality, we know that
\begin{align}\label{eq-g1variant}
     \Pr\Big[ \Big| \Omega_{\text{big}} \cap \Omega' \Big|  \leq \frac{\eps^2}{n^2}|\hat \Omega| \Big] \geq 1 - \exp\left(-\log^5\frac{n}{\eps}\right).
\end{align}
Combining~\eqref{eq-g1variant} and~\eqref{eq-g2var}, with probability at least $1 - \exp\left(-\log^{10}\frac{n}{\eps}\right)-\exp\left(-\log^{5}\frac{n}{\eps}\right)$,
\begin{align}\label{eq:bd-1}
\begin{split}
|\Omega^{(j)}| &\geq {|\Omega^{(j)} \cap \Omega'|} \geq \left( 1-\frac{\eps^2}{n^2} \right) |\Omega^{(j)}|,\\
    |\Omega'| &= |\Omega' \setminus \Omega_{\text{big}}| + |\Omega' \cap \Omega_{\text{big}}| \le |\Omega' \setminus \Omega_{\text{big}}| + \frac{\eps^2}{n^2}|\hat \Omega|. 
    \end{split}
\end{align}
Now we give an upper bound on $|\Omega' \setminus \Omega_{\text{big}}|$.  For any $(X_0,X_i) \in \Omega'\setminus \Omega_{\text{big}}$, we have 
\[ W_{X_0}+W_{X_i} < T-W_{i_j} + \frac{3W_{I_0^j}}{|I_0^j|\log^{10}(n/\eps)}. \]
By the way we sorted the items in $I_0$, we have $W_{i_j}\ge W_{i_{j'}}$ for all $i_{j'}\in I_0^j$, so $W_{i_j} \ge \frac{W_{I_0^j}}{|I_0^j|}$ by averaging. In particular, the above inequality implies 
\[ W_{X_0} + W_{X_i} < T.\]
Hence, we have 
\begin{align}\label{eq:bd-2}
   |\Omega'\setminus \Omega_{\text{big}}| \le | \{(X_0,X_i)\in 2^{I_0^j}\times \{X_1,X_2,\dots,X_N\}: W_{X_0}  + W_{X_i}< T\}|
    \le |\Omega|,
\end{align}
where $\Omega$ is the final set (defined at the beginning of \Cref{sec:secondalg}) we want to count in this subsection.
Therefore, we have 
    \[|\Omega'| \le {|\Omega| + \frac{\eps^2}{n^2}|\hat \Omega|}. \]

Recall $\delta = (\frac{\eps}{n})^{10}$ is the error of the Approximate Knapsack Sampler.
By Chernoff bound, we know our empirical mean (with $N'=\eps^{-2} 5000\log^{10}(n/\eps)$ samples) satisfies 
\[\left(\frac{1}{N'}\sum_{i=1}^{N'} \caI_{Y_i}\right) \in \frac{|\Omega' \cap \Omega^{(j)}|}{|\Omega'|} \pm \frac{\eps}{\log^{3}(n/\eps)} \pm \delta\]
with probability at least $1-\exp(\log^3(n/\eps))$. 
Let $ Z = \left(\frac{1}{N'}\sum_{i=1}^{N'} \caI_{Y_i}\right)\sum_{j=0}^{\min(t,L)}\hat{f}(jS)$ denote our output. 
This means our output value satisfies
\begin{align*}
 (1 - \delta)|\Omega'|\cdot \Big (\frac{|\Omega' \cap \Omega^{(j)}|}{|\Omega'|} - \frac{\eps}{\log^{3}(n/\eps)} - \delta\Big ) \leq  Z \leq (1+ \delta)|\Omega'|\cdot \Big (\frac{|\Omega' \cap \Omega^{(j)}|}{|\Omega'|} + \frac{\eps}{\log^{3}(n/\eps)} + \delta\Big ).
\end{align*}
Since $\delta = (\frac{\eps}{n})^{10} \ll \frac{\eps}{\log^3(n/\eps)}$, we can simplify it to
\begin{align*}
 (1 - \delta) |\Omega' \cap \Omega^{(j)}| - \left(\frac{2\eps}{\log^3(n/\eps)}\right)\cdot |\Omega'| \leq    Z \leq (1 + \delta) |\Omega' \cap \Omega^{(j)}| + \left(\frac{2\eps}{\log^3(n/\eps)}\right)\cdot |\Omega'|.
\end{align*}
By \eqref{eq:bd-1}  and \eqref{eq:bd-2}, we have the following bound
\begin{align*}
(1 - \delta) (1 - \frac{\eps^2}{n^2})|\Omega^{(j)}| - \frac{2\eps}{\log^3\frac{n}{\eps}} (|\Omega | + \frac{\eps^2}{n^2}|\hat \Omega|) \leq    Z \leq (1+ \delta) (1+ \frac{\eps^2}{n^2})|\Omega^{(j)}| + \frac{2\eps}{\log^3\frac{n}{\eps}} (|\Omega | + \frac{\eps^2}{n^2}|\hat \Omega|).
\end{align*}
By using the fact that $|\Omega^{(j)}| \leq |\hat{\Omega}| \leq 2^{|I_0|}N$ and $\delta \leq (\frac{\eps}{n})^{10}$, we have
\begin{align*}
    Z \in |\Omega^{(j)}| \pm \left(\frac{3\eps^2}{n^2}2^{|I_0|}N + \frac{2\eps}{\log^3(n/\eps)}|\Omega|\right).
\end{align*}

Now, if we sum up these output values obtained for all $1\le j\le r = O(\log(n/\eps))$, then by \cref{lem:sumjrapprox}, this sum (plus $N$) approximates $|\Omega|$ up to an additive error
\begin{align*}
& N\cdot 2^{|I_0|}\cdot \frac{\eps}{100000n(\log_2 n)^2} + \sum_{j=1}^{O(\log(n/\eps))} \left(\frac{3\eps^2}{n^2}2^{|I_0|}N + \frac{2\eps}{\log^3(n/\eps)}|\Omega|\right)\\
  & \le N\cdot 2^{|I_0|} \cdot \frac{\eps}{90000n(\log_2 n)^2} + O\left(\frac{\eps}{\log^2(n/\eps)}\right)\cdot |\Omega|,
\end{align*}
which satisfies the desired hybrid approximation claimed in \cref{lem:sub}.

The running time of the whole algorithm can be analyzed as follows. The total running time of the clean-up stage is at most $\widetilde{O}(\sum_{i=1}^N|X_i| + |I_0|)$. 
For each $1 \leq j \leq r$, where $r = \widetilde{O}(1)$, the construction of the sampler in \Cref{lem:secondstagemainsampler} costs $\caT_c = \widetilde O(|I_0^j|^{1.5} + \sum_{i\in [N]}|X_i|)$. The total running time of generating $N'$ samples is $ N'\caT_q = \widetilde O(N'(|I_0^j|^{1.5} + \max_{i\in [N]}|X_i|))$, where $N' = \widetilde{O}(\frac{1}{\eps^2})$. The total running time is dominated by
\begin{align*}
\widetilde{O}\left(\sum_{i=1}^N|X_i| + |I_0|\right) + r\caT_c + N'r \caT_q   = \widetilde{O}\left( \frac{|I_0|^{1.5}+\max_{i \in [N]}|X_i|}{\eps^2} + \sum_{i \in [N]}|X_i| \right)
\end{align*}
as desired.
\section{Sum-approximation convolution}\label{sec:sum-app}
In this section, we prove \Cref{lem:sum-approx-conv} in two steps:
First, in \Cref{sec:max+}, we solve the ``weighted witness counting'' version of $(\max,+)$-convolution for bounded monotone arrays. 
Then, in \Cref{sec:sum-app-conv}, we use it as a subroutine to obtain a $(1\pm \delta)$-sum-approximation convolution algorithm, proving \Cref{lem:sum-approx-conv}.

\subsection{Weighted witness counting for bounded monotone  \texorpdfstring{$(\max,+)$}{(max,+)}-convolution}\label{sec:max+}
The $(\max,+)$-convolution of two arrays $A[1\ldots n],B[1\ldots n]$ is defined as the array $C[2\ldots 2n]$ where \[C[k] := \max_{i+j=k}\{A[i]+B[j]\}.\] It can be computed by brute force in $O(n^2)$ time (whereas the best known algorithm is only faster by a $2^{\Omega(\sqrt{\log n})}$ factor \cite{Williamsapsp,bremner2014necklaces}). Significant speed-ups are known for structured input instances:
Chi, Duan, Xie, and Zhang \cite{ChiDX022} gave an $\widetilde O(n^{1.5})$-time randomized algorithm that computes the $(\max,+)$-convolution of two length-$n$ \emph{monotone} arrays (either both non-increasing or both non-decreasing) with integer entries bounded by $O(n)$, improving the previous $O(n^{1.859})$ time bound of \cite{ChanL15}. 
This result was recently generalized by Bringmann, D\"{u}rr, and Polak \cite[Theorem 7]{bringmann2024even} to monotone integer arrays bounded by $M$, achieving time complexity $\widetilde O(nM^{1/2})$.

We slightly extend the algorithm of \cite{bringmann2024even} to additionally compute the ``weighted  witness count'' for each entry of the $(\max,+)$-convolution:
\begin{theorem}
   Given two monotone arrays $A[1\ldots n],B[1\ldots n]$ with integer entries in $\{0, \ldots , M\}$, and two weight arrays $u[1\ldots n],v[1\ldots n]$ with integer entries in $\{0,\dots,U\}$, there is a randomized $\widetilde O(nM^{1/2}\cdot \polylog(U))$-time algorithm that computes the following with probability at least $1 - \frac{1}{\poly(n)}$:
   \begin{itemize}
    \item The $(\max,+)$-convolution $C[2\ldots 2n]$ of $A$ and $B$, and
    \item The weighted witness counts for all $2\le k \le 2n$, defined as 
    \[w[k] :=\sum_{i: A[i]+B[k-i]=C[k]} u[i]\cdot v[k-i].\]
\label{item:weighted}
   \end{itemize}
The $\widetilde{O}(\cdot)$ hides a $\polylog(n,M)$ factor.
\label{thm:weightedwitnesscnt}
\end{theorem}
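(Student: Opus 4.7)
The plan is a white-box modification of the bounded monotone $(\max,+)$-convolution algorithm of \cite{bringmann2024even} (which itself lifts the Chi-Duan-Xie-Zhang algorithm \cite{ChiDX022} from $M=O(n)$ to general bounded $M$). That algorithm, at its core, reduces the monotone $(\max,+)$-convolution to $\widetilde O(\sqrt{M})$ calls to ordinary integer convolution via FFT, on length-$\widetilde O(n)$ arrays whose entries are bounded by $\poly(n)$. Each such convolution certifies, for a batch of target indices $k$, which pairs $(i,k-i)$ realize the maximum $A[i]+B[k-i]=C[k]$: the algorithm partitions $[n]$ into sets $S_A^{(r)}$ and $S_B^{(s)}$ based on approximate values of $A[\cdot]$ and $B[\cdot]$, and for each compatible pair of parts it runs an FFT on indicator arrays $\mathbf{1}_{S_A^{(r)}}$ and $\mathbf{1}_{S_B^{(s)}}$ to detect witnesses for $C[k]$ coming from $S_A^{(r)}\times S_B^{(s)}$.

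The key observation is that these witness-certifying FFTs can be replaced, one-for-one, by weighted versions without disturbing the algorithmic structure. Concretely, wherever the original algorithm feeds in the indicators $\mathbf{1}_{S_A^{(r)}}$ and $\mathbf{1}_{S_B^{(s)}}$, I would instead feed in the arrays $\bigl(u[i]\cdot \mathbf{1}_{i\in S_A^{(r)}}\bigr)_i$ and $\bigl(v[j]\cdot \mathbf{1}_{j\in S_B^{(s)}}\bigr)_j$. By bilinearity of convolution, the output at index $k$ equals $\sum_{i+j=k,\, i\in S_A^{(r)},\, j\in S_B^{(s)}} u[i]\,v[j]$, which is exactly the contribution to $w[k]$ coming from witnesses in $S_A^{(r)}\times S_B^{(s)}$. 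Aggregating over all $(r,s)$ pairs that the unmodified algorithm already identifies as witness-carrying for $k$ then yields $w[k]$ as defined in the statement.

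The cost analysis changes only in bit complexity: each entry of the convolved arrays is now an integer of magnitude at most $n\cdot U^2$, representable in $O(\log n + \log U)$ bits, so by \cref{lemma-FFT} each FFT call runs in $\widetilde O(n\cdot \polylog(U))$ word-RAM time rather than $\widetilde O(n)$. Since the number of FFT calls, as well as all other combinatorial bookkeeping, are inherited unchanged from \cite{bringmann2024even}, the overall running time becomes $\widetilde O(n\sqrt{M}\cdot \polylog(U))$ as claimed, and the internal randomization (used in \cite{ChiDX022,bringmann2024even} to handle the monotone structure) together with its $1-1/\poly(n)$ success probability carry over verbatim.

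The main obstacle I anticipate is verifying, subroutine by subroutine, that the internal machinery of \cite{bringmann2024even,ChiDX022} passes weights through transparently. Their algorithms use several non-trivial ingredients beyond plain FFT, including rank-based partitioning of monotone arrays and randomized block decompositions, whose correctness relies on structural properties of $A$ and $B$ rather than on the entries being $\{0,1\}$-valued. I would need to inspect each such step to confirm that it never performs an operation (such as a Boolean OR that collapses witness multiplicities) which would destroy weight information, and where necessary replace it by its weighted analogue, which is always a bounded-bit integer sum. Once this bookkeeping is complete, the weighted witness counts aggregate by linearity from the modified FFT outputs, with no change to the asymptotic complexity beyond the $\polylog(U)$ factor.
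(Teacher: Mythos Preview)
Your proposal has a genuine gap, and the paper explicitly anticipates it. You assume that every witness-certifying step in the algorithm of \cite{ChiDX022,bringmann2024even} is an FFT on indicator arrays that can be swapped for a weighted FFT. But a central efficiency trick in those algorithms is that, because $A$ and $B$ are monotone, the witnesses for a fixed $k$ and a fixed candidate value come in \emph{segments}: maximal intervals $[i_1,i_2]$ on which $A[i]$ and $B[k-i]$ are both constant. The unweighted witness count over such a segment is simply its length $i_2-i_1+1$, computed in $O(1)$ time; this is what keeps the total segment-processing cost within $\widetilde O(n\sqrt{M})$. The weighted analogue you would need is $\sum_{i\in[i_1,i_2]} u[i]\,v[k-i]$, a length-$(i_2-i_1+1)$ inner product that depends on $k$ and has no obvious $O(1)$ evaluation. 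So the ``replace indicators by weights'' recipe breaks precisely at the step that is \emph{not} an FFT.

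The paper therefore takes a different route. It first computes $C[\cdot]$ and the set of false-positive segments via \cite{bringmann2024even} as a black box (Lemma~6.2). Then it runs \emph{one} global weighted bivariate FFT on $\sum_i u[i]\,x^i y^{A[i]\bmod p}$ and $\sum_j v[j]\,x^j y^{B[j]\bmod p}$ to obtain $w'[k]=\sum_{i:\,A[i]+B[k-i]\equiv C[k]\pmod p} u[i]v[k-i]$, which overcounts exactly by the false-positive contributions. Finally, it subtracts those contributions by pairing up false-positive segments with a carefully chosen interval decomposition $\caI\times\caJ$ of $[1,n]^2$ (each interval of length $O(n/M+1)$), and for each relevant pair $(I,J)$ running a \emph{small local} weighted FFT to compute $\sum_{i\in I,j\in J,i+j=k}u[i]v[j]$. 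The nontrivial analysis is bounding the expected number of relevant pairs over the random prime $p$, which uses monotonicity of $C$ in an essential way. This two-phase ``global overcount, then local correct'' structure is the missing idea in your proposal.
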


Observe that $C[2\ldots 2n]$ must be a monotone array with entries from $\{0,\dots,2M\}$.

We remark that, in the unweighted case (i.e., when $u[i]=v[j]=1$ for all $i,j$), the witness counts are actually already computed by the algorithm of \cite{ChiDX022,bringmann2024even} as a by-product. However, the weighted case in \cref{thm:weightedwitnesscnt} does not immediately follow from their algorithm.\footnote{Very roughly speaking, the algorithm of \cite{ChiDX022,bringmann2024even} maintains a few candidates of  (possibly rounded values of) the entry $C[k]$ in the $(\max,+)$-convolution, computes the (unweighted) witness counts for the candidates, and then decides on the maximum candidate with a non-zero witness count.  When computing witness counts, their algorithm uses the fact that the number of pairs $(i,j)$ such that $i+j=k$ and $i\in [i_1,i_2]$ simply equals $i_2-i_1+1$ and can be computed in constant time.
However, the weighted count of such pairs $(i,j)$, namely $\sum_{i\in [i_1,i_2]}u[i]v[k-i]$, is not easy to  efficiently compute for given $k$ and $[i_1,i_2]$.} Nevertheless, we can prove \cref{thm:weightedwitnesscnt} by a white-box modification of their algorithm, as we describe next. (We will only describe the modifications on top of \cite{bringmann2024even}, instead of giving a fully self-contained proof.) %

We first state (possibly rephrased versions of) a few definitions and results from \cite[Appendix B]{bringmann2024even}.\footnote{\cite[Appendix B]{bringmann2024even} studied $(\min,+)$-convolution instead of $(\max,+)$-convolution, but they are equivalent after replacing $A[i]$ by $M-A[i]$ and replacing $B[j]$ by $M-B[j]$.} The algorithm picks a uniform random prime $p$ from $[M^{1/2},2M^{1/2}]$.\footnote{\cite[Appendix B]{bringmann2024even} picked $p$ from $[M^{\alpha},2M^{\alpha}]$, and eventually set the parameter $\alpha$ to $1/2$.}
We say $([i_1,i_2],k)$ is a \emph{segment} if $[i_1,i_2]$ is a maximal interval such that for every $i\in [i_1,i_2]$,  $A[i] = A[i_1]$ and $B[k-i]=B[k-i_1]$. Define the set of \emph{false positive segments} as    %
        \begin{equation}
            \label{eqn:t00}
    T_0^{(0)} := \{ \text{segment } ([i_1,i_2],k) \,\mid \,  A[i_1]+B[k-i_1] \equiv C[k]\hspace{-8pt}\pmod{p} \text{ and }   A[i_1]+  B[k-i_1] \neq  C[k] \}.
        \end{equation}
        Our definitions of segments and false positive segments are simplified from their original definitions in \cite[Appendix B]{bringmann2024even}; see \cref{remark:segment} below.

The algorithm of \cite[Appendix B]{bringmann2024even} not only computes $C[2\dots 2n]$, but also computes the set of false positive segments as some auxiliary data during the execution of the algorithm. Hence, we have the following lemma, which will be useful for proving \cref{thm:weightedwitnesscnt}:
  \begin{lemma}[implicit in {\cite[Appendix B]{bringmann2024even}}]
    \label{lem:falsepositivesegs}
  With at least $0.99$ success probability over the choice of prime $p\in [M^{1/2},2M^{1/2}]$, we can compute 
     $C[2\ldots 2n]$ and the set of false positive segments $T_0^{(0)}$ (defined in \eqref{eqn:t00}) in $\widetilde O(nM^{1/2})$ time (in particular, the output size $|T_0^{(0)}|\le \widetilde O(nM^{1/2})$).
   \end{lemma}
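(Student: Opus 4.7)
The plan is to follow the algorithm of Bringmann, Dürr, and Polak \cite{bringmann2024even} (which in turn refines \cite{ChiDX022}) and to observe that, once one peeks inside its execution, the set $T_0^{(0)}$ is produced as auxiliary data with no asymptotic slowdown. Their algorithm picks a uniform random prime $p\in[M^{1/2},2M^{1/2}]$ and, exploiting monotonicity, partitions $[1,n]$ into maximal runs on which $A\bmod p$ is constant (and likewise for $B$); each pairing of a run in $A$ with a run in $B$ naturally yields a segment $([i_1,i_2],k)$ in the sense of the excerpt. The algorithm then, for each output index $k$, enumerates a small list of candidate residues $c\in\{0,\dots,p-1\}$ that are realized as $(A[i]+B[k-i])\bmod p$ for some $i$; for each candidate it reconstructs the exact value $A[i]+B[k-i]$ (determined by the residue together with the integer quotient obtained from the run) and sets $C[k]$ to the maximum reconstructed value. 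The analysis of \cite{bringmann2024even} shows that the total cost of this enumeration is $\widetilde O(nM^{1/2})$ and that, with probability at least $0.99$ over $p$, the output is the exact $(\max,+)$-convolution.

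Next I would observe that in the course of this reconstruction the algorithm inspects \emph{every} segment $([i_1,i_2],k)$ whose residue equals the winning candidate's residue modulo $p$. Since by definition a false positive segment is exactly such a segment with $A[i_1]+B[k-i_1]\ne C[k]$, we can flag it (and append $([i_1,i_2],k)$ to the output list $T_0^{(0)}$) with an $O(1)$ comparison against the already-determined $C[k]$. This gives $T_0^{(0)}$ essentially for free, with total running time still $\widetilde O(nM^{1/2})$; the same bound also caps $|T_0^{(0)}|$ because each element of the list is charged to one inspection event of the underlying algorithm.

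The main obstacle is purely one of bookkeeping: \cite{bringmann2024even} does not explicitly name or emit $T_0^{(0)}$, so the plan requires a careful reading of their Appendix B to verify two points. First, that the notion of segment used by their algorithm (driven by the block decomposition of $A,B\bmod p$) agrees, up to trivial re-indexing, with the simplified definition in the excerpt; the simplification here is that we only care about the $(i_1,i_2,k)$ triple and not about any finer decomposition their algorithm might use internally. Second, that the algorithm does \emph{not} short-circuit once it has found the winning residue, so that every matching segment is genuinely visited and can be tested against $C[k]$. Both are straightforward upon inspection of their pseudocode, so the lemma amounts to exposing an existing intermediate quantity of the known algorithm rather than inventing new algorithmic ideas.
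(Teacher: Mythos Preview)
Your proposal is essentially correct and matches the paper's approach: both observe that $T_0^{(0)}$ is already produced as auxiliary data inside the algorithm of \cite{bringmann2024even}, so the lemma amounts to exposing an intermediate quantity rather than designing anything new, and both flag that the simplified segment definition here must be reconciled with the one in \cite{bringmann2024even} (the paper carries this out explicitly in its Remark, showing the definitions coincide with the $\ell=0,b=0$ case there).

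One concrete technical point you should add to your ``careful reading'' checklist: the main algorithm in \cite[Appendix~B]{bringmann2024even} only directly handles inputs satisfying their Assumption~44 (residues of non-dummy entries lie in $[0,p/3]$, to avoid wrap-around), and general inputs are reduced to nine such instances. The paper here notes that to extract $T_0^{(0)}$ for the original instance one runs the main algorithm on each of the nine sub-instances, takes the union of their false positive segment sets, and then discards any segment whose value $A[i_1]+B[k-i_1]$ no longer matches the final entrywise-maximum $C[k]$. This is routine but is the one piece of bookkeeping beyond what you listed.
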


To extract \cref{lem:falsepositivesegs} from the algorithm description in \cite[Appendix B]{bringmann2024even}, there is one small detail remaining: the main algorithm in \cite[Appendix B]{bringmann2024even} only handles input instances satisfying a certain assumption (\cite[Assumption 44]{bringmann2024even}).\footnote{This assumption was introduced to avoid the technicalities arising from wrap-around modulo $p$. It assumes that every input entry is either a dummy $(-\infty)$ or an integer whose remainder modulo $p$ lies in the interval $[0,p/3]$, and that there are $O(M/p)$ contiguous segments of dummy entries. Only the non-dummy entries in the input arrays are required to be monotone. Dummy entries are not considered in the definition of segments. } To lift this assumption, they decompose the original $(\max,+)$-convolution instance into $9$ instances each satisfying the assumption.  These 9 instances are solved separately using their main algorithm, and finally their answers are combined by keeping the entrywise maximum as the final answer $C[k]$ in the $(\max,+)$-convolution (\cite[Lemma 45]{bringmann2024even}). 
In our case, we can lift this assumption in a similar way: apply their main algorithm to each of the 9 instances and obtain its set of false positive segments. Then we take the union of the false positive segment sets for these instances, and remove those segments $([i_1,i_2],k)$ where $A[i_1]+B[k-i_1]$ does not equal the final (entrywise maximum) value of $C[k]$. In this way we obtain the set of false positive segments for the original instance.

   \begin{remark}
       \label{remark:segment}
The original definitions of segments and $T_0^{(0)}$ in \cite{bringmann2024even} had extra parameters $\ell$ and $b$ which are not needed for our purpose. 
In the following we verify that our simpler definitions coincide with the $\ell=0$ and $b=0$ special case of the original definitions. 
       \begin{itemize}
        \item \cite[Definition 47]{bringmann2024even} originally defined segments $([i_1,i_2],k)_\ell$ with an extra parameter $\ell$. The conditions in their definition were $A^{(\ell)}[i]=A^{(\ell)}[i_1], B^{(\ell)}[k-i]=B^{(\ell)}[k-i_1], \tilde A[i]=\tilde A[i_1]$, and  $\tilde B[k-i]=\tilde B[k-i_1]$ for all $i\in [i_1,i_2]$, where 
 $\tilde A[i] := \left \lfloor \frac{A[i]}{p}\right \rfloor, A^{(\ell)}[i] := \lfloor \frac{A[i] \bmod p}{2^\ell}\rfloor$ (and similarly for $\tilde B[j],B^{(\ell)}[j]$) as defined in the beginning of \cite[Section B.1.2]{bringmann2024even}.
Here, we specialize to the $\ell=0$ case, and drop the $\ell$ subscript. Since any integer is uniquely determined by its quotient and remainder modulo $p$,  the conditions above are equivalent to  $A[i] = A[i_1]$ and $B[k-i]=B[k-i_1]$.
\item The original definition of $T_b^{(\ell)}$ appeared in \cite[Section B.1.2]{bringmann2024even}. For $b=\ell=0$, their definition  was $\{ \text{segment } ([i_1,i_2],k) \,\mid \,  A^{(0)}[i_1]+B^{(0)}[k-i_1] = C^{(0)}[k] \text{ and }  \tilde A[i_1]+ \tilde B[k-i_1] \neq \tilde C[k] \}$, where $\tilde A[i] := \left \lfloor \frac{A[i]}{p}\right \rfloor, A^{(0)}[i] := A[i] \bmod p$ (and similarly for $\tilde B[j],B^{(0)}[j],\tilde C[k],C^{(0)}[k]$). 
Their definition was only relevant under the assumption (\cite[Assumption 44]{bringmann2024even}) that $(A[i_1]\bmod p) \le p/3$ and $(B[k-i_1]\bmod p) \le p/3$. In this case, we verify that their definition coincides with our definition in \eqref{eqn:t00}: since $(A[i_1]+B[k-i_1])\bmod p = (A[i_1]\bmod p) + (B[k-i_1]\bmod p) = A^{(0)}[i_1] + B^{(0)}[k-i_1]$, we have $A^{(0)}[i_1] + B^{(0)}[k-i_1]= C^{(0)}$ if and only if $A[i_1]+B[k-i_1] \equiv C[k] \pmod{p}$. Then, assuming $A^{(0)}[i_1] + B^{(0)}[k-i_1]= C^{(0)}$ holds, we have $  C[k] -  A[i_1]- B[k-i_1] = p(\tilde C[k] - \tilde A[i_1]-\tilde B[k-i_1])  + C^{(0)}[k] - A^{(0)}[i_1]-B^{(0)}[k-i_1] = p(\tilde C[k] - \tilde A[i_1]-\tilde B[k-i_1])$, so $\tilde A[i_1]+ \tilde B[k-i_1] \neq \tilde C[k]$ if and only if $ A[i_1]+  B[k-i_1] \neq  C[k]$.
       \end{itemize}
   \end{remark}

Now we prove \cref{thm:weightedwitnesscnt} using \cref{lem:falsepositivesegs}.
\begin{proofof}{Proof of \cref{thm:weightedwitnesscnt}}
    Pick a uniform random prime $p\in [M^{1/2},2M^{1/2}]$.
     Define polynomials $f(x,y) = \sum_{i=1}^n u[i]x^i y^{A[i]\bmod p}$ and $g(x,y) = \sum_{j=1}^n v[j]x^j y^{B[j]\bmod p}$ which have $x$-degree at most $n$ and $y$-degree at most $p-1$, and compute their product  using FFT in $\widetilde O(np\polylog U)$ time. Then, for every $k\in [2, 2n]$, we obtain  
     \[ w'[k]:= \sum_{i:\, A[i]+B[k-i]\equiv C[k]\hspace{-7pt}\pmod{p}} u[i]v[k-i]\]
     by summing the $x^ky^{C[k]\bmod p}$ term's and the $x^ky^{(C[k]\bmod p)+p}$ term's coefficients in the product $f(x,y) g(x,y)$.  To compute the desired weighted witness count $w[k] = \sum_{i:A[i]+B[k-i]=C[k]}u[i]\cdot v[k-i]$, it remains to subtract from $w'[k]$ the contribution made by the false positive witnesses $i$ where $A[i]+B[k-i]\equiv C[k] \pmod{p}$ and $A[i]+B[k-i]\neq C[k]$. 
     
     Since $A[1\ldots n]$ is monotone with entries from $\{0,\dots,M\}$,  we can partition $[1,n]$ into a collection $\caI$ of at most $\min\{2M,n\}$ disjoint intervals, such that each interval $I\in \caI$ has length $|I|\le O(\frac{n}{M}+1)$ and has the same value of $A[i]$ for all $i\in I$.\footnote{In this section, an \emph{interval $[a,b]$} refers to the set of integers $\{a,a+1,\dots,b\}$, and we say the \emph{length} of this interval is $b-a+1$.} 
     Similarly, let $\caJ$ be a collection of at most $\min\{2M,n\}$ intervals that partition $[1,n]$ so that each $J\in\caJ$ has the same value of $B[j]$ for all $j\in J$, and $|J|\le O(\frac{n}{M}+1)$.  
     For an interval $I\in \caI$, we use the shorthand $A[I]:= A[i]$ where $i\in I$. For $a\in \{0,\dots,M\}$, let $\caI^{(a)}:= \{ I\in \caI: A[I]=a\}$. Similarly define $B[J]$ and $\caJ^{(b)}$.
Denote the sum of two intervals $I= [i_1,i_2],J= [j_1, j_2]$ as $I+J = [i_1+j_1,i_2+j_2]$.

We say a pair of intervals $(I,J)\in \caI\times \caJ$ is \emph{relevant} for an index $k \in [2,2n]$, if $k\in I+J$, $A[I]+B[J] \equiv C[k] \pmod{p} $, and $A[I]+B[J]\neq C[k]$.
     We say $(I,J)\in \caI\times\caJ$ is \emph{relevant} if there exists $k\in I+J$ such that $(I,J)$ is relevant for $k$.  
     For every relevant interval pair $(I,J)\in \caI\times \caJ$, define an  array $\hat w^{(I,J)}$ of length $|I+J|$ by
     \begin{equation}
     \hat w^{(I,J)} [k] := \sum_{i\in I,j\in J, i+j=k} u[i]\cdot v[j], \text{ for all }k\in I+J.
     \label{eqn:hatwij}
     \end{equation}
Then,  the correct weighted witness count for every $2\le k\le 2n$ is given by
\begin{equation}
    \label{eqn:wk}
  w[k] =  w'[k] - \sum_{(I,J)\text{ relevant for }k}\hat w^{(I, J)}[k].
\end{equation}

In order to use \eqref{eqn:wk} to compute $w[k]$ for all $k\in [2,2n]$, we need to first find all the relevant interval pairs $(I,J)\in \caI\times \caJ$, and then compute the array $\hat w^{(I,J)}$ in \eqref{eqn:hatwij} for each relevant pair $(I,J)$, by convolving the two arrays $\{u[i]\}_{i\in I}, \{v[j]\}_{j\in J}$ using FFT in $\widetilde O((|I|+|J|)\polylog U)$ time.

Now we describe how to find all the pairs $(I,J)\in \caI\times \caJ$ that are relevant for any given $k\in [2,2n]$.
   At the very beginning, apply \cref{lem:falsepositivesegs} to compute all the false positive segments $T_0^{(0)}$. 
   Given $k \in [2,2n]$, we go through every false positive segment $([i_1,i_2],k)\in T_0^{(0)}$,  find all interval pairs $(I,J) \in \caI^{(A[i_1])}\times \caJ^{(B[k-i_1])}$ such that $k \in I+J$, and mark these pairs as relevant for $k$.
    The correctness of this procedure immediately follows from the definition of false positive segments in \eqref{eqn:t00} and the definition of relevant interval pairs. 
    To efficiently implement this procedure, we use the following claim (with $\caI'\gets \caI^{(A[i_1])}$ and $\caJ'\gets \caJ^{(B[k-i_1])}$) which is based on standard binary search and two pointers:
     \begin{claim}
         \label{claim:twopointer}
        Let $\caI'=\{[i_0+1, i_1], [i_1+1,i_2],\dots,[i_{|\caI'|-1}+1, i_{|\caI'|}]\}$ and $\caJ'=\{[j_0+1, j_1], [j_1+1,j_2],\dots,[j_{|\caJ'|-1}+1, j_{|\caJ'|}]\}$. Given integer $k$, let $\caP_k:=\{(I,J)\in \caI'\times \caJ': k\in I+J\}$. Then $|\caP_k|\le |\caI'|+|\caJ'|-1$, and $\caP_k$ can be reported in $O(\log (|\caI'|+|\caJ'|)) + O(|\caP_k|)$ time (assuming the arrays $(i_0,\dots,i_{|\caI'|})$ and $(j_0,\dots,j_{|\caJ'|})$ are already stored in memory).
     \end{claim}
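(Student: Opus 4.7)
My plan is to view $\caI'\times\caJ'$ as an $|\caI'|\times|\caJ'|$ grid of axis-aligned rectangles in the plane, namely $R_{s,t}\defeq I_s\times J_t\subseteq\R^2$, and reduce the claim to the classical geometric fact that a line of slope $-1$ crosses at most $|\caI'|+|\caJ'|-1$ cells of such a monotone grid. The key observation is that $k\in I_s+J_t$ is equivalent to $R_{s,t}$ intersecting the anti-diagonal line $L_k\defeq\{(x,y)\in\R^2:x+y=k\}$, so $\caP_k$ is in bijection with the cells met by $L_k$, and the proof reduces to a cardinality bound plus a traversal algorithm for this staircase.

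For the size bound, I argue that the cells intersected by $L_k$ form a monotone staircase. Sweeping along $L_k$ from upper-left to lower-right, each time $L_k$ leaves a cell it enters an adjacent cell: an exit through the right edge of $(s,t)$ moves to $(s+1,t)$, an exit through the bottom edge moves to $(s,t-1)$, and in the degenerate case where $L_k$ passes exactly through a shared corner, $s$ increments and $t$ decrements by one simultaneously. In any case, the total number of $+1$ moves of $s$ is at most $|\caI'|-1$ and the total number of $-1$ moves of $t$ is at most $|\caJ'|-1$, so $|\caP_k|\le|\caI'|+|\caJ'|-1$.

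For the enumeration, the algorithm first tests feasibility in $O(1)$ by checking $i_0+j_0+2\le k\le i_{|\caI'|}+j_{|\caJ'|}$; otherwise output $\caP_k=\emptyset$. It then locates the upper-leftmost cell of the staircase via two binary searches on the sorted arrays $(i_s)$ and $(j_t)$: find the smallest $s^*$ with $i_{s^*}+j_{|\caJ'|}\ge k$ and the largest $t^*$ with $i_{s^*-1}+j_{t^*-1}+2\le k$, which together give a valid starting cell $(s^*,t^*)$ in $O(\log(|\caI'|+|\caJ'|))$ time. From there it walks the staircase with two pointers: at the current cell it compares $k$ with $i_s+j_{t-1}+1$ to decide in $O(1)$ whether to advance to $(s+1,t)$ (if $k>i_s+j_{t-1}+1$), to $(s,t-1)$ (if $k<i_s+j_{t-1}+1$), or to $(s+1,t-1)$ (if equality holds), terminating once $s>|\caI'|$ or $t<1$. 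This yields total running time $O(\log(|\caI'|+|\caJ'|))+O(|\caP_k|)$. The main subtlety will be verifying that the three-way advance rule never skips a valid cell nor visits a spurious one, which follows from the fact that $L_k$ intersects each cell's exit boundary at exactly one edge or corner, so the next cell entered by $L_k$ is uniquely determined and coincides with the cell selected by the rule.
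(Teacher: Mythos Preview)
Your proposal is correct and takes essentially the same approach as the paper: a binary search to locate the starting cell followed by a two-pointer staircase walk, with the size bound coming from the monotonicity of the staircase (the paper phrases this as charging each output pair to a unit decrement of $q-p$). Your geometric framing via the anti-diagonal line $L_k$ intersecting the grid of rectangles $R_{s,t}$ is a pleasant addition, though note that the integer rectangles $[i_{s-1}+1,i_s]\times[j_{t-1}+1,j_t]$ do not literally tile $\R^2$ (adjacent rectangles are separated by unit gaps), so the phrase ``each time $L_k$ leaves a cell it enters an adjacent cell'' is slightly informal; the underlying combinatorial content---that the next intersected cell is one of $(s+1,t)$, $(s,t-1)$, or $(s+1,t-1)$---is exactly what you verify with the trichotomy on $k$ versus $i_s+j_{t-1}+1$, and matches the paper's decision rule.
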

     \begin{proofof}{Proof of \cref{claim:twopointer}}
         Assume $k\in [i_0+1+j_0+1, i_{|\caI'|}+j_{|\caJ'|}]$ (otherwise $\caP_k=\emptyset$). Assume $i_1+ j_{|\caJ'|}\ge k$ (otherwise, use binary search to find the largest $1\le p< |\caI'|$ such that $i_p + j_{|\caJ'|}<k$ and remove the useless intervals $[i_0+1,i_1],\dots,[i_{p-1}+1,i_{p}]$). Assume $(i_0+1)+(j_{|\caJ'|-1}+1)\le k$ (otherwise, use binary search to find the smallest $1<q \le  |\caJ'|$ such that $(i_0+1)+(j_{q-1}+1)>k$, and remove the useless intervals $[j_{q-1}+1,j_{q}],\dots,[j_{|\caJ'|-1}+1, j_{|\caJ'|}]$).  Then the following pseudocode clearly reports all $(I,J)\in (\caI'\times \caJ')$ such that $k\in I+J$, with time complexity linear in the output size:
         \begin{itemize}
            \item Initialize two pointers $p\gets 1,q\gets |\caJ'|$.
                \item While $q\ge 1$ and $p\le |\caI'|$ and $k\in [i_{p-1}+1,i_p] + [j_{q-1}+1,j_q]$:
                \item \begin{itemize}
                    \item Report $([i_{p-1}+1,i_p] , [j_{q-1}+1,j_q])$.
                    \item If $q > 1$ and $k\in [i_{p-1}+1,i_p] + [j_{q-2}+1,j_{q-1}]$, then $q\gets q-1$;
                        \item Else If $p<|\caI'|$ and $k\in [i_{p}+1,i_{p+1}] + [j_{q-1}+1,j_{q}]$, then  $p\gets p+1$;
                        \item Else:  $p\gets p+1, q\gets q-1$.
                \end{itemize}
         \end{itemize}
         Each output interval pair can be charged to the decrement of $q-p$, which ranges from $|\caJ'|-1$ to $1-|\caI'|$ before the while loop terminates, so the output size is at most $|\caI'|+|\caJ'|-1$.
     \end{proofof}
     We apply the procedure described above to all $k\in [2,2n]$. In this way we obtain all the relevant interval pairs $(I,J) \in \caI\times \caJ$ (a relevant interval pair $(I,J)$ may be reported multiple times, once for every $k\in I+J$).

We now analyze the overall time complexity.
\cref{lem:falsepositivesegs} runs in $\widetilde O(nM^{1/2})$ time and outputs $T_0^{(0)}$ of size $\widetilde O(nM^{1/2})$ (with $0.99$ success probability over the random choice of prime $p\in [M^{1/2},2M^{1/2}]$). The initial FFT for  computing $w'[2\ldots 2n]$ takes $\widetilde O(np\polylog U) = \widetilde O(nM^{1/2}\polylog U)$ time. Then we use \cref{claim:twopointer} to find all relevant interval pairs $(I,J)\in \caI\times \caJ$, spending $O(\log n)$ time for each false positive segment in $T_0^{(0)}$, and $O(1)$ time for each tuple in $\{(I,J,k): (I,J)\in \caI\times \caJ \text{ is relevant for } k\}$ (and hence $O(|I+J|)$ time for each relevant interval pair $(I,J)$). Then, for each relevant interval pair $(I,J)$ we use FFT to compute $\hat w^{(I,J)}$ in $\widetilde O(|I+J|\polylog U)\le \widetilde O((\frac{n}{M}+1)\polylog U)$ time.
Therefore, the total time complexity is 
\begin{equation}
    \label{eqn:totaltime}
 \widetilde O(nM^{1/2}\polylog U) + \widetilde O((\tfrac{n}{M}+1)\polylog U)\cdot [\#\text{relevant pairs } (I,J)\in \caI\times \caJ],
\end{equation}

To bound \eqref{eqn:totaltime}, it remains to analyze the expected number of relevant pairs $(I,J)\in \caI\times \caJ$, over the random prime $p\in [M^{1/2},2M^{1/2}]$.
 Recall that $(I,J)$ is relevant if and only if there exists $k\in I+J$ such that $C[k]-(A[I]+B[J])$ is a non-zero multiple of $p$. By a union bound over the \emph{distinct} values of $C[k]-(A[I]+B[J])$ where $k\in I+J$, we know the probability that this happens is  $\Pr_p[(I,J)\text{ relevant}]  \le  O (\frac{\log M}{M^{1/2}} ) \cdot |\{C[k]-(A[I]+B[J]): k\in I+J\}| =O (\frac{\log M}{M^{1/2}} ) \cdot |\{C[k]: k\in I+J\}|$. Then, by linearity of expectation,
   \begin{align*}
       \Ex_{p}[\#\text{relevant pairs } (I,J)\in \caI\times \caJ]  &\le \sum_{I\in \caI,J\in \caJ} O\big (\frac{\log M}{M^{1/2}}\big )\cdot |\{C[k]: k\in I+J\}|\\
       & \le O\big (\frac{\log M}{M^{1/2}}\big )\cdot \sum_{I\in \caI,J\in \caJ}  \big (1 + \sum_{k\in I+J}\mathbf{1}\big [C[k]\neq C[k-1]\big ]\big ).
 \end{align*}
 To bound this expectation, note that
 \begin{align*}
         \sum_{I\in \caI,J\in \caJ} \sum_{k\in I+J}\mathbf{1}\big [C[k]\neq C[k-1]\big ]
      &=  \sum_{k=2}^{2n}\mathbf{1}\big [C[k]\neq C[k-1]\big ] \cdot |\{(I,J)\in \caI\times \caJ: k \in I+J\}|\\
      &\le   \sum_{k=2}^{2n}\mathbf{1}\big [C[k]\neq C[k-1]\big ] \cdot (|\caI|+|\caJ|-1) \tag{by \cref{claim:twopointer}}\\
      &\le   \min\{2n-1, 2M\} \cdot (|\caI|+|\caJ|-1). \tag{since $C[2\ldots 2n]$ is monotone with entries in $\{0,\dots,2M\}$}
 \end{align*}
Hence,
 \begin{align*}
            \Ex_{p}[\#\text{relevant pairs } (I,J)\in \caI\times \caJ]        &\le O\big (\frac{\log M}{M^{1/2}}\big ) \cdot \big ( |\caI||\caJ| + \min\{2n-1, 2M\} \cdot (|\caI|+|\caJ|-1)\big) \\ &\le O\big (\frac{\log M}{M^{1/2}}\big )\cdot (\min\{n,M\})^2. \tag{by $|\caI|,|\caJ|\le O(\min\{n,M\})$} 
 \end{align*}
 Plugging this into \eqref{eqn:totaltime} gives the claimed total expected runtime $\widetilde O(nM^{1/2}\polylog U)$.
 By Markov's inequality and a union bound, with at least $0.9$ probability the algorithm terminates in time $\widetilde O(nM^{1/2}\polylog U)$ with the correct answer. To boost the probability from $0.9$ to $1 - \frac{1}{\poly(n)}$, one can independently repeat the algorithm for $O(\log n)$ times and take the majority of the outputs.
\end{proofof}

\subsection{Sum-approximation convolution algorithm (proof of  \texorpdfstring{\Cref{lem:sum-approx-conv}}{})} \label{sec:sum-app-conv}
Recall that the input functions are $f,g\colon \{0,1,\dots,n\} \to \{0\}\cup [1,2^M]$ (assume $f(x)=g(x)=0$ for all $x>n$) given as length-$(n+1)$ arrays, and our goal is to compute a $(1\pm O(\delta))$-sum-approximation of $f\star g$, as defined in \cref{sec:sumapproxconv} (the approximation factor can be improved to $(1\pm \delta)$ after scaling $\delta$ by a constant factor). 
Since we allow $(1\pm O(\delta))$-approximation, we can assume each input number is represented as a floating-point number $m\cdot 2^e$, where $m\in \{0\} \cup  [1,2)$ has $\log (1/\delta)+O(1)$ bits of  precision, and $e\in \{0,1,\dots,M\}$.

   Let $D$ be the smallest power of two such that 
   \begin{equation}
    D \ge (2n+1)^2/\delta.
    \label{eqn:defnd}
   \end{equation}

 Based on the input function $f$, define an integer array $A^*[0\dots n]$ as follows: for every $0\le x\le n$, 
 \begin{itemize}
     \item If $f(x)=0$, let $A^*[x]=-\infty$.   
     \item Otherwise, $1\le f(x)\le 2^M$.  Let $A^*[x]$ be the unique integer such that $f(x) \in [D^{A^*[x]}, D^{A^*[x]+1})$. Note that $0\le A^*[x] \le M/\log_2 D$.
 \end{itemize}
  Similarly, define an integer array $B^*[0\dots n]$ based on the input function $g$.

   Decompose $f$ into three nonnegative functions $f_0,f_1,f_2$ as follows: for every $0\le x\le n$, 
   \begin{itemize}
    \item If $f(x)=0$, let $f_i(x)=0$ for all $i\in \{0,1,2\}$.
    \item Otherwise, $A^*[x]\neq -\infty$. Let $f_{A^*[x] \bmod 3}(x)=f(x)$, and let $f_{i}(x)=0$ for $i\neq A^*[x] \bmod 3$. 
   \end{itemize}
   In this way, we have $f(x)=f_0(x)+f_1(x)+f_2(x)$ for all $x\in \{0,1,\dots,n\}$.
   Similarly, decompose $g$ into the sum of three nonnegative functions $g_0,g_1,g_2$. 
   
   Note that $f\star g = \sum_{i=0}^2\sum_{j=0}^2 f_i\star g_j$.   
   We separately compute a $(1\pm \delta)$-sum-approximation of every $f_i\star g_j$. Then, adding these nine results together gives a $(1\pm \delta)$-sum-approximation of $f\star g$. Finally we round the output numbers again to $\log(1/\delta) +O(1) $ bits of  precision, which incurs another $(1\pm \delta)$ approximation factor.
   
In the following, we focus on computing a  $(1\pm \delta)$-sum-approximation of $f_0\star g_0$. The remaining $f_i \star g_j$ can be approximated in the same way.

Let 
\[A^*_0[x] :=\begin{cases}
    A^*[x] & f_0(x)\neq 0,\\
    -\infty & f_0(x)= 0.
\end{cases} \]
Note that all non-infinity entries in $A^*_0[0\dots n]$ are multiples of 3. 
Also,
$f_0(x) \in [D^{A_0^*[x]},D^{A_0^*[x]+1})$
whenever $f_0(x)\neq 0$.
Define the prefix maximum 
   \[A[x]:=\max_{0\le x'\le x} A_0^*[x'],\] and define a weight array $u[0\dots n]$ by  \[u[x]:= \begin{cases}
    f_0(x)  D^{-A[x]} & A^*_0[x] = A[x] \neq -\infty,\\
    0 & \text{otherwise}.
   \end{cases}\]
   Observe that $u[x] \in \{0\} \cup [1,D)$. Moreover, since $f_0(x)$ is an input floating-point number with $\log (1/\delta)+O(1)$ bits of precision, and $D^{-A[x]}$ is a power of two, we know $u[x]$ can be represented exactly as an integer of $\log (1/\delta)+O(1)$ bits multiplied by a power of two.

We analogously define array $B_0^*[0\dots n]$, its prefix maximum array $B[0\dots n]$, and a weight array $v[0\dots n]$, based on the function $g_0$. Again, all the non-infinity entries in $B^*_0[0\dots n]$ and $B[0\dots n]$ are multiples of 3, and $v[y] \in \{0\}\cup [1,D)$ for all $0\le y\le n$.

Since $A[0\dots n]$ and $B[0\dots n]$ are monotone non-decreasing arrays with integer entries from $\{-\infty\} \cup [0, M/\log_2 D]$, we can use \cref{thm:weightedwitnesscnt} to compute their $(\max,+)$-convolution $C[z] := \max_{x+y=z}\{A[x]+B[y]\}$\footnote{When invoking \cref{thm:weightedwitnesscnt}, we can replace the $-\infty$ input entries by $-10\lceil M/\log_2 D\rceil$ without affecting the answer.}, and also obtain the weighted witness counts \[w[z]:= \sum_{x: A[x]+B[z-x]=C[z]}u[x]\cdot v[z-x].\]
Since $u[x],v[y] \in \{0\}\cup [1,D)$, we have
\begin{equation}
\label{eqn:wz}    
0\le w[z]< (n+1)D^2
\end{equation}
for all $0\le z\le 2n$. Before invoking \cref{thm:weightedwitnesscnt}, since $u[x],v[y]$ only have $\log(1/\delta)+O(1)$ bits of precision, we can scale all of them by the same power of two so that they become $(\log D+\log (1/\delta)+O(1))$-bit integers.  The runtime of \cref{thm:weightedwitnesscnt} is $\widetilde O(n \sqrt{M/\log_2 D}\cdot \polylog(D/\delta)) = O(n\sqrt{M}\polylog(n/\delta))$.

After computing $C[0\dots 2n]$ and $w[0\dots 2n]$,  we define  $h\colon \{0,1,\dots,2n\} \to \{0\}\cup \R_{\ge 1}$ by \[h(z):= D^{C[z]} \cdot w[z].\]
  Our goal is to show that $h$ can be returned as the desired $(1\pm \delta)$-sum-approximation of $f_0\star g_0$. In other words,
  \begin{equation}
      \label{eqn:approx}
   (1-\delta)\cdot (f_0\star g_0)^\le (z)\,\le \,  h^\le (z) \,\le\, (1+\delta) \cdot (f_0\star g_0)^\le (z)
  \end{equation}
  for all $z\in \{0,\dots,2n\}$.

  Now we proceed to the proof of \eqref{eqn:approx}. 
  Since $A$ and $B$ are prefix maximum arrays of $A_0^*$ and $B_0^*$ respectively, the $(\max,+)$-convolution of $A$ and $B$ can be alternatively expressed as
  \[  C[z] =\max_{x+y\le z}\{A_0^*[x]+B_0^*[y]\}.\]
We assume $C[z]\neq -\infty$; otherwise, it is easy to see that $h^\le(z)=(f_0\star g_0)^\le (z)=0$, so \eqref{eqn:approx} already holds.
 Consider the set of index pairs attaining maximum in the above expression,
 \begin{equation}
     \label{eqn:ik}
I_z := \{(x,y): x+y\le z,  A_0^*[x]+B_0^*[y] = C[z]\}.
 \end{equation}
We will show the following two bounds which relate $h^{\le}(z)$ with $(f_0\star g_0)^\le (z)$ through the intermediate quantity $\sum_{(x,y)\in I_z}f_0(x)g_0(y)$:
\begin{align}
   (1-\delta)\cdot (f_0\star g_0)^{\le }(z) \,\le\, \sum_{ (x,y)\in I_z} f_0(x)g_0(y) \,&\le \, (f_0\star g_0)^{\le }(z).\label{eqn:approx1}\\
0 \,\le\, h^{\le}(z) -  \sum_{(x,y)\in I_z}f_0(x)g_0(y)\,&\le\,  \delta \cdot (f_0\star g_0)^\le(z), \label{eqn:approx2}
\end{align}
Adding \eqref{eqn:approx1} and \eqref{eqn:approx2} immediately gives the desired bound in \eqref{eqn:approx}.

\begin{proofof}{Proof of \eqref{eqn:approx1}}
By definition we have 
\[ (f_0\star g_0)^{\le}(z) = \sum_{ (x,y)\in I_z} f_0(x)g_0(y) + \sum_{x+y\le z:  (x,y)\notin I_z} f_0(x)g_0(y),\]
which immediately implies the second half of \eqref{eqn:approx1}. In particular, we can pick an arbitrary $(x',y') \in I_z$ and get
 \begin{equation}
     \label{eqn:dck}
   (f_0\star g_0)^{\le}(z) \ge f_0(x') g_0(y') \ge D^{A_0^*[x']}    D^{B_0^*[y']} = D^{C[z]}.
 \end{equation}

 To prove the first half of \eqref{eqn:approx1}, it suffices to show $\sum_{x+y\le z:  (x,y)\notin I_z} f_0(x)g_0(y) \le \delta\cdot (f_0\star g_0)^\le (z)$.  We use the fact that all (non-infinity) entries in arrays $A_0^*,B_0^*, C$ are multiples of 3: for any $x,y$ where $x+y\le z$, $(x,y)\notin I_z$ (and $f_0(x)g_0(y)\neq 0$), we have $A^*_0[x]+B^*_0[y]<C[z]$, which means $A^*_0[x]+B^*_0[y]\le C[z]-3$. 
  Hence,
  \begin{align*}
\sum_{x+y\le z:  (x,y)\notin I_z} f_0(x)g_0(y) & \le \sum_{x+y\le z:  (x,y)\notin I_z} D^{(A^*_0[x]+1) + (B^*_0[y]+1)}\\
& \le  (z+1)^2 \cdot  D^{(C[z]-3)+1+1}\\
 & \le \delta \cdot  (f_0\star g_0)^{\le }(z). \tag{by \eqref{eqn:defnd} and \eqref{eqn:dck}} 
  \end{align*}
  This establishes \eqref{eqn:approx1}.
\end{proofof}
 
\begin{proofof}{Proof of \eqref{eqn:approx2}}
  Since $C[0]\le C[1]\le \dots \le C[z]$, we can write 
  $h^\le(z)  = \sum_{z'=0}^z D^{C[z']}\cdot w[z']$ as
  \begin{equation}
      \label{eqn:hk}
  h^\le(z)  = \sum_{z'\le z: C[z']<C[z]} D^{C[z']}\cdot w[z'] +  \sum_{ z'\le z: C[z']=C[z]}  D^{C[z]}\cdot w[z'].
  \end{equation}

 To bound the first summand in \eqref{eqn:hk}, we use the fact that $C$ only contains multiples of $3$, so $C[z']<C[z]$ means $C[z']\le C[z]-3$:
  \begin{align*}
0\le  \sum_{ z'\le z: C[z']<C[z]} D^{C[z']}\cdot w[z']  & \le \sum_{0\le z'\le z: C[z']<C[z]} D^{C[z]-3}\cdot w[z'] \\
  &\le (z+1)\cdot D^{C[z]-3}\cdot (n+1)D^2 \tag{by \eqref{eqn:wz}}\\
  & \le \delta \cdot (f_0\star g_0)^\le(z). \tag{by \eqref{eqn:defnd} and \eqref{eqn:dck}}
  \end{align*}

  The second summand in \eqref{eqn:hk} can be written as
  \begin{align*}
 \sum_{z'\le z: C[z']=C[z]}  D^{C[z]}\cdot w[z'] &=    \sum_{z'\le z: C[z']=C[z]}  D^{C[z]}\cdot \sum_{x:A[x]+B[z'-x]= C[z']} u[x] v[z'-x]\\
&=    \sum_{x,y: x+y\le z,A[x]+B[y]=C[z]}  D^{C[z]}u[x] v[y]\\
   & =   \sum_{\substack{x,y:x+y\le z,\\ A[x]+B[y]= C[z],\\ A[x]=A^*_0[x], \\B[y]=B^*_0[y]}}  D^{C[z]}f_0(x)D^{-A[x]}\cdot  g_0(y)D^{-B[y]}\\
   & = \sum_{(x,y)\in I_z}f_0(x)g_0(y),
  \end{align*}
  where the last equality is because the set of $(x,y)$ being summed over is exactly $I_z$, which follows from the definition of $I_z$ and the following simple claim:
\begin{claim}
    \label{claim:aiaibjbj}
    For all $(x,y)\in I_z$, it holds that $A[x]=A^*_0[x]$ and $B[y]=B^*_0[y]$.
\end{claim}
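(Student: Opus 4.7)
The plan is to establish both equalities by a short contradiction argument using the maximality built into $C[z]$ and $I_z$.

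First, recall that $A[x]=\max_{0\le x'\le x}A_0^*[x']$ by definition, so we automatically have $A[x]\ge A_0^*[x]$, and analogously $B[y]\ge B_0^*[y]$. It therefore suffices to rule out strict inequality. Note also that, as observed just before the statement of the claim, $C[z]$ equals $\max_{x'+y'\le z}\{A_0^*[x']+B_0^*[y']\}$ — the $(\max,+)$-convolution of the prefix-maximum arrays $A,B$ on the diagonal $x+y=z$ coincides with the maximum of $A_0^*+B_0^*$ on the triangle $x'+y'\le z$.

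Now suppose for contradiction that $(x,y)\in I_z$ but $A[x]>A_0^*[x]$. Then some index $x'\le x$ attains $A_0^*[x']=A[x]>A_0^*[x]$ (in particular $A_0^*[x']\neq -\infty$). Since $(x,y)\in I_z$, we have $A_0^*[x]+B_0^*[y]=C[z]$, which forces $B_0^*[y]\neq -\infty$ as well (we are in the case $C[z]\neq -\infty$). Combining these,
\[
A_0^*[x']+B_0^*[y]\;>\;A_0^*[x]+B_0^*[y]\;=\;C[z],
\]
while $x'+y\le x+y\le z$. This contradicts the alternative expression for $C[z]$ above. Hence $A[x]=A_0^*[x]$, and the symmetric argument (swapping the roles of the two coordinates) yields $B[y]=B_0^*[y]$.

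There is no real obstacle here — the claim is a direct definition-chase, and the only thing to be a little careful about is to justify that the indices involved carry finite values (so that the strict inequality $A_0^*[x']+B_0^*[y]>C[z]$ is meaningful), which follows because we are in the case $C[z]\neq -\infty$ and $(x,y)\in I_z$.
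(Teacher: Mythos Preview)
Your proof is correct and follows essentially the same approach as the paper: both use that $A[x]\ge A_0^*[x]$, $B[y]\ge B_0^*[y]$ by the prefix-maximum definition together with the expression $C[z]=\max_{x'+y'\le z}\{A_0^*[x']+B_0^*[y']\}$ to force equality. The paper packages this as a single squeezing chain $A_0^*[x]+B_0^*[y]=C[z]\ge A[x]+B[y]\ge A_0^*[x]+B_0^*[y]$ rather than a contradiction argument, but the content is identical.
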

\begin{proof}
 For $(x,y)\in I_z$, we have $x+y\le z$ and \[A_0^*[x]+B_0^*[y] = C[z] = \max_{x'+y'\le z} \{A_0^*[x']+B_0^*[y']\} \ge \max_{x'\le x,y'\le y} \{A_0^*[x']+B_0^*[y']\} = A[x]+B[y].\] On the other hand, $A_0^*[x]\le A[x]$ and $B_0^*[y]\le B[y]$, so all the inequalities here are equalities. 
\end{proof}
We have obtained bounds on both summands in \eqref{eqn:hk}. Together they imply the desired inequality in \eqref{eqn:approx2}.
\end{proofof}

\bibliographystyle{alphaurl}
\bibliography{main}

\end{document}